\providecommand{\tabularnewline}{\\}
\newcommand{\lyxaddress}[1]{
\par {\raggedright #1
\vspace{1.4em}
\noindent\par}
}
\theoremstyle{plain}
\newtheorem{thm}{\protect\theoremname}
  \theoremstyle{definition}
  \newtheorem{defn}[thm]{\protect\definitionname}
  \theoremstyle{plain}
  \newtheorem{lem}[thm]{\protect\lemmaname}
  \theoremstyle{plain}
  \newtheorem{conjecture}[thm]{\protect\conjecturename}
  \theoremstyle{remark}
  \newtheorem{rem}[thm]{\protect\remarkname}
  \theoremstyle{plain}
  \newtheorem{prop}[thm]{\protect\propositionname}
  \theoremstyle{remark}
  \newtheorem{conclusion}[thm]{\protect\conclusionname}
\definecolor{shadecolor}{gray}{0.9}
  \providecommand{\conclusionname}{Conclusion}
  \providecommand{\conjecturename}{Conjecture}
  \providecommand{\definitionname}{Definition}
  \providecommand{\lemmaname}{Lemma}
  \providecommand{\propositionname}{Proposition}
  \providecommand{\remarkname}{Remark}
\providecommand{\theoremname}{Theorem}
\begin{document}

\title{A Brownian Particle and Fields I:\\
Construction of Kinematics and Dynamics}

\author{{\Large{}Keita Seto}\thanks{keita.seto@eli-np.ro}}

\maketitle

\lyxaddress{\begin{center}
Extreme Light Infrastructure \textendash{} Nuclear Physics (ELI-NP)
/ \\
Horia Hulubei National Institute for R\&D in Physics and Nuclear Engineering
(IFIN-HH), \\
30 Reactorului St., Bucharest-Magurele, jud. Ilfov, P.O.B. MG-6, RO-077125,
Romania.
\par\end{center}}
\begin{abstract}
Tracking a ``real'' trajectory of a quantum particle still has been
treated as the interpretation problem. It shall be expressed by a
Brownian (stochastic) motion suggested by E. Nelson, however, the
well-defined mechanism of field generation from a stochastic particle
hasn't been proposed yet. For the improvement of this, I propose the
extension of Nelson's quantum dynamics, for describing a relativistic
scalar electron with its radiation equivalent to the Klein-Gordon
particle and field system.
\end{abstract}
$\,$$\,$$\,$$\,$

Keyword:

{[}Physics{]} Stochastic quantum dynamics, relativistic motion, field
generation

{[}mathematics{]} Applications of stochastic analysis

$\,$$\,$$\,$$\,$

\maketitle

\thispagestyle{fancy}
\rhead{\texttt{preprint:ELI-NP/RA5-TDR 0003}} 
\lhead{}
\renewcommand{\headrulewidth}{0.0pt}

\setstretch{1.2}

\vfill{}
\newpage{}\thispagestyle{fancy}
\rhead{Keita Seto}
\lhead{ELI-NP/IFIN-HH}
\renewcommand{\headrulewidth}{1.0pt}\tableofcontents

$\,$$\,$$\,$\pagebreak{}

\begin{center}
{\bf Notation and Conventions}
\par\end{center}

\begin{center}
\begin{tabular}{ll}
\hline 
Symbol & Description\tabularnewline
\hline 
\hline 
$c$ & Speed of light\tabularnewline
$\hbar$ & Planck's constant\tabularnewline
$m_{0}$ & Rest mass of an electron\tabularnewline
$e$ & Charge of an electron\tabularnewline
$\mathbb{V_{\mathrm{M}}^{\mathrm{4}}}$ & 4-dimensional standard vector space for the metric affine space\tabularnewline
$g$ & Metric on $\mathbb{V_{\mathrm{M}}^{\mathrm{4}}}$; $g\coloneqq\mathrm{diag}(+1,-1,-1,-1)$\tabularnewline
$\mathbb{A}^{4}(\mathbb{V_{\mathrm{M}}^{\mathrm{4}}},g)$ & 4-dimensional metric affine space with respect to $\mathbb{V_{\mathrm{M}}^{\mathrm{4}}}$
and $g$\tabularnewline
$\mathscr{B}(I)$ & Borel $\sigma$-algebra of a topological space $I$\tabularnewline
$(\mathbb{A}^{4}(\mathbb{V_{\mathrm{M}}^{\mathrm{4}}},g),\mathscr{B}(\mathbb{A}^{4}(\mathbb{V_{\mathrm{M}}^{\mathrm{4}}},g)),\mu)$ & Minkowski spacetime\tabularnewline
$\varphi_{E}\coloneqq\{\varphi_{E}^{A}|A\in\mathrm{set\,of\,indexes}\}$ & Coordinate mapping on $E$; $\varphi_{E}^{A}:E\rightarrow\mathbb{R}$\tabularnewline
$(\mathit{\Omega},\mathcal{F},\mathscr{P})$ & Probability space\tabularnewline
$\mathbb{E}\llbracket\hat{X}(\bullet)\rrbracket\coloneqq\int_{\Omega}d\mathscr{P}(\omega)\,\hat{X}(\omega)$ & Expectation of $\hat{X}(\bullet)\coloneqq\{\hat{X}(\omega)|\omega\in\varOmega\}$\tabularnewline
$\mathbb{E}\llbracket\hat{X}(\bullet)|\mathcal{\mathscr{C}}\rrbracket$ & Conditional expectation of $\hat{X}(\bullet)$ given $\mathcal{\mathscr{C}}\subset\mathcal{F}$\tabularnewline
$\mathcal{\mathscr{P}}_{\tau}\subset\mathcal{F}$  & Sub-$\sigma$-algebra in the family of ''the past'', $\{\mathcal{\mathscr{P}}_{\tau}\}_{\tau\in\mathbb{R}}$\tabularnewline
$\mathscr{F}_{\tau}\subset\mathcal{F}$  & Sub-$\sigma$-algebra in the family of ''the future'', $\{\mathscr{F}_{\tau}\}_{\tau\in\mathbb{R}}$\tabularnewline
$\hat{x}(\circ,\bullet)$ & Dual progressively measurable process (D-progressive, D-process) \tabularnewline
 & {[}Relativistic kinematics of a scalar (spin-less) electron{]}\tabularnewline
 & $\hat{x}(\circ,\bullet)\coloneqq\{\hat{x}(\tau,\omega)\in\mathbb{A}^{4}(\mathbb{V_{\mathrm{M}}^{\mathrm{4}}},g))|\tau\in\mathbb{R},\omega\in\varOmega\}$\tabularnewline
$\mathscr{\mathcal{V}}\in\mathbb{V_{\mathrm{M}}^{\mathrm{4}}}\oplus i\mathbb{V_{\mathrm{M}}^{\mathrm{4}}}$ & Complex velocity; $\mathcal{V}^{\alpha}(x)\coloneqq i\lambda^{2}\times\partial{}^{\alpha}\ln\phi(x)+e/m_{0}\times A{}^{\alpha}(x)$\tabularnewline
\hline 
\end{tabular}
\par\end{center}

\pagebreak{}

\section{Introduction}

This series of the papers proposes quantum dynamics coupled with stochastic
kinematics of a scalar (spin-less) electron and its radiation mechanism
as the extension from the model by E. Nelson \cite{Nelson(1966a),Nelson(2001_book),Nelson(1985_book)}.
Especially, our main purpose by using Nelson's stochastic quantization
is the investigation of {\bf radiation reaction} which is the kicked-back
effect acting on an electron by its radiation \cite{Dirac(1938a)}.
Many works of radiation reaction have been discussed in classical
dynamics from early 1900's, however, the corrections by non-linear
quantum electrodynamics (QED) becomes important in {\bf high-intensity field physics}
produced by the state-of-the-art $O(10\mathrm{PW})$ lasers \cite{Mourou(1997a),ELI-NP,ELI-beams,ELI-ALPS,RA5}.
Comparing the radiation formulas in classical dynamics and non-linear
QED, the factor of $q(\chi)$ can be found \cite{A. Sokolov(1986),Seto(2015a),I. Sokolov(2011a)}:
\[
\frac{dW_{\mathrm{QED}}}{dt}=q(\chi)\times\frac{dW_{\mathrm{classical}}}{dt}
\]
Where, $dW_{\mathrm{classical}}/dt=e^{2}/6\pi\varepsilon_{0}c^{3}\times g_{\alpha\beta}\,dv^{\alpha}/d\tau\,dv^{\beta}/d\tau$
denotes Larmor's formula for the energy loss of radiation in classical
dynamics \cite{Larmor} with respect to the Minkowski metric $g\coloneqq\mathrm{diag}(+1,-1,-1,-1)$
and the 4-velocity $v$. The uniqueness of radiation reaction in high-intensity
field physics is the dependence of the field strength $F_{\mathrm{ex}}$
and $\gamma$ the normalized energy of an electron via this factor
$q(\chi)$, since 
\[
\chi\coloneqq\frac{3}{2}\frac{\hbar}{m_{0}^{2}c^{3}}\sqrt{-g_{\mu\nu}(-eF_{\mathrm{ex}}^{\mu\alpha}v_{\alpha})(-eF_{\mathrm{ex}}^{\nu\beta}v_{\beta})}=O(F_{\mathrm{ex}}\times\gamma)\,.
\]
However as we will discuss the detail in {\bf Volume II} \cite{Vol II},
this is the formula applied only in the case of an electron interacting
with an external laser field of {\bf a plane wave} \cite{Zakowicz(2005),Boca-Florescu(2010)}
in order to the formulation of the non-linear Compton scattering \cite{Brown-Kibble(1964),Nikishov(1964a)),Nikishov(1964b)}.
Thus, it is natural to consider how to generalize it for the case
of focused or superpositioned light. In addition, what is the origin
of this factor $q(\chi)$?

Due to this reason, to identify the origin of $q(\chi)$ under general
external fields is the strong research motivation on the issue of
radiation reaction. For this aspect, we choose Nelson's quantum dynamics
by using Brownian motions in this article series. Nelson's scheme
has a very high-compatibility between equations of motion in classical
and quantum dynamics. Here, let us summarize Nelson's stochastic quantization.

E. Nelson proposed a quantization via a Brownian kinematics. A certain
quanta draws a 3-dimensional Brownian motion as its trajectory in
the non-relativistic regime \cite{Nelson(1966a),Nelson(2001_book),Nelson(1985_book)}.
By employing the kinematics $d\hat{\bm{x}}(t,\omega)=\bm{V}_{\pm}(\hat{\bm{x}}(t,\omega),t)dt+\sqrt{\hbar/2m_{0}}\times d\bm{w}_{\pm}(t,\omega)$
and its Fokker-Planck equations, he succeeded to demonstrate not only
(A) his classical-like dynamics with the sub-equations
\[
m_{0}\left[\partial_{t}\bm{v}(\bm{x},t)+\bm{v}(\bm{x},t)\cdot\nabla\bm{v}(\bm{x},t)-\bm{u}(\bm{x},t)\cdot\nabla\bm{u}(\bm{x},t)-\frac{\hbar}{2m_{0}}\nabla^{2}\bm{u}(\bm{x},t)\right]=-\nabla\mathscr{V}(\bm{x},t)
\]
\[
\bm{v}(\bm{x},t)=\frac{\bm{V}_{+}(\bm{x},t)+\bm{V}_{-}(\bm{x},t)}{2}=\mathrm{Im}\left\{ \frac{\hbar}{m_{0}}\nabla\ln\psi(\bm{x},t)\right\} 
\]
\[
\bm{u}(\bm{x},t)=\frac{\bm{V}_{+}(\bm{x},t)-\bm{V}_{-}(\bm{x},t)}{2}=\mathrm{Re}\left\{ \frac{\hbar}{m_{0}}\nabla\ln\psi(\bm{x},t)\right\} 
\]
is equivalent to the Schr{\"{o}}dinger equation 
\[
i\hbar\partial_{t}\psi(\bm{x},t)=\left[-\frac{\hbar^{2}}{2m_{0}}\nabla^{2}+\mathscr{V}(\bm{x},t)\right]\psi(\bm{x},t)\,,
\]
but also (B) he answered why the square of the wave function should
be regarded as the probability density $\rho(\bm{x},t)=\psi^{*}(\bm{x},t)\psi(\bm{x},t)$
\cite{Nelson(1966a),Nelson(2001_book)}. The biggest advantage of
his method is (C) the ability to draw the ``real'' trajectory of
a quantum particle by a stochastic process. However, the feasibility
of the coupled system between a stochastic particle and fields has
not been established enough. Hence, the realization of the field-generation
mechanism on it is the milestone for the description of radiation
reaction. 

Thus, this first volume dedicates the fundamental construction of
a stochastic kinematics and dynamics of a scalar electron (a Klein-Gordon
particle) including the mechanism of its light emission. Then, {\bf Volume II}
\cite{Vol II} focuses on the formulation of radiation reaction on
its stochastic trajectory.

In {\bf Section \ref{Sect2}} of this paper, we introduce the kinematics
of a scalar electron. At first, we define a stochastic differential
equation $d\hat{x}(\tau,\omega)=\mathcal{V}_{\pm}(\hat{x}(\tau,\omega))d\tau+\lambda\times dW_{\pm}(\tau,\omega)$
as a relativistic kinematics of a scalar electron in the Minkowski
spacetime. Where, $W_{+}(\circ,\bullet)$ and $W_{-}(\circ,\bullet)$
are defined as a special classes of Wiener processes for this relativistic
kinematics. Then, its complex velocity $\mathscr{\mathcal{V}}(\hat{x}(\tau,\omega))\coloneqq(1-i)/2\times\mathcal{V}_{+}(\hat{x}(\tau,\omega))+(1+i)/2\times\mathcal{V}_{-}(\hat{x}(\tau,\omega))$
\cite{Nottale(2011)} is introduced. It can be found that this $\mathscr{\mathcal{V}}$
plays a role of the main cast in the present dynamics of a scalar
electron in {\bf Section \ref{Sect3}}. In order to this kinematics,
its Fokker-Planck equation is derived. In the end of {\bf Section \ref{Sect2}},
we discuss the most delicate problem how to define the proper time
holding the transition between classical physics and quantum physics.

For the calculation of $d\hat{x}(\tau,\omega)=\mathcal{V}_{\pm}(\hat{x}(\tau,\omega))d\tau+\lambda\times dW_{\pm}(\tau,\omega)$,
it requires us to define the evolution of the velocities $\mathcal{V}_{\pm}(\hat{x}(\tau,\omega))$
or $\mathcal{V}(\hat{x}(\tau,\omega))$. Therefore, the dynamics of
a scalar electron interacting with fields is discussed in {\bf Section \ref{Sect3}}.
The mechanism of the field generation from a stochastic particle is
discussed at here, too. Thus, our attention is devoted to the equivalency
between the present model and the well-known system of the Klein-Gordon
equation and the Maxwell equation via an action integral (the functional).
Hereby, the following dynamics of a stochastic scalar electron and
fields is realized.\begin{snugshade}
\[
\begin{gathered}m_{0}\mathfrak{D_{\tau}}\mathcal{V}^{\mu}(\hat{x}(\tau,\omega))=-e\mathcal{\hat{V}}_{\nu}(\hat{x}(\tau,\omega))F^{\mu\nu}(\hat{x}(\tau,\omega))\end{gathered}
\]
\[
\partial_{\mu}\left[F^{\mu\nu}(x)+\delta f^{\mu\nu}\right]=\mu_{0}\times\mathbb{E}\left\llbracket -ec\int_{\mathbb{R}}d\tau\,\mathrm{Re}\left\{ \mathcal{V}^{\nu}(x)\right\} \delta^{4}(x-\hat{x}(\tau,\bullet))\right\rrbracket 
\]
\end{snugshade}

Finally, we summarize this {\bf Volume I} and propose the motivation
to {\bf Volume II} \cite{Vol II} in {\bf Section \ref{Sect4}}.
Where, the transition between quantum and classical dynamics is discussed
by using Ehrenfest's theorem \cite{Ehrenfest} which is main tool
to consider radiation reaction in {\bf Volume II}.

\section{Kinematics of a scalar electron\label{Sect2}}

The first part is a stochastic and relativistic kinematics of a scalar
electron. Let $\mathbb{A}^{4}(\mathbb{V_{\mathrm{M}}^{\mathrm{4}}},g)$
be a 4-dimensional metric affine space with respect to a 4-dimensional
standard vector space $\mathbb{V_{\mathrm{M}}^{\mathrm{4}}}$ and
its $g$ on $\mathbb{V_{\mathrm{M}}^{\mathrm{4}}}$ \cite{Arai(2005)}.
Defining the measure space $(\mathbb{A}^{4}(\mathbb{V_{\mathrm{M}}^{\mathrm{4}}},g),\mathscr{B}(\mathbb{A}^{4}(\mathbb{V_{\mathrm{M}}^{\mathrm{4}}},g)),\mu)$,
we consider this as the Minkowski spacetime. Where, $\mathscr{B}(I)$
denotes the Borel $\sigma$-algebra of a topological space $I$. The
relativistic kinematics of a stochastic scalar electron is characterized
by the complex velocity $\mathcal{V}$, the Fokker-Planck equations
and the proper time $\tau$. 

Let us regard the coordinate mapping $\varphi(x)\coloneqq(x^{0},x^{1},x^{2},x^{3})$
has been already introduced for $\forall x\in\mathbb{V_{\mathrm{M}}^{\mathrm{4}}}$
or $\forall x\in\mathbb{A}^{4}(\mathbb{V_{\mathrm{M}}^{\mathrm{4}}},g)$
with its origin even if we do not declare it explicitly.

\subsection{Stochastic process}

For a certain abstract non-empty set $\varOmega$, consider the probability
space $(\varOmega,\mathcal{F},\mathscr{P})$. Where, $\mathcal{F}$
is a $\sigma$-algebra of $\varOmega$ and $\mathscr{P}$ denotes
the probability measure on the measurable space $(\varOmega,\mathcal{F})$.
Let us define the continuous stochastic process $\hat{x}(\circ,\bullet)\coloneqq\{\hat{x}(\tau,\omega)\in\mathbb{A}^{4}(\mathbb{V_{\mathrm{M}}^{\mathrm{4}}},g)|\tau\in\mathbb{R},\omega\in\varOmega\}$\footnote{Though $\hat{x}(\tau,\omega)$ can be written as $\hat{x}(\tau)$
or $\hat{x}_{\tau}$, however, we dare to write $\omega$ to visualize
the names of each paths explicitly. } as a $\mathscr{B}(\mathbb{R})\times\mathcal{F}/\mathscr{B}(\mathbb{A}^{4}(\mathbb{V_{\mathrm{M}}^{\mathrm{4}}},g))$-measurable
mapping. Where, by considering two measurable spaces $(X,\mathcal{X})$
and $(Y,\mathcal{Y})$, a $\mathcal{X}/\mathcal{Y}$-measurable mapping
$f$ is a mapping $f:X\rightarrow Y$ satisfying $f^{-1}(A)\coloneqq\{x\in X|f(x)\in A\}\subset\mathcal{X}$
for all $A\in\mathcal{Y}$. Our milestone at here is the construction
of the 4-dimensional stochastic differential equation $d\hat{x}^{\mu=0,1,2,3}(\tau,\omega)=\mathcal{V}_{\pm}^{\mu}(\hat{x}(\tau,\omega))d\tau+\lambda\times dW_{\pm}^{\mu}(\tau,\omega)$
as a relativistic kinematics, extended from Nelson's model $d\hat{x}^{i=1,2,3}(t,\omega)=V_{\pm}^{i}(\hat{\bm{x}}(t,\omega),t)dt+\sqrt{\hbar/2m_{0}}\times dw_{\pm}^{i}(t,\omega)$.

\subsubsection{Forward evolution}

Let us start from the definition of the usual 1-dimensional Wiener
process $w(\circ,\bullet)$:
\begin{defn}[Wiener process]
\begin{leftbar}When a 1-dimensional stochastic process $w(\circ,\bullet)\coloneqq\{w(t,\omega)\in\mathbb{R}|t\in[0,\infty),\omega\in\varOmega\}$
satisfies below, it is named a 1-dimensional {\bf Wiener process}
or a 1-dimensional {\bf Brownian motion}. 

(1) $w(0,\omega)=0$ a.s. 

(2) $t\mapsto w(t,\omega)$ is continuous for each $(t,\omega)\in\mathbb{R}\times\varOmega$.

(3) For all times $0=t_{0}<t_{1}<\cdots<t_{t_{n}}$ ($n\in\mathbb{Z}$),
the increments $\{w(t_{i},\omega)-w(t_{i-1},\omega)\}_{i=1}^{n}$
are independent and each of them follows the normal distribution $\{N(0,t_{i}-t_{i-1})\}_{i=1}^{n}$.\end{leftbar}
\end{defn}
For $\left(\mathit{\Omega},\mathcal{F},\mathscr{P}\right)$, let $\mathbb{E}\llbracket\hat{X}(\bullet)\rrbracket$
be the expectation of a random variable $\hat{X}(\bullet)\coloneqq\{\hat{X}(\omega)|\omega\in\varOmega\}$,
i.e., $\mathbb{E}\llbracket\hat{X}(\bullet)\rrbracket\coloneqq\int_{\Omega}d\mathscr{P}(\omega)\,\hat{X}(\omega)$.
The conditional probability of $B$ given $A$ is described by $\mathscr{P}_{A}(B)$.
For $\{A_{n}\}_{n=1}^{\infty}$ of a countable decomposition of $\varOmega$,
its minimum $\sigma$-algebra $\mathcal{\mathscr{C}}=\sigma(\{A_{n}\}_{n=1}^{\infty})$
is introduced. Then, $\mathbb{E}\llbracket\hat{X}(\bullet)|\mathcal{\mathscr{C}}\rrbracket(\omega)\coloneqq\sum_{n=1}^{\infty}\{\int_{\varOmega}d\mathscr{P}_{A_{n}}(\omega')\hat{X}(\omega')\}\mathbf{1}_{A_{n}}(\omega)$
is defined as the conditional expectation of $\hat{X}(\bullet)$ given
$\mathcal{\mathscr{C}}\subset\mathcal{F}$; $\mathbf{1}_{A_{n}}(\omega)$
satisfies $\mathbf{1}_{A_{n}}(\omega\in A_{n})=1$ and $\mathbf{1}_{A_{n}}(\omega\notin A_{n})=0$.

\begin{lem}
\begin{leftbar}Consider $\left(\mathit{\Omega},\mathcal{F},\mathscr{P}\right)$,
a 1-dimensional Wiener process $w(\circ,\bullet)$ satisfies the following
relations for all $t$:
\begin{equation}
\mathbb{E}\llbracket w(t+\delta t,\bullet)-w(t,\bullet)\rrbracket=0
\end{equation}
\begin{equation}
\underset{\delta t\rightarrow0+}{\lim}\mathbb{E}\left\llbracket \frac{[w(t+\delta t,\bullet)-w(t,\bullet)]\times[w(t+\delta t,\bullet)-w(t,\bullet)]}{\delta t}\right\rrbracket =1
\end{equation}
For $i=1,2,\cdots,N$, consider individual 1-dimensional Wiener processes
$w^{i}(\circ,\bullet)$ in each probability spaces $(\varOmega_{i},\mathcal{F}_{i},\mathscr{P}_{i})$.
By $\varOmega\coloneqq\varOmega_{1}\times\cdots\times\varOmega_{N}$,
$\mathcal{F}\coloneqq\mathcal{F}_{1}\otimes\cdots\otimes\mathcal{F}_{N}$
and $\mathscr{P}\coloneqq\mathscr{P}_{1}\times\cdots\times\mathscr{P}_{N}$,
let $\omega\coloneqq(\omega_{1},\cdots,\omega_{N})$ be in $\varOmega$,
an $N$-dimensional Wiener process $w(\circ,\omega)=(w^{1}(\circ,\omega_{1}),\cdots,w^{N}(\circ,\omega_{N}))$
on the probability space $(\varOmega,\mathcal{F},\mathscr{P})$ is
imposed naturally, too. Let us describe it like $w(\circ,\omega)=(w^{1}(\circ,\omega),\cdots w^{N}(\circ,\omega))$
for emphasizing that it is on $(\varOmega,\mathcal{F},\mathscr{P})$.
\begin{equation}
\mathbb{E}\llbracket w^{i}(t+\delta t,\bullet)-w^{i}(t,\bullet)\rrbracket=0
\end{equation}
\begin{equation}
\underset{\delta t\rightarrow0+}{\lim}\mathbb{E}\left\llbracket \frac{[w^{i}(t+\delta t,\bullet)-w^{i}(t,\bullet)]\times[w^{j}(t+\delta t,\bullet)-w^{j}(t,\bullet)]}{\delta t}\right\rrbracket =\delta^{ij}
\end{equation}
Where, $i,j=1,2,\cdots,N$.\end{leftbar}
\end{lem}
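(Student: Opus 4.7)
The plan is to reduce both statements to elementary moment computations for normally distributed increments, using only the defining properties (1)--(3) of the Wiener process. First I would observe that property (3), applied to the partition $0<t<t+\delta t$ (together with property (1) if $t=0$, or otherwise by singling out the increment over $[t,t+\delta t]$ in any finer partition that includes $t$), implies that $w(t+\delta t,\bullet)-w(t,\bullet)$ is distributed as $N(0,\delta t)$. The first identity then follows because the mean of $N(0,\delta t)$ vanishes, and the second identity follows because the second moment of $N(0,\delta t)$ equals $\delta t$, so the ratio inside the expectation is identically $1$ and the $\delta t\to 0+$ limit is trivial.

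For the multi-dimensional statement, I would first note that each marginal $w^i(\circ,\bullet)$, viewed on the product space $(\varOmega,\mathcal{F},\mathscr{P})$, is still a $1$-dimensional Wiener process: the projection of the product measure $\mathscr{P}=\mathscr{P}_1\times\cdots\times\mathscr{P}_N$ onto the $i$-th factor returns $\mathscr{P}_i$, and $w^i$ depends only on $\omega_i$, so the three defining axioms are inherited verbatim. This immediately yields the mean identity for each component and settles the diagonal case $i=j$ of the covariance identity via the $1$-dimensional result already established.

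For the off-diagonal case $i\neq j$, the product structure of $(\varOmega,\mathcal{F},\mathscr{P})$ becomes essential. Here I would apply Fubini's theorem to factorize the integral of $(w^i(t+\delta t,\bullet)-w^i(t,\bullet))(w^j(t+\delta t,\bullet)-w^j(t,\bullet))$ against $d\mathscr{P}_i\,d\mathscr{P}_j$ (the other coordinates integrate trivially to $1$), giving
\[
\mathbb{E}\llbracket(w^i(t+\delta t,\bullet)-w^i(t,\bullet))(w^j(t+\delta t,\bullet)-w^j(t,\bullet))\rrbracket=\mathbb{E}\llbracket w^i(t+\delta t,\bullet)-w^i(t,\bullet)\rrbracket\cdot\mathbb{E}\llbracket w^j(t+\delta t,\bullet)-w^j(t,\bullet)\rrbracket,
\]
and each factor vanishes by the mean identity already proved. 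Combining the two cases yields the Kronecker symbol $\delta^{ij}$ on the right-hand side.

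There is no real analytical obstacle; the only point that deserves care is the justification used in the last step, namely that the $w^i$ remain genuinely independent once lifted to the common probability space. This is a routine consequence of the product construction $\mathscr{P}=\mathscr{P}_1\times\cdots\times\mathscr{P}_N$ combined with the fact that $w^i(\circ,\omega)$ is a function of $\omega_i$ alone, which licenses the Fubini factorization above. Beyond flagging this, the entire proof amounts to invoking the first two moments of the normal distribution.
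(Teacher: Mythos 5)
Your proposal is correct. The paper states this lemma without proof, treating it as a standard fact, and your argument is exactly the canonical one it implicitly relies on: the increment $w(t+\delta t,\bullet)-w(t,\bullet)$ is $N(0,\delta t)$ by the defining property (3), so the first two moments give both one-dimensional identities immediately, and the product construction of $(\varOmega,\mathcal{F},\mathscr{P})$ makes the components independent, which justifies the Fubini factorization that kills the off-diagonal terms and produces $\delta^{ij}$. Nothing is missing.
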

 
\begin{defn}[$\{\mathcal{P}_{t}\}$ and $\{\mathcal{F}_{t}\}$]
\begin{leftbar}For $\left(\mathit{\Omega},\mathcal{F},\mathscr{P}\right)$,
$\{\mathcal{P}_{t}\}_{t\in\mathbb{R}}$ is the increasing family of
sub-$\sigma$-algebras of $\mathcal{F}$ such that  $-\infty<s\leq t\Longrightarrow\mathcal{P}_{s}\subset\mathcal{P}_{t}$.
And $\{\mathcal{F}_{t}\}_{t\in\mathbb{R}}$ is the decreasing family
 fulfilling $t\leq s<\infty\Longrightarrow\mathcal{F}_{t}\supset\mathcal{F}_{s}$.
\end{leftbar}
\end{defn}
For $(\varOmega,\mathcal{F},\mathscr{P})$, consider a family of $\mathscr{B}([0,t])\times\mathcal{F}/\mathscr{B}(X)$-measurable
mappings ($X$ is a $N$-dimensional topological space)
\[
\mathbb{L}_{T}^{p}(X)\coloneqq\left\{ f(\circ,\bullet)\left|\int_{T\subset\mathbb{R}}|\varphi^{i}\circ f(t,\omega)|^{p}dt<\infty\,\mathrm{a.s.},\,i=1,2,\cdots,N\right.\right\} 
\]
where, $\{\varphi^{i}\}_{i=1}^{N}$ is the coordinate mapping $\varphi^{i}:X\rightarrow\mathbb{R}$.
Then, its ''adapted'' class is,
\[
\mathcal{L}_{\mathrm{loc}}^{p}(\{\mathcal{P}_{t}\};X)\coloneqq\left\{ \left.f(\circ,\bullet)\in\mathbb{L}_{[0,t]}^{p}(X)\right|f(\circ,\bullet)\mathrm{\,is\,}\{\mathcal{P}_{t}\}\mathchar`-\mathrm{adapted}\right\} \,.
\]
Where, $\{\mathcal{P}_{t}\}$-adapted means a stochastic process which
is $\mathcal{P}_{t}/\mathscr{B}(X)$-measurable for all $t$. \begin{leftbar}
\begin{defn}[$\{\mathcal{P}_{t}\}$-Wiener process]
When an $N$-dimensional Wiener process \textbf{$w(\circ,\omega)$}
satisfies the following, it is called a $\{\mathcal{P}_{t}\}$-Wiener
process.

(1) $w(\circ,\bullet)$ is $\{\mathcal{P}_{t}\}$-adapted.

(2) $w(t,\omega)-w(s,\omega)$ and $\mathcal{P}_{s}$ are independent
for $0\leq\forall s\le t$.
\end{defn}
\end{leftbar}\begin{leftbar}
\begin{defn}[It\^{o} integral]
The integral,
\[
\int_{0}^{t}f(t',\omega)dw^{i}(t',\omega),\qquad i=1,2,\cdots,N
\]
is defined as an It\^{o} integral of a function $f\in\mathcal{L}_{\mathrm{loc}}^{2}(\{\mathcal{P}_{t}\};\mathbb{R})$
for a $\{\mathcal{P}_{t}\}$-Wiener process $w(\circ,\omega)$ \cite{Ito(1944)}.
\end{defn}
\end{leftbar}Let us use $w_{+}(\circ,\omega)$ as a $\{\mathcal{P}_{t}\}$-Wiener
process from here.
\begin{lem}
\begin{leftbar}For  $f(\circ,\bullet),g(\circ,\bullet)\in\mathcal{L}_{\mathrm{loc}}^{2}(\{\mathcal{P}_{t};\mathbb{R}\})$,
the following formula is imposed \cite{Ito(1944)}: 
\begin{equation}
\mathbb{E}\left\llbracket \int_{0}^{\tau}f(t',\bullet)dw_{+}^{i}(t',\bullet)\cdot\int_{0}^{\tau}g(t'',\bullet)dw_{+}^{j}(t'',\bullet)\right\rrbracket =\delta^{ij}\times\mathbb{E}\left\llbracket \int_{0}^{\tau}f(t',\bullet)\cdot g(t',\bullet)dt'\right\rrbracket 
\end{equation}
\end{leftbar}\end{lem}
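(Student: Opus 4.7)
The plan is to reduce the identity to simple (step) integrands, verify it there by direct computation leveraging the filtration and the properties of Wiener increments, and then extend by an $L^{2}$-density argument.

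First I would approximate $f,g\in\mathcal{L}_{\mathrm{loc}}^{2}(\{\mathcal{P}_{t}\};\mathbb{R})$ by $\{\mathcal{P}_{t}\}$-adapted step processes on a common partition $0=t_{0}<t_{1}<\cdots<t_{n}=\tau$,
\[
f(t,\omega)=\sum_{k=0}^{n-1}f_{k}(\omega)\mathbf{1}_{[t_{k},t_{k+1})}(t),\qquad g(t,\omega)=\sum_{k=0}^{n-1}g_{k}(\omega)\mathbf{1}_{[t_{k},t_{k+1})}(t),
\]
with $f_{k},g_{k}$ being $\mathcal{P}_{t_{k}}$-measurable and square-integrable; density of such step processes in $L^{2}([0,\tau]\times\varOmega)$ is standard. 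For these integrands the It\^o integrals collapse to finite sums $\sum_{k}f_{k}\Delta w_{+}^{i}(t_{k},\bullet)$ and $\sum_{l}g_{l}\Delta w_{+}^{j}(t_{l},\bullet)$, with the shorthand $\Delta w_{+}^{\cdot}(t_{k},\bullet)\coloneqq w_{+}^{\cdot}(t_{k+1},\bullet)-w_{+}^{\cdot}(t_{k},\bullet)$.

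Next I would expand the product and take expectations term by term. For $k<l$, the quantity $f_{k}g_{l}\Delta w_{+}^{i}(t_{k},\bullet)$ is $\mathcal{P}_{t_{l}}$-measurable, while $\Delta w_{+}^{j}(t_{l},\bullet)$ is independent of $\mathcal{P}_{t_{l}}$ with zero mean by the $\{\mathcal{P}_{t}\}$-Wiener property combined with the zero-mean increment property of the preceding lemma; hence such cross-terms vanish, and $k>l$ is symmetric. For the diagonal $k=l$, conditioning on $\mathcal{P}_{t_{k}}$ and invoking the joint covariance
\[
\mathbb{E}\llbracket\Delta w_{+}^{i}(t_{k},\bullet)\Delta w_{+}^{j}(t_{k},\bullet)|\mathcal{P}_{t_{k}}\rrbracket=\delta^{ij}(t_{k+1}-t_{k})
\]
(which comes from the covariance relation of the preceding lemma applied to the increment and from its independence of $\mathcal{P}_{t_{k}}$), together with the tower property, yields $\delta^{ij}\sum_{k}\mathbb{E}\llbracket f_{k}g_{k}\rrbracket(t_{k+1}-t_{k})=\delta^{ij}\,\mathbb{E}\llbracket\int_{0}^{\tau}fg\,dt'\rrbracket$. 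To extend to arbitrary $f,g$, I would pick step-process sequences $f_{n}\to f$, $g_{n}\to g$ in $L^{2}([0,\tau]\times\varOmega)$: the $i=j$ case of the identity just proved is the It\^o isometry, making $\{\int_{0}^{\tau}f_{n}\,dw_{+}^{i}\}_{n}$ Cauchy in $L^{2}(\varOmega)$ and identifying its limit with the It\^o integral of $f$; Cauchy--Schwarz then transports the convergence to the product expectation on the left-hand side, while the right-hand side is continuous by bilinearity of the $L^{2}$ inner product.

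The main obstacle will be the off-diagonal ($i\ne j$) vanishing: one needs $\Delta w_{+}^{j}(t_{l},\bullet)$ to be independent of $\mathcal{P}_{t_{l}}$, not merely of the subfiltration generated by $w_{+}^{j}$ alone. This forces $\{\mathcal{P}_{t}\}$ to be compatible with \emph{all} components simultaneously---typically, the natural filtration on the product space $(\varOmega,\mathcal{F},\mathscr{P})=(\varOmega_{1}\times\cdots\times\varOmega_{N},\mathcal{F}_{1}\otimes\cdots\otimes\mathcal{F}_{N},\mathscr{P}_{1}\times\cdots\times\mathscr{P}_{N})$ generated by the full vector process $w_{+}(\circ,\bullet)$---so that the $\{\mathcal{P}_{t}\}$-Wiener property of the joint process inherits from each component and the mutual independence of the marginal Wiener processes is preserved through the product construction introduced in the preceding lemma.
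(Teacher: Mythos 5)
The paper does not actually prove this lemma: it is quoted as a known result with a citation to It\^{o}'s 1944 paper, so there is no in-text argument to compare yours against. Your proposal is the standard proof of the It\^{o} isometry together with its polarization to two integrands and to distinct components, and it is correct: approximation by $\{\mathcal{P}_{t}\}$-adapted step processes, vanishing of the off-diagonal terms $k\neq l$ by adaptedness plus the zero-mean independence of the later increment from $\mathcal{P}_{t_{l}}$, the conditional covariance $\delta^{ij}(t_{k+1}-t_{k})$ on the diagonal, and an $L^{2}$-limit passage via the $i=j$ isometry and Cauchy--Schwarz. The subtlety you flag at the end is real but is already supplied by the paper's Definition of a $\{\mathcal{P}_{t}\}$-Wiener process, whose condition (2) requires the full vector increment $w(t,\omega)-w(s,\omega)$ (not merely each component separately) to be independent of $\mathcal{P}_{s}$; combined with the product-space construction of the preceding lemma this gives exactly the joint independence you need. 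One minor caveat worth recording: membership in $\mathcal{L}_{\mathrm{loc}}^{2}$ only guarantees $\int_{0}^{\tau}|f|^{2}dt<\infty$ almost surely, whereas both sides of the identity require $\mathbb{E}\llbracket\int_{0}^{\tau}|f|^{2}dt\rrbracket<\infty$ for the density and limit arguments to go through; this integrability should be assumed (or obtained by localization and monotone convergence), a gap that is in the paper's statement of the lemma rather than in your argument.
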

\begin{thm}[It\^{o} formula]
\begin{leftbar}For $a^{i}(\circ,\bullet)\in\mathcal{L}_{\mathrm{loc}}^{1}(\{\mathcal{P}_{t}\};\mathbb{R})$
and $b_{\alpha}^{i}(\circ,\bullet)\in\mathcal{L}_{\mathrm{loc}}^{2}(\{\mathcal{P}_{t}\};\mathbb{R})$
$(1\leqq i\leqq N,\,1\leqq\alpha\leqq d)$, consider an $N$-dimensional
 process $\hat{X}(\circ,\omega)$ given by
\begin{equation}
\hat{X}^{i}(t,\omega)=\hat{X}^{i}(0,\omega)+\int_{0}^{t}a^{i}(t',\omega)dt'+\sum_{\alpha=1}^{d}\int_{0}^{t}b_{\alpha}^{i}(t',\omega)dw_{+}^{\alpha}(t',\omega)\label{eq: Ito-process +}
\end{equation}
with respect to an initial condition $\hat{X}^{i}(0,\omega)$. Then
for a function $f\in C^{2}(\mathbb{R}^{N})$, 
\begin{eqnarray}
f(\hat{X}(t,\omega)) & = & f(\hat{X}(0,\omega))+\sum_{i=1}^{N}\int_{0}^{t}\frac{\partial f}{\partial x^{i}}(\hat{X}(t',\omega))\left[a^{i}(t',\omega)dt'+\sum_{\alpha=1}^{d}b_{\alpha}^{i}(t',\omega)dw_{+}^{\alpha}(t',\omega)\right]\nonumber \\
 &  & +\frac{1}{2}\sum_{i,j=1}^{N}\sum_{\alpha=1}^{d}\int_{0}^{t}\frac{\partial^{2}f}{\partial x^{i}\partial x^{j}}(\hat{X}(t',\omega))b_{\alpha}^{i}(t',\omega)b_{\alpha}^{j}(t',\omega)dt'\label{eq: Ito-formula-1}
\end{eqnarray}
is almost surely satisfied. By introducing $\int_{0}^{t}d_{+}\hat{X}(t',\omega)\coloneqq\hat{X}(t,\omega)-\hat{X}(0,\omega)$
and $\int_{0}^{t}d_{+}f(\hat{X}(t',\omega))\coloneqq f(\hat{X}(t,\omega))-f(\hat{X}(0,\omega))$,
(\ref{eq: Ito-formula-1}) is written like
\begin{eqnarray}
d_{+}f(\hat{X}(t,\omega)) & = & \sum_{i=1}^{N}\frac{\partial f}{\partial x^{i}}(\hat{X}(t,\omega))d_{+}\hat{X}^{i}(t,\omega)+\frac{1}{2}\sum_{i,j=1}^{N}\frac{\partial^{2}f}{\partial x^{i}\partial x^{j}}(\hat{X}(t,\omega))d_{+}\hat{X}^{i}(t,\omega)d_{+}\hat{X}^{j}(t,\omega)\label{eq: Ito-formula-2}
\end{eqnarray}
\end{leftbar}
\end{thm}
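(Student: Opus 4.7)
The plan is to follow the classical three-step route: reduce to elementary (step) integrands and a bounded initial condition, prove the formula in that simplified setting by Taylor expansion combined with the quadratic-variation identity, and then remove all remaining restrictions by density and localization.

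First I would approximate $a^{i}$ and $b^{i}_{\alpha}$ in the appropriate $\mathcal{L}^{p}_{\mathrm{loc}}$-topology by elementary $\{\mathcal{P}_{t}\}$-adapted step processes on a partition $0=t_{0}<t_{1}<\cdots<t_{n}=t$ with mesh tending to zero, and assume $|\hat{X}(0,\omega)|$ bounded. Writing $\Delta_{k}\hat{X}^{i}\coloneqq a^{i}(t_{k},\omega)\,\Delta_{k}t+\sum_{\alpha}b^{i}_{\alpha}(t_{k},\omega)\,\Delta_{k}w_{+}^{\alpha}$, I decompose
$$f(\hat{X}(t,\omega))-f(\hat{X}(0,\omega))=\sum_{k=0}^{n-1}\bigl[f(\hat{X}(t_{k+1},\omega))-f(\hat{X}(t_{k},\omega))\bigr]$$
and Taylor-expand each summand to second order about $\hat{X}(t_{k},\omega)$, leaving a remainder $R_{k}$ of order $|\Delta_{k}\hat{X}|^{3}$ uniformly on compact sets. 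The first-order sum $\sum_{k}\partial_{i}f(\hat{X}(t_{k}))\Delta_{k}\hat{X}^{i}$ converges, by the construction of the It\^{o} integral and the $L^{2}$-isometry stated earlier in the excerpt, to the $dt$- and $dw_{+}^{\alpha}$-integrals that appear in the first line of (\ref{eq: Ito-formula-1}).

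The heart of the argument is the second-order sum. Among its three types of contributions, $a^{i}a^{j}(\Delta_{k}t)^{2}$ is $O(|\Delta_{k}t|)$ times the drift and dies in $L^{1}$; the mixed $a^{i}b^{j}_{\alpha}\Delta_{k}t\,\Delta_{k}w_{+}^{\alpha}$ has $L^{2}$-norm of order $|\Delta_{k}t|^{3/2}$ after summation and also vanishes. The decisive term is $\sum_{\alpha,\beta}b^{i}_{\alpha}(t_{k})b^{j}_{\beta}(t_{k})\,\Delta_{k}w_{+}^{\alpha}\Delta_{k}w_{+}^{\beta}$. Here I would use that, by the $\{\mathcal{P}_{t}\}$-adaptedness of the $b$'s and the independence of Wiener increments from $\mathcal{P}_{t_{k}}$,
$$\mathbb{E}\llbracket\Delta_{k}w_{+}^{\alpha}\Delta_{k}w_{+}^{\beta}\,|\,\mathcal{P}_{t_{k}}\rrbracket=\delta^{\alpha\beta}\,\Delta_{k}t,$$
and that the centred variables $\Delta_{k}w_{+}^{\alpha}\Delta_{k}w_{+}^{\beta}-\delta^{\alpha\beta}\Delta_{k}t$ are pairwise orthogonal across distinct $k$. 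A direct second-moment estimate then gives convergence in $L^{2}$ of this sum to $\sum_{\alpha}\int_{0}^{t}b^{i}_{\alpha}(t')b^{j}_{\alpha}(t')\,dt'$, which is precisely the quadratic-variation term in (\ref{eq: Ito-formula-1}). The cumulative Taylor remainder $\sum_{k}R_{k}$ is dominated by the modulus of continuity of $D^{2}f$ times $\sum_{k}|\Delta_{k}\hat{X}|^{2}$, and the latter is uniformly bounded in $L^{1}$ by the very same quadratic-variation computation, so it tends to zero in probability.

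Finally, I remove the two auxiliary restrictions. To pass from step-process integrands to general $a^{i}\in\mathcal{L}^{1}_{\mathrm{loc}}(\{\mathcal{P}_{t}\};\mathbb{R})$ and $b^{i}_{\alpha}\in\mathcal{L}^{2}_{\mathrm{loc}}(\{\mathcal{P}_{t}\};\mathbb{R})$ I invoke the standard density argument for It\^{o} integrands together with the isometry. To pass from bounded $\hat{X}$ and $f$ with bounded derivatives to general $f\in C^{2}(\mathbb{R}^{N})$ I localize by the stopping times $\sigma_{n}\coloneqq\inf\{t\geq 0:|\hat{X}(t,\omega)|\geq n\}$, apply the formula on $[0,t\wedge\sigma_{n}]$, and let $n\to\infty$; almost-sure path-continuity forces $\sigma_{n}\to\infty$ a.s. Rewriting the resulting integral identity as differentials yields (\ref{eq: Ito-formula-2}). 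I expect the main obstacle to be the quadratic-variation step: the rule ``$dw^{\alpha}dw^{\beta}=\delta^{\alpha\beta}dt$'' is not a pathwise identity but an $L^{2}$ statement, and extracting it rigorously requires the independence of $\Delta_{k}w_{+}$ from $\mathcal{P}_{t_{k}}$ together with a careful second-moment bound; every other ingredient (Taylor remainder, density, localization) is routine once that estimate is in place.
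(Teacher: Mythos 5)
Your proposal is essentially correct, but there is nothing in the paper to compare it against: the paper states this It\^{o} formula as a known classical result (citing It\^{o}'s 1944 paper alongside the isometry lemma) and supplies no proof at all, moving directly on to the extension of the Wiener process to all of $\mathbb{R}$. What you have written is the standard textbook derivation --- reduction to step integrands, second-order Taylor expansion along a partition, identification of the quadratic-variation term via $\mathbb{E}\llbracket\Delta_{k}w_{+}^{\alpha}\Delta_{k}w_{+}^{\beta}\,|\,\mathcal{P}_{t_{k}}\rrbracket=\delta^{\alpha\beta}\Delta_{k}t$ together with the orthogonality of the centred increments across distinct $k$, then density and localization by the exit times $\sigma_{n}$ --- and all the essential ingredients are in place, including the correct observation that the passage from the heuristic rule $dw^{\alpha}dw^{\beta}=\delta^{\alpha\beta}dt$ to a rigorous statement is an $L^{2}$ second-moment estimate rather than a pathwise identity. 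One small internal inconsistency to fix: early on you describe the Taylor remainder $R_{k}$ as being ``of order $|\Delta_{k}\hat{X}|^{3}$,'' which would require $f\in C^{3}$; for the stated hypothesis $f\in C^{2}$ the remainder is only $o(|\Delta_{k}\hat{X}|^{2})$, controlled by the modulus of continuity of $D^{2}f$ on a compact set. You in fact use exactly this weaker (and correct) bound when you sum the remainders at the end, so the argument goes through --- just state the remainder estimate that way from the start. In the context of this paper, supplying the proof is supererogatory but harmless; the author's own route is simply to quote the theorem.
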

Though a $\{\mathcal{P}_{t}\}$-Wiener process is defined on $[0,\infty)$,
however, it can be expand on to $\mathbb{R}$ naturally. For example,
let us define a new stochastic process like $w'_{+}(t-T,\omega)\coloneqq w_{+}(t,\omega)$
with respect to $T>0$. This new $w'_{+}(\circ,\omega)$ is a $\{\mathcal{P}_{t}\}$-Wiener
process such that $w'_{+}(-T,\omega)=0$ a.s. By defining another
process $w''_{+}(t-T,\omega)\coloneqq w'_{+}(t-T,\omega)-w'_{+}(0,\omega)$,
this $w''_{+}(\circ,\omega)$ is a $\{\mathcal{P}_{t}\}$-Wiener process
on $[-T,\infty)$, satisfying $w''_{+}(0,\omega)=0$ a.s. By repeating
this procedure, the domain of a $\{\mathcal{P}_{t}\}$-Wiener process
$w_{+}(\circ,\omega)$ can be expanded on to $\mathbb{R}$ with the
condition $w_{+}(0,\omega)=0$ a.s. Therefore, the differential form
of (\ref{eq: Ito-formula-2}) is imposed for all $t\in\mathbb{R}$.
From here, we select the domain of stochastic processes on $\mathbb{R}$
for the later discussion. In order to this reason, $\mathcal{L}_{\mathrm{loc}}^{p}(\{\mathcal{P}_{t}\};X)$
is replaced by the new definition
\[
\mathcal{L}_{\mathrm{loc}}^{p}(\{\mathcal{P}_{t}\};X)\coloneqq\left\{ \left.f(\circ,\bullet)\in\mathbb{L}_{(-\infty,t]}^{p}(X)\right|f(\circ,\bullet)\mathrm{\,is\,}\{\mathcal{P}_{t}\}\mathchar`-\mathrm{adapted}\right\} \,.
\]
If a stochastic process $f(\circ,\bullet)$ is $\mathscr{B}((-\infty,t])\times\mathcal{P}_{t}/\mathscr{B}(X)$-measurable
for all $t\in\mathbb{R}$, let us name $f(\circ,\bullet)$ $\{\mathcal{P}_{t}\}$-progressively
measurable or $\{\mathcal{P}_{t}\}$-progressive. Hence, $\hat{X}(\circ,\bullet)$
of (\ref{eq: Ito-process +}) is $\{\mathcal{P}_{t}\}$-progressive.

Consider a simple 1-dimentional $\{\mathcal{P}_{t}\}$-progressive
$\hat{X}(\circ,\bullet)$ such that $d_{+}\hat{X}(t,\omega)=a_{+}(\hat{X}(t,\omega))dt+\theta_{+}\times dw_{+}(t,\omega)$
($\theta_{+}>0$) and $d\mathbb{E}\llbracket f(\hat{X}(t,\bullet))\rrbracket/dt$
for a function $f\in C^{2}(\mathbb{R})$. Since $\mathbb{E}\llbracket f(\hat{X}(t,\bullet))\rrbracket\coloneqq\int_{\mathbb{R}}dx\,f(x)p(x,t)$
for the probability density $p(x,t)$, thus, $d\mathbb{E}\llbracket f(\hat{X}(t,\bullet))\rrbracket/dt=\int_{\mathbb{R}}dx\,f(x)\partial_{t}p(x,t)$
and also
\begin{eqnarray}
\frac{d\mathbb{E}\llbracket f(\hat{X}(t,\bullet))\rrbracket}{dt} & = & \lim_{\varDelta t\rightarrow0+}\frac{\mathbb{E}\llbracket f(\hat{X}(t+\varDelta t,\bullet))\rrbracket-\mathbb{E}\llbracket f(\hat{X}(t,\bullet))\rrbracket}{\varDelta t}\nonumber \\
 & = & \lim_{\varDelta t\rightarrow0+}\mathbb{E}\left\llbracket \mathbb{E}\left\llbracket \left.\frac{1}{\varDelta t}\begin{gathered}\int_{t}^{t+\varDelta t}d_{+}f(\hat{X}(t,\bullet))\end{gathered}
\right|\mathcal{P}_{t}\right\rrbracket \right\rrbracket \,,
\end{eqnarray}
where, $\mathbb{E}\llbracket f(\hat{X}(\tau,\bullet))\rrbracket=\mathbb{E}\llbracket\mathbb{E}\llbracket f(\hat{X}(\tau,\bullet))|\mathcal{P}_{t}\rrbracket\rrbracket$
is employed. Then, the Fokker-Planck equation is derived:
\begin{equation}
\partial_{t}p(x,t)+\partial_{x}[a_{+}(x)p(x,t)]=\frac{\theta_{+}^{2}}{2}\partial_{x}^{2}p(x,t)
\end{equation}

\subsubsection{Backward evolution}

Let us consider a type of a backward evolution. The point at here
is to consider a decreasing family of $\{\mathcal{F}_{t}\}$. 
\begin{lem}
\begin{leftbar}An $N$-dimensional $\{\mathcal{F}_{t}\}$-Wiener
process $w_{-}(\circ,\bullet)$ fulfills the following relations ($i,j=1,2,\cdots,N$):
\begin{equation}
\mathbb{E}\llbracket w_{-}^{i}(t,\bullet)-w_{-}^{i}(t-\delta t,\bullet)\rrbracket=0
\end{equation}
\begin{equation}
\underset{\delta t\rightarrow0+}{\lim}\mathbb{E}\left\llbracket \frac{[w_{-}^{i}(t,\bullet)-w_{-}^{i}(t-\delta t,\bullet)]\times[w_{-}^{j}(t,\bullet)-w_{-}^{j}(t-\delta t,\bullet)]}{\delta t}\right\rrbracket =\delta^{ij}
\end{equation}
\end{leftbar}
\end{lem}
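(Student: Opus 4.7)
The plan is to reduce both identities to their forward counterparts in Lemma 2 by a time-reversal argument, since the statements concern only first- and second-order moments of Wiener increments and the decreasing-filtration structure of $\{\mathcal{F}_{t}\}$ is immaterial to such moment identities. The $\{\mathcal{F}_{t}\}$-adaptedness of $w_{-}$ will only become essential later when a backward It\^{o} calculus is built on top of it.

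Concretely, I would fix an arbitrary $T>t$ and introduce the time-reversed auxiliary process $\tilde{w}(s,\omega):=w_{-}(T-s,\omega)-w_{-}(T,\omega)$ for $s\in[0,T]$. Checking the three defining properties of a Wiener process for $\tilde{w}$ is routine: $\tilde{w}(0,\omega)=0$ a.s.\ by construction; continuity of $s\mapsto\tilde{w}(s,\omega)$ follows from that of $t\mapsto w_{-}(t,\omega)$; and for any partition $0=s_{0}<s_{1}<\cdots<s_{n}\leq T$ the increments
\[
\tilde{w}(s_{i},\omega)-\tilde{w}(s_{i-1},\omega)=-\bigl[w_{-}(T-s_{i},\omega)-w_{-}(T-s_{i-1},\omega)\bigr]
\]
are supported on pairwise disjoint time-intervals of $w_{-}$, so they inherit independence, and each is $N(0,s_{i}-s_{i-1})$ by the symmetry of the Gaussian law under negation. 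Componentwise independence of the $N$ coordinates is preserved by the same argument.

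Having recognized $\tilde{w}$ as an $N$-dimensional Wiener process, I would then apply Lemma 2 at parameter value $s:=T-t$ with increment $\delta s:=\delta t$; using
\[
\tilde{w}^{i}(s+\delta s,\omega)-\tilde{w}^{i}(s,\omega)=w_{-}^{i}(t,\omega)-w_{-}^{i}(t-\delta t,\omega),
\]
both asserted equalities drop out by direct substitution, the product in the second identity being invariant under the joint sign change of its two factors. The only mildly subtle step is the bookkeeping in verifying the independent-increments axiom for $\tilde{w}$, namely that the bijection $s\mapsto T-s$ carries an increasing partition of $[0,T]$ into one whose associated $w_{-}$-increments remain pairwise disjoint; once this is in hand, no real obstacle remains and the proof is essentially a translation of Lemma 2.
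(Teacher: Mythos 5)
Your proposal is correct. The paper in fact states this lemma without any written proof, so there is no argument to compare against line by line; but your time-reversal reduction is exactly the device the paper itself invokes a few lines later (the monotonically decreasing reindexing $t'\mapsto f(t)$ turning $\{\mathcal{F}_{t}\}$ into an increasing family) in order to transfer the forward It\^{o} formula to the backward one, so your route is entirely consistent with the paper's framework. Two small remarks. First, both displayed increment identities in your write-up are off by a sign: one has $\tilde{w}(s_{i},\omega)-\tilde{w}(s_{i-1},\omega)=w_{-}(T-s_{i},\omega)-w_{-}(T-s_{i-1},\omega)$, which is the negative of what you wrote, and likewise $\tilde{w}^{i}(s+\delta s,\omega)-\tilde{w}^{i}(s,\omega)=-[w_{-}^{i}(t,\omega)-w_{-}^{i}(t-\delta t,\omega)]$. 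As you yourself observe, the first moment vanishes regardless and the quadratic expression is invariant under the joint sign flip, so the conclusion survives, but the bookkeeping should be fixed. Second, the whole auxiliary construction of $\tilde{w}$ is dispensable here: since $w_{-}$ is by hypothesis an $N$-dimensional Wiener process, property (3) of the paper's Definition 1 already says that $w_{-}^{i}(t,\bullet)-w_{-}^{i}(t-\delta t,\bullet)$ is $N(0,\delta t)$-distributed with independent components, and the two moment identities follow by direct computation; the reversal only becomes genuinely necessary when one needs the filtration structure, e.g.\ for the backward It\^{o} formula, exactly as you anticipate.
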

Then, the following function family is defined:
\[
\mathcal{L}_{\mathrm{loc}}^{p}(\{\mathcal{F}_{t}\};X)\coloneqq\left\{ \left.f(\circ,\bullet)\in\mathbb{L}_{[t,\infty)}^{p}(X)\right|f(\circ,\bullet)\mathrm{\,is\,}\{\mathcal{F}_{t}\}\mathchar`-\mathrm{adapted}\right\} 
\]
If a stochastic process $f(\circ,\bullet)$ is $\mathscr{B}([t,\infty))\times\mathcal{F}_{t}/\mathscr{B}(X)$-measurable
for all $t\in\mathbb{R}$, let us call it $\{\mathcal{F}_{t}\}$-progressively
measurable or $\{\mathcal{F}_{t}\}$-progressive.
\begin{thm}
\begin{leftbar}For $t<\forall T<\infty$, $\{A^{i}(\circ,\bullet)\}\in\mathcal{L}_{\mathrm{loc}}^{1}(\{\mathcal{F}_{t}\};\mathbb{R})$,
$\{B_{\alpha}^{i}(\circ,\bullet)\}\in\mathcal{L}_{\mathrm{loc}}^{2}(\{\mathcal{F}_{t}\};\mathbb{R})$
and a $d$-dimensional $\{\mathcal{F}_{t}\}$-Wiener process $\{w_{-}^{\alpha}(\circ,\omega)\}$
$(1\leqq i\leqq N,\,1\leqq\alpha\leqq d)$, consider an $N$-dimensional
 process ($\{\mathcal{F}_{t}\}$-progressive) $\hat{X}(\circ,\omega)$
given by
\begin{equation}
\hat{X}^{i}(t,\omega)=\hat{X}^{i}(T,\omega)-\int_{t}^{T}A^{i}(t',\omega)dt'-\sum_{\alpha=1}^{d}\int_{t}^{T}B_{\alpha}^{i}(t',\omega)dw_{-}^{\alpha}(t',\omega)
\end{equation}
with respect to a terminal condition $\hat{X}(T,\omega)$. By introducing
$\int_{t}^{T}d_{-}\hat{X}(t',\omega)\coloneqq\hat{X}(T,\omega)-\hat{X}(t,\omega)$,
i.e.,
\begin{equation}
d_{-}\hat{X}^{i}(t,\omega)=A^{i}(t,\omega)dt+\sum_{\alpha=1}^{d}B_{\alpha}^{i}(t,\omega)dw_{-}^{\alpha}(t,\omega)
\end{equation}
and $\int_{t}^{T}d_{-}f(\hat{X}(t',\omega))\coloneqq f(\hat{X}(T,\omega))-f(\hat{X}(t,\omega))$
for a function $f\in C^{2}(\mathbb{R}^{N})$, 
\begin{equation}
d_{-}f(\hat{X}(t,\omega))=\sum_{i=1}^{N}\frac{\partial f}{\partial x^{i}}(\hat{X}(t,\omega))d_{-}\hat{X}^{i}(t,\omega)-\frac{1}{2}\sum_{i,j=1}^{N}\frac{\partial^{2}f}{\partial x^{i}\partial x^{j}}(\hat{X}(t,\omega))d_{-}\hat{X}^{i}(t,\omega)d_{-}\hat{X}^{j}(t,\omega)\,\,\mathrm{a.s.}\label{eq: Ito-formula-3}
\end{equation}
\end{leftbar}
\end{thm}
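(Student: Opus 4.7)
The plan is to mirror the proof of the forward It\^{o} formula, replacing the \emph{left}-endpoint evaluation of $\{\mathcal{P}_t\}$-adapted integrands by the \emph{right}-endpoint evaluation natural for $\{\mathcal{F}_t\}$-adapted ones, and exploiting a single algebraic sign flip in Taylor's theorem. Fix a partition $t = t_0 < t_1 < \cdots < t_n = T$ of mesh $|\Pi|\to 0$ and write
\[
f(\hat{X}(T,\omega)) - f(\hat{X}(t,\omega)) = \sum_{k=0}^{n-1}\bigl[\,f(\hat{X}(t_{k+1},\omega)) - f(\hat{X}(t_k,\omega))\,\bigr].
\]
Taylor-expanding each summand about the \emph{right} endpoint $\hat{X}(t_{k+1},\omega)$ yields, with $\Delta_k\hat{X}^i \coloneqq \hat{X}^i(t_{k+1},\omega) - \hat{X}^i(t_k,\omega)$,
\[
f(\hat{X}(t_{k+1})) - f(\hat{X}(t_k)) = \partial_i f(\hat{X}(t_{k+1}))\,\Delta_k\hat{X}^i - \tfrac{1}{2}\partial_i\partial_j f(\hat{X}(t_{k+1}))\,\Delta_k\hat{X}^i\,\Delta_k\hat{X}^j + R_k .
\]
The minus sign in front of the Hessian is exactly where (\ref{eq: Ito-formula-3}) parts company with (\ref{eq: Ito-formula-2}): the quadratic contribution arises from $(\hat{X}(t_k)-\hat{X}(t_{k+1}))^{\otimes 2} = (\Delta_k\hat{X})^{\otimes 2}$, while moving $f(\hat{X}(t_k))$ to the left-hand side flips only the linear term.

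Passing to the limit, the linear sum converges in $L^2$ to $\sum_i\int_t^T \partial_i f(\hat{X}(t'))\,d_-\hat{X}^i(t')$ precisely because its integrand is evaluated at the right endpoint $t_{k+1}$ --- the $\{\mathcal{F}_t\}$-adapted analogue of the left-endpoint convention employed in (\ref{eq: Ito-process +}). For the quadratic sum I substitute the one-step expansion $\Delta_k\hat{X}^i \approx A^i(t_{k+1})\,\Delta_k t + B_\alpha^i(t_{k+1})\,\Delta_k w_-^\alpha$ with $\Delta_k w_-^\alpha \coloneqq w_-^\alpha(t_{k+1}) - w_-^\alpha(t_k)$, which is independent of $\mathcal{F}_{t_{k+1}}$ and satisfies $\mathbb{E}\llbracket\Delta_k w_-^\alpha\,\Delta_k w_-^\beta\rrbracket = \delta^{\alpha\beta}\Delta_k t$. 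Terms proportional to $(\Delta_k t)^2$ or the mixed $\Delta_k t\cdot\Delta_k w_-$ vanish in the mesh limit, leaving
\[
-\tfrac{1}{2}\sum_{i,j}\sum_{\alpha}\int_t^T \partial_i\partial_j f(\hat{X}(t'))\,B_\alpha^i(t')\,B_\alpha^j(t')\,dt',
\]
which matches the claimed quadratic-variation term once one adopts the formal shorthand $d_-\hat{X}^i\,d_-\hat{X}^j = \sum_\alpha B_\alpha^i B_\alpha^j\,dt$.

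The main obstacle is $L^2$ control of the Taylor remainders $R_k$ and of the quadratic sum under only the hypotheses $A^i \in \mathcal{L}^1_{\mathrm{loc}}(\{\mathcal{F}_t\};\mathbb{R})$, $B_\alpha^i \in \mathcal{L}^2_{\mathrm{loc}}(\{\mathcal{F}_t\};\mathbb{R})$ and $f \in C^2(\mathbb{R}^N)$, without any growth assumption on $f$ or its derivatives. I would handle this by the standard localization: introduce $\{\mathcal{F}_t\}$-stopping times $\sigma_m \coloneqq \inf\{s \in [t,T] : |\hat{X}(s)| + \int_s^T |A|\,dt' + \int_s^T |B|^2\,dt' \ge m\}$ (adapted to the backward flow), prove the identity on each event $\{\sigma_m = t\}$ where all quantities are bounded and the modulus of continuity of $\partial_i\partial_j f$ on the compact range of $\hat{X}$ forces $\sum_k |R_k| \to 0$ in $L^2$, and then remove the localization via $m\to\infty$. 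The $L^2$-isometry required to control the backward It\^{o} integral is obtained verbatim as in the forward case, using the independence of $\Delta_k w_-^\alpha$ from $\mathcal{F}_{t_{k+1}}$ in place of the independence of $\Delta_k w_+^\alpha$ from $\mathcal{P}_{t_k}$.
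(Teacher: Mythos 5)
Your proposal is correct in outline, but it takes a genuinely different route from the paper. The paper derives (\ref{eq: Ito-formula-3}) by \emph{reducing it to the forward formula} (\ref{eq: Ito-formula-2}) via a pure time reversal: a monotonically decreasing reparametrization turns the decreasing family $\{\mathcal{F}_{t}\}$ into an increasing family $\{\mathcal{P}_{t'}\}\equiv\{\mathcal{F}_{f(t)}\}$ and the $\{\mathcal{F}_{t}\}$-adapted $\hat{X}$ into a $\{\mathcal{P}_{t'}\}$-adapted $\hat{X}'$ satisfying $d_{+}\hat{X}'(t',\omega)=-d_{-}\hat{X}(t,\omega)$ and $d_{+}f(\hat{X}'(t',\omega))=-d_{-}f(\hat{X}(t,\omega))$; substituting these into (\ref{eq: Ito-formula-2}) flips the sign of both first-order quantities while the quadratic term, being even in $d_{+}\hat{X}'$, does not flip, and dividing through by the overall minus sign yields the $-\tfrac{1}{2}$ in front of the Hessian. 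You instead re-derive the backward formula from scratch: partition of $[t,T]$, Taylor expansion about the \emph{right} endpoint (which is where your minus sign originates), convergence of the backward sums using independence of $\Delta_{k}w_{-}^{\alpha}$ from $\mathcal{F}_{t_{k+1}}$, and localization to handle an arbitrary $C^{2}$ function. Your route is longer but self-contained and makes explicit precisely which measurability and independence properties of the $\{\mathcal{F}_{t}\}$-Wiener process are being used; the paper's route is a two-line reduction that presupposes the forward formula and quietly relies on the fact that the time-reversed $w_{-}$ is a forward Wiener process for the reversed filtration (true here by the very definition of an $\{\mathcal{F}_{t}\}$-Wiener process, so no drift correction of the kind that arises in time reversal of diffusions is needed). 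One small slip in your write-up: with your definition of $\sigma_{m}$ as an infimum, the event on which all quantities stay below the threshold $m$ is $\{\sigma_{m}=+\infty\}$ (or one should use a backward stopping time defined via a supremum), not $\{\sigma_{m}=t\}$; this does not affect the substance of the localization argument.
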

Consider the derivation of (\ref{eq: Ito-formula-3}) by using (\ref{eq: Ito-formula-2}).
The decreasing family $\{\mathcal{F}_{t}\}_{t\in\mathbb{R}}$ relates
to an increasing family $\{\mathcal{P}_{t'}\}_{t'\in\mathbb{R}}$
by the replacement of its subscripts: By defining a monotonically
decreasing function $f:\mathbb{R}\rightarrow\mathbb{R}$, then, $\{\mathcal{F}_{f(t)}\}_{t\in\mathbb{R}}$
becomes an increasing family for $t$ since $t\le s\Rightarrow\mathcal{F}_{f(t)}\subset\mathcal{F}_{f(s)}$.
Thus, there is a correspondence between each of $t$ and $t'$. $\{\mathcal{P}_{t'}\}_{t'\in\mathbb{R}}\equiv\{\mathcal{F}_{f(t)}\}_{t\in\mathbb{R}}$.
For an $\{\mathcal{F}_{t}\}$-adapted $\hat{X}(\circ,\bullet)$, there
is a $\{\mathcal{P}_{t'}\}$-adapted $\hat{X}'(\circ,\bullet)$ satisfying
$\hat{X}'(t',\omega)=\hat{X}(t,\omega)$ at a fixed $t$. In order
to the construction of $\{\mathcal{P}_{t'}\}$ and $\hat{X}'(\circ,\bullet)$,
$\hat{X}'(t'+T,\omega)=\hat{X}(t-T,\omega)$ for all $T\in\mathbb{R}$
via the relation $f(t-T)\coloneqq t'+T$. Since $\int_{t'-T}^{t'+T}d_{+}\hat{X}'(\mathring{t},\omega)=\hat{X}'(t'+T,\omega)-\hat{X}'(t'-T,\omega)=\hat{X}(t-T,\omega)-\hat{X}(t+T,\omega)=-\int_{t-T}^{t+T}d_{-}\hat{X}(\mathring{t},\omega)$
for $T>0$, therefore, $d_{+}\hat{X}'(t',\omega)=-d_{-}\hat{X}(t,\omega)$
is derived. And $d_{+}f(\hat{X}'(t',\omega))=-d_{-}f(\hat{X}(t,\omega))$
is also satisfied for a function $f\in C^{2}(\mathbb{R}^{N})$ by
the same way. Let us employ those relations on (\ref{eq: Ito-formula-2})
namely, 
\begin{eqnarray}
d_{+}f(\hat{X}'(t',\omega)) & = & \sum_{i=1}^{N}\frac{\partial f}{\partial x^{i}}(\hat{X}'(t',\omega))d_{+}\hat{X}'^{i}(t',\omega)\nonumber \\
 &  & +\frac{1}{2}\sum_{i,j=1}^{N}\frac{\partial^{2}f}{\partial x^{i}\partial x^{j}}(\hat{X}'(t',\omega))d_{+}\hat{X'}^{i}(t',\omega)d_{+}\hat{X}'^{j}(t',\omega)\,\,\mathrm{a.s.},
\end{eqnarray}
the It\^{o} formula of (\ref{eq: Ito-formula-3}) is imposed by switching
from the $\{\mathcal{P}_{t'}\}$-adapted $\hat{X}'(\circ,\bullet)$
to the $\{\mathcal{F}_{t}\}$-adapted $\hat{X}(\circ,\bullet)$. This
discussion can be applied for each $t\in\mathbb{R}$.

In the case of a 1-dimensional $\{\mathcal{F}_{t}\}$-progressive
$\hat{Y}(\circ,\bullet)$, i.e., $d_{-}\hat{Y}(t,\omega)=a_{-}(\hat{Y}(t,\omega))dt+\theta_{-}\times dw_{-}(t,\omega)$
($\theta_{-}>0$), since 
\begin{eqnarray}
\frac{d\mathbb{E}\llbracket f(\hat{Y}(t,\bullet))\rrbracket}{dt} & = & \lim_{\varDelta t\rightarrow0+}\frac{\mathbb{E}\llbracket f(\hat{Y}(t,\bullet))\rrbracket-\mathbb{E}\llbracket f(\hat{Y}(t-\varDelta t,\bullet))\rrbracket}{\varDelta t}\nonumber \\
 & = & \lim_{\varDelta t\rightarrow0+}\mathbb{E}\left\llbracket \mathbb{E}\left\llbracket \left.\frac{1}{\varDelta t}\begin{gathered}\int_{t-\varDelta t}^{t}d_{-}f(\hat{X}(t',\bullet))\end{gathered}
\right|\mathcal{F}_{t}\right\rrbracket \right\rrbracket \,,
\end{eqnarray}
the following Fokker-Planck equation is fulfilled:
\begin{equation}
\partial_{t}p(y,t)+\partial_{y}[a_{-}(y)p(y,t)]=-\frac{\theta_{-}^{2}}{2}\partial_{y}^{2}p(y,t)
\end{equation}

\subsubsection{Forward-backward composition}

Let us rewrite 1-dimensional stochastic processes of $\{\mathcal{P}_{t}\}$-adapted
$\hat{X}(\circ,\bullet)$ and $\{\mathcal{F}_{t}\}$-adapted $\hat{Y}(\circ,\bullet)$
on $T\subset\mathbb{R}$:
\begin{equation}
\begin{cases}
\begin{gathered}\int_{t\in T}d_{+}\hat{X}(t,\omega)=\int_{t\in T}a_{+}(\hat{X}(t,\omega))dt+\theta\int_{t\in T}dw_{+}(t,\omega)\end{gathered}
, & \{\mathcal{P}_{t}\}\mathchar`-\mathrm{adapted}\\
\begin{gathered}\int_{t\in T}d_{-}\hat{Y}(t,\omega')=\int_{t\in T}a_{-}(\hat{Y}(t,\omega'))dt+\theta\int_{t\in T}dw_{-}(t,\omega')\end{gathered}
, & \{\mathcal{F}_{t}\}\mathchar`-\mathrm{adapted}
\end{cases}\label{eq: set of SDE eqs 1}
\end{equation}
Where, the following It\^{o} rule is satisfied; $dt\cdot dt=0$,
$dt\cdot dw_{\pm}(t,\omega)=0$, $dw_{\pm}(t,\omega)\cdot dw_{\pm}(t,\omega)=dt$
and $dw_{\pm}(t,\omega)\cdot dw_{\mp}(t,\omega)=0$. By describing
(\ref{eq: set of SDE eqs 1}) like 
\begin{equation}
\left[\begin{array}{c}
\hat{X}(b,\omega)-\hat{X}(a,\omega)\\
\hat{Y}(b,\omega)-\hat{Y}(a,\omega)
\end{array}\right]=\int_{a}^{b}\left[\begin{array}{c}
a_{+}(\hat{X}(t,\omega),\hat{Y}(t,\omega))\\
a_{-}(\hat{X}(t,\omega),\hat{Y}(t,\omega))
\end{array}\right]dt+\theta\times\int_{a}^{b}\left[\begin{array}{c}
dw_{+}(t,\omega)\\
dw_{-}(t,\omega)
\end{array}\right]\label{eq: set of SDE eqs 2}
\end{equation}
for $a_{+},a_{-}:\mathbb{R}^{2}\rightarrow\mathbb{R}$, it can be
regarded as the equation of a 2-component vector $\hat{\gamma}(t,\omega)\coloneqq(\hat{X}(t,\omega),\hat{Y}(t,\omega))$.
Though, $\{\mathcal{P}_{t}\}$ and $\{\mathcal{F}_{t}\}$ are the
different types of the families of sub-$\sigma$-algebras, however,
the index of $t\in\mathbb{R}$ is common. Therefore, the mathematical
set of $(\mathcal{P}_{t},\mathcal{F}_{t})$ can be defined for each
$t$. Let us regard this as a new sub-$\sigma$-algebra for a 2-dimensional
stochastic process denoted by $\mathcal{M}_{t\in\mathbb{R}}\coloneqq\mathcal{P}_{t}\otimes\mathcal{F}_{t}\in\mathcal{F}$
and consider its family $\{\mathcal{M}_{t}\}_{t\in\mathbb{R}}$. 
\begin{defn}[Forward-backward composition]
\begin{leftbar} Let $\{\mathcal{M}_{t}\}_{t\in\mathbb{R}}$ be a
family of a sub-$\sigma$-algebras for a 2-dimensional stochastic
process $\hat{\gamma}(t,\bullet)\coloneqq(\hat{X}(t,\bullet),\hat{Y}(t,\bullet))$
on $\left(\mathit{\Omega},\mathcal{F},\mathscr{P}\right)$. Then,
$\hat{\gamma}(t,\bullet)$ is called $\mathcal{M}_{t}/\mathscr{B}(\mathbb{R}^{2})$-measurable
or $\mathcal{P}_{t}\otimes\mathcal{F}_{t}/\mathscr{B}(\mathbb{R}^{2})$-measurable,
when $\hat{X}(t,\bullet)$ is $\mathcal{P}_{t}/\mathscr{B}(\mathbb{R})$-measurable
and $\hat{Y}(t,\bullet)$ is $\mathcal{F}_{t}/\mathscr{B}(\mathbb{R})$-measurable.
If $\{\hat{\gamma}(t,\bullet)\}_{t\in\mathbb{R}}$ is $\mathcal{M}_{t}/\mathscr{B}(\mathbb{R}^{2})$-measurable
for all $t$, $\hat{\gamma}(\circ,\bullet)$ is called $\{\mathcal{M}_{t}\}$-adapted
or $\{\mathcal{P}_{t}\otimes\mathcal{F}_{t}\}$-adapted. On the contrary
when a 2-dimensional $\{\mathcal{M}_{t}\}$-adapted process is given,
it can be decomposed into 1-dimensional $\{\mathcal{P}_{t}\}$ and
$\{\mathcal{F}_{t}\}$-adapted processes. \end{leftbar}
\end{defn}
For a $\mathscr{B}(\mathbb{R}^{2})/\mathscr{B}(X)$-measurable function
$f:\mathbb{R}^{2}\rightarrow X$ for a certain topological space $X$,
$f(\hat{\gamma}(t,\bullet))$ is $\mathcal{M}_{t}/\mathscr{B}(X)$-measurable.
Hence, $a(\hat{\gamma}(t,\bullet))\coloneqq(a_{+}(\hat{\gamma}(t,\bullet),a_{-}(\hat{\gamma}(t,\bullet))$
is $\mathcal{M}_{t}/\mathscr{B}(\mathbb{R}^{2})$-measurable. Then,
consider an It\^{o} formula of a function $f\in C^{2}(\mathbb{R}^{2})$
on $\hat{\gamma}(\circ,\omega)$. 
\begin{eqnarray}
f(\hat{\gamma}(b,\omega))-f(\hat{\gamma}(a,\omega)) & = & \int_{a}^{b}\left[\frac{\partial f}{\partial x}(\hat{\gamma}(t,\omega))d_{+}\hat{X}(\hat{\gamma}(t,\omega))+\frac{\partial f}{\partial y}(\hat{\gamma}(t,\omega))d_{-}\hat{Y}(\hat{\gamma}(t,\omega))\right]\nonumber \\
 &  & +\frac{\theta^{2}}{2}\int_{a}^{b}\left[\frac{\partial^{2}f}{\partial x\partial x}(\hat{\gamma}(t,\omega))-\frac{\partial^{2}f}{\partial y\partial y}(\hat{\gamma}(t,\omega))\right]dt\,\,\mathrm{a.s.}
\end{eqnarray}

As the next step, consider a $\{\mathcal{F}_{t}\otimes\mathcal{P}_{t}\otimes\mathcal{P}_{t}\otimes\mathcal{P}_{t}\}$-adapted
$\hat{x}(\circ,\bullet)$, namely,
\begin{equation}
d_{[i]}\hat{x}^{i=0,1,2,3}(t,\omega)\coloneqq a^{i}(\hat{x}(t,\omega))dt+\theta\times dw^{i}(t,\omega)\,,\label{eq: mod-Wiener}
\end{equation}
Where, $\hat{x}^{0}(\circ,\bullet)$ is $\{\mathcal{F}_{t}\}$-adapted
and $\{\hat{x}^{i}(\circ,\bullet)\}_{i=1,2,3}$ are $\{\mathcal{P}_{t}\}$-adapted.
Hence, $\{w^{i}(\circ,\bullet)\}_{i=0,1,2,3}$ is the set of a 1-dimensional
$\{\mathcal{F}_{t}\}$-Wiener process $w^{0}(\circ,\bullet)$ and
a 3-dimensional $\{\mathcal{P}_{t}\}$-Wiener process $\{w^{i}(\circ,\bullet)\}_{i=1,2,3}$
(see its characteristics in {\bf Definition \ref{(-g)-Wiener}} and
{\bf Theorem \ref{(-g)-Wiener-Ito}}). Then, the It\^{o} formula
of a function $f\in C^{2}(\mathbb{R}^{2})$ with respect to $\hat{x}(\circ,\omega)$
is,
\begin{eqnarray}
f(\hat{x}(t,\omega)) & = & f(\hat{x}(t_{0},\omega))+\sum_{i=0,1,2,3}\int_{t_{0}}^{t}\partial_{i}f(\hat{x}(t',\omega))\cdot d_{[i]}\hat{x}^{i}(t',\omega)\nonumber \\
 &  & -\frac{\theta^{2}}{2}\int_{t_{0}}^{t}\left[\partial_{0}^{2}-\partial_{1}^{2}-\partial_{2}^{2}-\partial_{3}^{2}\right]f(\hat{x}(t',\omega))dt'\,\,\mathrm{a.s.}
\end{eqnarray}

\begin{defn}[$(-g)$-Wiener process]
\begin{leftbar}\label{(-g)-Wiener}When $a=0$ and $\theta=1$ on
(\ref{eq: mod-Wiener}), namely, the solution of 
\begin{equation}
d_{[i]}\hat{x}^{i=0,1,2,3}(t,\omega)=dw^{i}(t,\omega)\,,
\end{equation}
is called a $\{\mathcal{F}_{t}\otimes\mathcal{P}_{t}\otimes\mathcal{P}_{t}\otimes\mathcal{P}_{t}\}$-$(-g)$-Wiener
process denoted by $w_{(-g)}(\circ,\bullet)$ such that $w_{(-g)}^{0}(\circ,\bullet)$
is a 1-dimensional $\{\mathcal{F}_{t}\}$-Wiener process and $\{w_{(-g)}^{i}(\circ,\bullet)\}_{i=1,2,3}$
is a 3-dimensional $\{\mathcal{P}_{t}\}$-Wiener process, satisfying
the following for $i,j=0,1,2,3$:
\begin{equation}
\begin{array}{cc}
\mathbb{E}\left\llbracket \int_{t}^{t+\delta t}dw_{(-g)}^{i}(t',\bullet)\right\rrbracket =0\,, & \mathbb{E}\left\llbracket \int_{t}^{t+\delta t}dw_{(-g)}^{i}(t',\bullet)\times\int_{t}^{t+\delta t}dw_{(-g)}^{j}(t'',\bullet)\right\rrbracket =\delta^{ij}\times\delta t\end{array}
\end{equation}
\end{leftbar}
\end{defn}
The name of ''$(-g)$'' derives from the following characteristics:
\begin{thm}[It\^{o} formula and Fokker-Planck equation on $w_{(-g)}(\circ,\bullet)$]
\begin{leftbar}\label{(-g)-Wiener-Ito} For $g=\mathrm{diag}(+1,-1,-1,-1)$,
$w_{(-g)}(\circ,\bullet)$ induces the It\^{o} formula of a function
$f\in C^{2}(\mathbb{R}^{4})$
\begin{equation}
df(w_{(-g)}(t,\omega))=\sum_{i=0}^{3}\frac{\partial}{\partial x^{i}}f(w_{(-g)}(t,\omega))\cdot dw_{(-g)}^{i}(t,\omega)+\frac{\lambda^{2}}{2}\sum_{i,j=0}^{3}(-g^{ij})\frac{\partial^{2}}{\partial x^{i}\partial x^{j}}f(w_{(-g)}(\tau,\omega))\cdot d\tau\,\,\mathrm{a.s.}
\end{equation}
and the following Fokker-Planck equation for its probability density
$p\in C^{2,1}(\mathbb{R}^{4}\times\mathbb{R})$
\begin{equation}
\partial_{t}p(x,t)=\frac{1}{2}\sum_{i,j=0}^{3}(-g^{ij})\frac{\partial^{2}}{\partial x^{i}\partial x^{j}}p(x,t)\,.
\end{equation}
\end{leftbar}
\end{thm}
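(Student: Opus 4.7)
The plan is to view $w_{(-g)}(\circ,\bullet)$ as a concrete instance of the forward-backward composition developed earlier in this section, and to derive both assertions from the two It\^{o} formulas (\ref{eq: Ito-formula-2}) and (\ref{eq: Ito-formula-3}) already proved. Recall that $w_{(-g)}^{0}(\circ,\bullet)$ is $\{\mathcal{F}_{t}\}$-adapted while $\{w_{(-g)}^{i}(\circ,\bullet)\}_{i=1,2,3}$ is $\{\mathcal{P}_{t}\}$-adapted, so the joint process lives on the product filtration $\{\mathcal{F}_{t}\otimes\mathcal{P}_{t}\otimes\mathcal{P}_{t}\otimes\mathcal{P}_{t}\}$, which is a direct four-component generalisation of the 2-dimensional construction $\hat{\gamma}=(\hat{X},\hat{Y})$.

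For the It\^{o} formula, I would apply the forward rule to the three spatial components and the backward rule to the temporal component, then add the results. The first-order part simply assembles $\sum_{i=0}^{3}\partial_{i}f\cdot dw_{(-g)}^{i}$, since both rules contribute the analogous linear term. The second-order part is controlled by the quadratic-variation identities $dw_{(-g)}^{i}\cdot dw_{(-g)}^{j}=\delta^{ij}\,dt$ together with the cross-filtration rule $dw_{+}\cdot dw_{-}=0$, so every mixed spatial--temporal contribution vanishes. The forward formula produces $+\tfrac{1}{2}\sum_{i=1}^{3}\partial_{i}^{2}f\cdot dt$ for the spatial block, whereas the backward formula carries the opposite sign, giving $-\tfrac{1}{2}\partial_{0}^{2}f\cdot dt$ from the temporal block. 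These combine into $\tfrac{1}{2}[-\partial_{0}^{2}+\partial_{1}^{2}+\partial_{2}^{2}+\partial_{3}^{2}]f\cdot dt$, which is exactly $\tfrac{1}{2}\sum_{i,j=0}^{3}(-g^{ij})\partial_{i}\partial_{j}f\cdot dt$ because $-g=\mathrm{diag}(-1,+1,+1,+1)$ is diagonal. This explains the ``$(-g)$'' in the name.

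For the Fokker-Planck equation, I would test against an arbitrary $f\in C_{c}^{\infty}(\mathbb{R}^{4})$ and evaluate $d\mathbb{E}\llbracket f(w_{(-g)}(t,\bullet))\rrbracket/dt$ in two ways. On the one hand, writing the expectation as $\int f(x)p(x,t)\,dx$ gives $\int f(x)\partial_{t}p(x,t)\,dx$. On the other hand, taking expectation of the It\^{o} formula kills the stochastic integrals (they have zero mean) and leaves $\tfrac{1}{2}\int\sum_{i,j}(-g^{ij})\partial_{i}\partial_{j}f(x)\,p(x,t)\,dx$. Two integrations by parts, permitted by the compact support of $f$, transfer the derivatives onto $p$, and the density of test functions yields the stated PDE.

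The most delicate step will be the Fokker-Planck derivation, specifically reconciling the coexistence of a forward filtration for the spatial coordinates and a backward filtration for the temporal coordinate under a single evolution parameter $t$. The arguments proved earlier for $\hat{X}(\circ,\bullet)$ and $\hat{Y}(\circ,\bullet)$ use $\lim_{\varDelta t\rightarrow0+}\mathbb{E}\llbracket\int_{t}^{t+\varDelta t}d_{+}f/\varDelta t\,|\,\mathcal{P}_{t}\rrbracket$ and $\lim_{\varDelta t\rightarrow0+}\mathbb{E}\llbracket\int_{t-\varDelta t}^{t}d_{-}f/\varDelta t\,|\,\mathcal{F}_{t}\rrbracket$ respectively, and these two limits must be combined consistently for the joint process via the tower property on the product $\sigma$-algebra $\mathcal{P}_{t}\otimes\mathcal{F}_{t}\otimes\cdots$. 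One has to verify that the object $p(x,t)$ is the genuine joint density of the four components at parameter $t$, independent of which component's filtration is used to take the limit, so that the differing signs of the forward and backward second derivatives assemble into the Minkowski-like indefinite generator without ambiguity.
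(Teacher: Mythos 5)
Your proposal is correct and follows essentially the same route as the paper: the theorem is a direct specialization (with $a=0$, $\theta=1$) of the composite forward--backward It\^{o} formula already assembled for $\{\mathcal{F}_{t}\otimes\mathcal{P}_{t}\otimes\mathcal{P}_{t}\otimes\mathcal{P}_{t}\}$-adapted processes, and the Fokker--Planck equation follows by the same expectation-and-integration-by-parts duality used for the one-dimensional forward and backward cases. Note that your coefficient $\tfrac{1}{2}$ in the second-order term is the one consistent with $\theta=1$ and with the stated Fokker--Planck equation; the $\tfrac{\lambda^{2}}{2}$ appearing in the theorem's It\^{o} formula is evidently a typographical slip anticipating the later rescaling $y_{\pm}=\lambda W_{\pm}$.
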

The following is the basic idea to describe a relativistic kinematics.\begin{leftbar}
\begin{defn}[$\{\mathscr{P}_{\tau}\}$ and $\{\mathscr{F}_{\tau}\}$]
\label {P_and_F}For $\left(\mathit{\Omega},\mathcal{F},\mathscr{P}\right)$,
consider $\{\mathscr{P}_{\tau}\}_{\tau\in\mathbb{R}}$ a family of
composited sub-$\sigma$-algebras $\mathscr{P}_{\tau\in\mathbb{R}}\coloneqq\mathcal{F}_{t}\otimes\mathcal{P}_{t}\otimes\mathcal{P}_{t}\otimes\mathcal{P}_{t}\in\mathcal{F}$
for a 4-dimensional stochastic process. By introducing a monotonically
decreasing function $f:\mathbb{R}\rightarrow\mathbb{R}$, $\{\mathcal{F}_{f(\tau)}\otimes\mathcal{P}_{f(\tau)}\otimes\mathcal{P}_{f(\tau)}\otimes\mathcal{P}_{f(\tau)}\}_{\tau\in\mathbb{R}}$
imposes $\{\mathscr{F}_{\tau}\}_{\tau\in\mathbb{R}}$, a family of
$\mathscr{F}_{\tau\in\mathbb{R}}\coloneqq\mathcal{P}_{\tau}\otimes\mathcal{F}_{\tau}\otimes\mathcal{F}_{\tau}\otimes\mathcal{F}_{\tau}\in\mathcal{F}$.
\end{defn}
\end{leftbar}\begin{leftbar}
\begin{defn}[$W_{+}(\circ,\bullet)$ and $W_{-}(\circ,\bullet)$]
Let $W_{+}(\circ,\bullet)$ be a $\{\mathscr{P}_{\tau}\}$-$(-g)$-Wiener
process, namely, $w_{(-g)}(\circ,\bullet)$. Then, an $\{\mathscr{F}_{\tau}\}$-$(-g)$
Wiener process $W_{-}(\circ,\bullet)$ is imposed by {\bf Definition \ref {P_and_F}}.
Its satisfies the following relations for $\mu,\nu=0,1,2,3$ and a
function $f\in C^{2}(\mathbb{R}^{4})$:
\begin{equation}
\begin{array}{cc}
\mathbb{E}\left\llbracket \int_{\tau}^{\tau+\delta\tau}dW_{\pm}^{\mu}(\tau',\bullet)\right\rrbracket =0\,, & \mathbb{E}\left\llbracket \int_{\tau}^{\tau+\delta\tau}dW_{\pm}^{\mu}(\tau',\bullet)\times\int_{\tau}^{\tau+\delta\tau}dW_{\pm}^{\mu}(\tau'',\bullet)\right\rrbracket =\delta^{\mu\nu}\times\delta\tau\,,\end{array}
\end{equation}
\begin{equation}
f(W_{\pm}(\tau_{b},\omega))-f(W_{\pm}(\tau_{a},\omega))=\int_{\tau_{a}}^{\tau_{b}}\partial_{\mu}f(W_{\pm}(\tau,\omega))dW_{\pm}^{\mu}(\tau,\omega)\mp\frac{\lambda^{2}}{2}\int_{\tau_{a}}^{\tau_{b}}\partial_{\mu}\partial^{\mu}f(W_{\pm}(\tau,\omega))d\tau\,\,\mathrm{a.s.}
\end{equation}

\end{defn}
\end{leftbar}

\subsection{Dual-progressively measurable process $\hat{x}(\circ,\bullet)$}

Let $\hat{x}(\circ,\bullet)$ be a stochastic process as a relativistic
kinematics extended from 3-dimensional Nelson's (R0), (R1), (R2),
(R3), (S1), (S2) and (S2)-processes to 4-dimensional one.  For our
convenience, the coordinate mappings $\{\varphi_{E}^{A}\}$ is introduced
again, such that the index $A$ becomes $A=\mu$ if $E=\mathbb{V}_{\mathrm{M}}^{\mathrm{4}}$
or $E=\mathbb{A}^{4}(\mathbb{V_{\mathrm{M}}^{\mathrm{4}}},g)$ with
its origin and $A=\mu\nu$ when $E=\mathbb{V}_{\mathrm{M}}^{\mathrm{4}}\otimes\mathbb{V}_{\mathrm{M}}^{\mathrm{4}}$
($\mu,\nu=0,1,2,3$):
\begin{defn}[(R0)-process]
\begin{leftbar}For $\left(\mathit{\Omega},\mathcal{F},\mathscr{P}\right)$,
a $\mathscr{B}(\mathbb{R})\times\mathcal{F}/\mathscr{B}(\mathbb{A}^{4}(\mathbb{V_{\mathrm{M}}^{\mathrm{4}}},g))$-measurable
$\hat{x}(\circ,\bullet)$ is a 4-dimentional (R0)-process if each
of $\{\varphi_{\mathbb{A}^{4}(\mathbb{V_{\mathrm{M}}^{\mathrm{4}}},g)}^{\mu}\circ\hat{x}(\tau,\bullet)\}_{\mu=0,1,2,3}$
belongs to $L^{1}(\varOmega,\mathscr{P})$ and the mapping $\tau\mapsto\hat{x}(\tau,\omega)$
is almost surely continuous.\end{leftbar}
\end{defn}
By employing {\bf Definition \ref {P_and_F}} and $\mathbb{L}_{T}^{p}(E)$
a family of $\mathscr{B}(\mathbb{R})\times\mathcal{F}/\mathscr{B}(E)$-measurable
mappings for a topological space $E$, let us introduce $\mathcal{L}_{\mathrm{loc}}^{p}(\{\mathscr{P}_{\tau}\};E)$
and $\mathcal{L}_{\mathrm{loc}}^{p}(\{\mathscr{F}_{\tau}\};E)$ as
families of stochastic processes as follows:

\[
\mathbb{L}_{T}^{p}(E)\coloneqq\left\{ \hat{X}(\circ,\bullet)\left|\hat{X}(\circ,\bullet):\mathbb{R}\times\varOmega\rightarrow E,\,\sum_{A}\int_{T\subset\mathbb{R}}|\varphi_{E}^{A}\circ\hat{X}(\tau,\omega)|^{p}d\tau<\infty\,\,\mathrm{a.s.}\right.\right\} 
\]
\[
\mathcal{L}_{\mathrm{loc}}^{p}(\{\mathscr{P}_{\tau}\};E)\coloneqq\left\{ \left.\hat{X}(\circ,\bullet)\in\mathbb{L}_{(-\infty,\tau]}^{p}(E)\right|\begin{gathered}\hat{X}(\circ,\bullet)\mathrm{\,is\,}\{\mathscr{P}_{\tau}\}\mathchar`-\mathrm{adapted}\end{gathered}
\right\} 
\]
\[
\mathcal{L}_{\mathrm{loc}}^{p}(\{\mathscr{F}_{\tau}\};E)\coloneqq\left\{ \left.\hat{X}(\circ,\bullet)\in\mathbb{L}_{[\tau,\infty)}^{p}(E)\right|\forall\tau\in\mathbb{R},\,\hat{X}^{\mu}(\circ,\bullet)\mathrm{\,is\,}\{\mathscr{F}_{\tau}\}\mathchar`-\mathrm{adapted}\right\} 
\]
For the later discussion, $\mathring{\epsilon}$ is defined as $\mathring{\epsilon}=1$
when a 1-dimensional $\hat{x}^{\mu}(\circ,\bullet)=\varphi_{\mathbb{A}^{4}(\mathbb{V_{\mathrm{M}}^{\mathrm{4}}},g)}^{\mu}\circ\hat{x}(\circ,\bullet)$
is $\{\mathcal{P}_{\tau}\}$-adapted and $\mathring{\epsilon}=-1$
if $\hat{x}^{\mu}(\circ,\bullet)$ is $\{\mathcal{F}_{\tau}\}$-adapted.
\begin{leftbar}
\begin{defn}[(R1)-process]
If $\hat{x}(\circ,\bullet)$ is an (R0)-process and a following $\mathcal{V}_{+}(\hat{x}(\circ,\bullet))\in\mathcal{L}_{\mathrm{loc}}^{1}(\{\mathscr{P}_{\tau}\};\mathbb{V}_{\mathrm{M}}^{\mathrm{4}})$
exists, $\hat{x}(\circ,\bullet)$ is named an (R1)-process.
\begin{equation}
\mathcal{V}_{+}(\hat{x}(\tau,\omega))\coloneqq\underset{\delta t\rightarrow0+}{\lim}\mathbb{E}\left\llbracket \left.\frac{\hat{x}(\tau+\mathring{\epsilon}\times\delta\tau,\bullet)-\hat{x}(\tau,\bullet)}{\mathring{\epsilon}\times\delta\tau}\right|\mathcal{\mathscr{P}}_{\tau}\right\rrbracket (\omega)
\end{equation}
 The definitions of its each components are the following:
\begin{equation}
\left\{ \begin{array}{c}
\begin{gathered}\mathcal{V}_{+}^{0}(\hat{x}(\tau,\omega))=\underset{\delta t\rightarrow0+}{\lim}\mathbb{E}\left\llbracket \left.\frac{\hat{x}^{0}(\tau,\bullet)-\hat{x}^{0}(\tau-\delta\tau,\bullet)}{\delta\tau}\right|\mathcal{F}_{\tau}\right\rrbracket (\omega)\end{gathered}
\\
\begin{gathered}\mathcal{V}_{+}^{i=1,2,3}(\hat{x}(\tau,\omega))=\underset{\delta t\rightarrow0+}{\lim}\mathbb{E}\left\llbracket \left.\frac{\hat{x}^{i}(\tau+\delta\tau,\bullet)-\hat{x}^{i}(\tau,\bullet)}{\delta\tau}\right|\mathcal{P}_{\tau}\right\rrbracket (\omega)\end{gathered}
\end{array}\right.
\end{equation}

\end{defn}
\end{leftbar}\begin{leftbar}
\begin{defn}[(S1)-process]
If $\hat{x}(\circ,\bullet)$ is an (R1)-process and a following $\mathcal{V}_{-}(\hat{x}(\circ,\bullet))\in\mathcal{L}_{\mathrm{loc}}^{1}(\{\mathscr{F}_{\tau}\};\mathbb{V}_{\mathrm{M}}^{\mathrm{4}})$
exists, $\hat{x}(\circ,\bullet)$ is named an (S1)-process. 
\begin{eqnarray}
\mathcal{V}_{-}(\hat{x}(\tau,\omega)) & \coloneqq & \underset{\delta t\rightarrow0+}{\lim}\mathbb{E}\left\llbracket \left.\frac{\hat{x}(\tau+\mathring{\epsilon}\times\delta\tau,\bullet)-\hat{x}(\tau,\bullet)}{\mathring{\epsilon}\times\delta\tau}\right|\mathscr{F_{\tau}}\right\rrbracket (\omega)
\end{eqnarray}

\end{defn}
\end{leftbar}Then, an (S1)-process provides us the relation of the
stochastic integral on $\tau_{a}\leq\tau\leq\tau_{b}$:
\begin{eqnarray}
\hat{x}^{\mu}(\tau,\omega) & = & \hat{x}^{\mu}(\tau_{a},\omega)+\int_{\tau_{a}}^{\tau}d\tau'\,\mathcal{V}_{+}^{\mu}(\hat{x}(\tau',\omega))+\int_{\tau_{a}}^{\tau}dy_{+}^{\mu}(\tau',\omega)\label{eq: S1-a}\\
 & = & \hat{x}^{\mu}(\tau_{b},\omega)-\int_{\tau}^{\tau_{b}}d\tau'\,\mathcal{V}_{-}^{\mu}(\hat{x}(\tau',\omega))-\int_{\tau}^{\tau_{b}}dy_{-}^{\mu}(\tau',\omega)\label{eq: S1-b}
\end{eqnarray}
Where, $y_{+}(\circ,\bullet)$ and $y_{-}(\circ,\bullet)$ of martingales
are $\{\mathcal{\mathscr{P}}_{\tau}\}$-adapted and $\{\mathcal{\mathscr{F}}_{\tau}\}$-adapted,
satisfy the following basic relations:
\begin{equation}
\mathbb{E}\left\llbracket \left.y_{+}(\tau+\mathring{\epsilon}\times\delta\tau,\bullet)-y_{+}(\tau,\bullet)\right|\mathcal{\mathscr{P}}_{\tau}\right\rrbracket (\omega)=0
\end{equation}
\begin{equation}
\mathbb{E}\left\llbracket \left.y_{-}(\tau+\mathring{\epsilon}\times\delta\tau,\bullet)-y_{-}(\tau,\bullet)\right|\mathcal{\mathscr{F}}_{\tau}\right\rrbracket (\omega)=0
\end{equation}

\begin{defn}[(R2)-process]
\begin{leftbar}When $\hat{x}(\circ,\bullet)$ is an (R1)-process
and let $y_{+}(\circ,\bullet)$ be $y_{+}(\tau+\mathring{\epsilon}\times\delta\tau,\bullet)-y_{+}(\tau,\bullet)\in\mathcal{L}_{\mathrm{loc}}^{2}(\{\mathscr{P}_{\tau}\};\mathbb{V}_{\mathrm{M}}^{\mathrm{4}})$,
$\hat{x}(\circ,\bullet)$ is named an (R2)-process if 
\begin{equation}
\mathbb{E}\left\llbracket \left.y_{+}(\tau+\mathring{\epsilon}\times\delta\tau,\bullet)-y_{+}(\tau,\bullet)\right|\mathcal{\mathscr{P}}_{\tau}\right\rrbracket (\omega)=0
\end{equation}
and a following $\sigma_{+}^{2}(\tau,\bullet)\in\mathcal{L}_{\mathrm{loc}}^{1}(\{\mathscr{P}_{\tau}\};\mathbb{V}_{\mathrm{M}}^{\mathrm{4}}\otimes\mathbb{V}_{\mathrm{M}}^{\mathrm{4}})$
exists, such that $\tau\mapsto\sigma_{+}^{2}(\tau,\omega)$ is continuous:
\begin{equation}
\sigma_{+}^{2}(\tau,\omega)\coloneqq\underset{\delta t\rightarrow0+}{\lim}\mathbb{E}\left\llbracket \left.\frac{[y_{+}(\tau+\mathring{\epsilon}\times\delta\tau,\bullet)-y_{+}(\tau,\bullet)]\otimes[y_{+}(\tau+\mathring{\epsilon}\times\delta\tau,\bullet)-y_{+}(\tau,\bullet)]}{\delta\tau}\right|\mathcal{\mathscr{P}}_{\tau}\right\rrbracket (\omega)
\end{equation}
\end{leftbar}
\end{defn}
\begin{leftbar}
\begin{defn}[(S2)-process]
When $\hat{x}(\circ,\bullet)$ is an (R2) and (S1)-process and let
$y_{-}(\circ,\bullet)$ be $y_{-}(\tau+\mathring{\epsilon}\times\delta\tau,\bullet)-y_{-}(\tau,\bullet)\in\mathcal{L}_{\mathrm{loc}}^{2}(\{\mathscr{F}_{\tau}\};\mathbb{V}_{\mathrm{M}}^{\mathrm{4}})$,
$\hat{x}(\circ,\bullet)$ is called an (S2)-process if 
\begin{equation}
\mathbb{E}\left\llbracket \left.y_{-}(\tau+\mathring{\epsilon}\times\delta\tau,\bullet)-y_{-}(\tau,\bullet)\right|\mathcal{\mathscr{F}}_{\tau}\right\rrbracket (\omega)=0\,,
\end{equation}
and a following $\sigma_{-}^{2}(\tau,\bullet)\in\mathcal{L}_{\mathrm{loc}}^{1}(\{\mathscr{F}_{\tau}\};\mathbb{V}_{\mathrm{M}}^{\mathrm{4}}\otimes\mathbb{V}_{\mathrm{M}}^{\mathrm{4}})$
exists, such that $\tau\mapsto\sigma_{-}^{2}(\tau,\omega)$ is continuous:
\begin{equation}
\sigma_{-}^{2}(\tau,\omega)\coloneqq\underset{\delta t\rightarrow0+}{\lim}\mathbb{E}\left\llbracket \left.\frac{[y_{-}(\tau+\mathring{\epsilon}\times\delta\tau,\bullet)-y_{-}(\tau,\bullet)]\otimes[y_{-}(\tau+\mathring{\epsilon}\times\delta\tau,\bullet)-y_{-}(\tau,\bullet)]}{\delta\tau}\right|\mathcal{\mathscr{F}}_{\tau}\right\rrbracket (\omega)
\end{equation}

\end{defn}
\end{leftbar}\begin{leftbar}
\begin{defn}[(R3)-process]
If $\hat{x}(\circ,\bullet)$ is an (R2)-process and $\det\sigma_{+}^{2}(\tau,\omega)>0$
is almost surely satisfied for each $t\in\mathbb{R}$, then, $\hat{x}(\circ,\bullet)$
is named an (R3)-process.
\end{defn}
\end{leftbar}\begin{leftbar}
\begin{defn}[(S3)-process]
If $\hat{x}(\circ,\bullet)$ is an (R3) and (S2)-process, $\det\sigma_{-}^{2}(\tau,\omega)>0$
is almost surely satisfied for each $t\in\mathbb{R}$, then, $\hat{x}(\circ,\bullet)$
is called an(S3)-process.
\end{defn}
\end{leftbar}

The discussion up to here is the simple extension from the original
idea by Nelson to a 4-dimensional composited process of $\{\mathscr{P}_{\tau}\}$
and $\{\mathscr{F}_{\tau}\}$. Where, $y_{\pm}(\tau,\omega)\coloneqq\lambda\times W_{\pm}(\tau,\omega)$
for $\lambda>0$ satisfies the above (S3) processes \cite{Nelson(2001_book),Nelson(1985_book)}.
Let $\{\mathcal{\mathscr{P}}_{\tau}\}$-progressive be a $\mathscr{B}((-\infty,\tau])\times\mathcal{\mathscr{P}}_{\tau}/\mathscr{B}(X)$-measurable
process and $\{\mathcal{\mathscr{F}}_{\tau}\}$-progressive be a $\mathscr{B}([\tau,\infty))\times\mathcal{\mathscr{F}}_{\tau}/\mathscr{B}(X)$-measurable
process for all $\tau\in\mathbb{R}$.
\begin{defn}[D-progressive $\hat{x}(\circ,\bullet)$]
\begin{leftbar}\label{D-progressive}A 4-dimentional (S3)-process
$\hat{x}(\circ,\bullet)$ on $\left(\mathit{\Omega},\mathcal{F},\mathscr{P}\right)$
is named ``the dual-progressively measurable process'', or by shortening
``D-progressive'' and also ``the D-process'' when $y_{\pm}(\circ,\bullet)\coloneqq\lambda\times W_{\pm}(\circ,\bullet)$
with respect to $\lambda>0$. For  $\tau_{a}\leq\tau\leq\tau_{b}$,
$\hat{x}(\circ,\bullet)$ is expressed by the following $\{\mathcal{\mathscr{P}}_{\tau}\}$-progressive
and $\{\mathcal{\mathscr{F}}_{\tau}\}$-progressive process: 
\begin{eqnarray}
\hat{x}^{\mu}(\tau,\omega) & = & \hat{x}^{\mu}(\tau_{a},\omega)+\int_{\tau_{a}}^{\tau}d\tau'\,\mathcal{V}_{+}^{\mu}(\hat{x}(\tau',\omega))+\lambda\times\int_{\tau_{a}}^{\tau}dW_{+}^{\mu}(\tau',\omega)\label{eq: Ito-path1}\\
 & = & \hat{x}^{\mu}(\tau_{b},\omega)-\int_{\tau}^{\tau_{b}}d\tau'\,\mathcal{V}_{-}^{\mu}(\hat{x}(\tau',\omega))-\lambda\times\int_{\tau}^{\tau_{b}}dW_{-}^{\mu}(\tau',\omega)\label{eq: Ito-path2}
\end{eqnarray}
These (\ref{eq: Ito-path1}-\ref{eq: Ito-path2}) is regarded as the
solution of the following stochastic differential equation:
\begin{equation}
\boxed{\ensuremath{d\hat{x}^{\mu}(\tau,\omega)=\mathcal{V}_{\pm}^{\mu}(\hat{x}(\tau,\omega))d\tau+\lambda\times dW_{\pm}^{\mu}(\tau,\omega)}}\label{eq:2-1}
\end{equation}
\end{leftbar} 
\end{defn}
For $\delta\tau>0$, let $d_{\pm}\hat{x}^{\mu}(\tau,\omega)$ be defined
by $\int_{\tau}^{\tau+\epsilon\mathring{\epsilon}\times\delta\tau}d_{\epsilon}\hat{x}^{\mu}(\tau',\omega)\coloneqq\hat{x}^{\mu}(\tau+\epsilon\mathring{\epsilon}\times\delta\tau,\omega)-\hat{x}^{\mu}(\tau,\omega)$
with the signature $\epsilon=\pm$. Since D-progressive $\hat{x}(\tau,\bullet)$
is $\mathcal{\mathscr{P}}_{\tau}/\mathscr{B}(\mathbb{A}^{4}(\mathbb{V_{\mathrm{M}}^{\mathrm{4}}},g))$
and $\mathcal{\mathscr{F}}_{\tau}/\mathscr{B}(\mathbb{A}^{4}(\mathbb{V_{\mathrm{M}}^{\mathrm{4}}},g))$-measurable
for all $\tau$, the following theorem is imposed:
\begin{thm}
\begin{leftbar}\label{D-progressive_2}A D-progressive $\hat{x}(\circ,\bullet)$
is $\{\mathcal{\mathscr{P}}_{\tau}\cap\mathcal{\mathscr{F}}_{\tau}\}$-adapted.\end{leftbar} 
\end{thm}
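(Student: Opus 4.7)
The plan is to exploit the dual representation of the D-progressive process directly. The preamble to the theorem already records the essential fact: for each fixed $\tau$, the random variable $\hat{x}(\tau,\bullet)$ is simultaneously $\mathscr{P}_{\tau}/\mathscr{B}(\mathbb{A}^{4}(\mathbb{V}_{\mathrm{M}}^{\mathrm{4}},g))$-measurable and $\mathscr{F}_{\tau}/\mathscr{B}(\mathbb{A}^{4}(\mathbb{V}_{\mathrm{M}}^{\mathrm{4}},g))$-measurable. The task is then the purely set-theoretic step of upgrading this pair of measurabilities to joint measurability with respect to the intersection family $\{\mathscr{P}_{\tau}\cap\mathscr{F}_{\tau}\}_{\tau\in\mathbb{R}}$.

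First I would recall why each individual measurability holds, following Definition \ref{D-progressive}: equation (\ref{eq: Ito-path1}) expresses $\hat{x}(\tau,\omega)$ as $\hat{x}(\tau_{a},\omega)$ plus an integral in $\tau'\in[\tau_{a},\tau]$ of the $\{\mathscr{P}_{\tau}\}$-progressive drift $\mathcal{V}_{+}(\hat{x}(\tau',\omega))$ and an It\^{o} increment of the $\{\mathscr{P}_{\tau}\}$-$(-g)$-Wiener process $W_{+}$. By taking $\tau_{a}\to-\infty$ (using the extension to $\mathbb{R}$ already established for Wiener processes), every summand is $\mathscr{P}_{\tau}$-measurable, giving $\mathscr{P}_{\tau}$-measurability of $\hat{x}(\tau,\bullet)$. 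Symmetrically, (\ref{eq: Ito-path2}) displays $\hat{x}(\tau,\omega)$ as $\hat{x}(\tau_{b},\omega)$ minus $\{\mathscr{F}_{\tau}\}$-progressive integrals against $d\tau'$ and $dW_{-}$; letting $\tau_{b}\to+\infty$ yields $\mathscr{F}_{\tau}$-measurability.

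Next I would invoke the elementary measure-theoretic fact that the intersection of two $\sigma$-algebras on $\varOmega$ is itself a $\sigma$-algebra, and that if a mapping $\hat{x}(\tau,\bullet):\varOmega\to\mathbb{A}^{4}(\mathbb{V}_{\mathrm{M}}^{\mathrm{4}},g)$ is measurable with respect to both $\mathscr{P}_{\tau}$ and $\mathscr{F}_{\tau}$, then for every $B\in\mathscr{B}(\mathbb{A}^{4}(\mathbb{V}_{\mathrm{M}}^{\mathrm{4}},g))$ the preimage $\hat{x}(\tau,\bullet)^{-1}(B)$ lies in $\mathscr{P}_{\tau}$ and in $\mathscr{F}_{\tau}$, hence in $\mathscr{P}_{\tau}\cap\mathscr{F}_{\tau}$. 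This gives $\mathscr{P}_{\tau}\cap\mathscr{F}_{\tau}/\mathscr{B}(\mathbb{A}^{4}(\mathbb{V}_{\mathrm{M}}^{\mathrm{4}},g))$-measurability of $\hat{x}(\tau,\bullet)$ at every fixed $\tau\in\mathbb{R}$, which is exactly the definition of $\{\mathscr{P}_{\tau}\cap\mathscr{F}_{\tau}\}$-adaptedness.

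I do not anticipate a serious obstacle, since the statement is a direct corollary of the two representations in Definition \ref{D-progressive}. The only mild subtlety is notational: one must check that the ``intersection'' $\mathscr{P}_{\tau}\cap\mathscr{F}_{\tau}$ is read as sets of events in $\mathcal{F}$ (not as tensor products of the component families $\mathcal{P}_{t}$ and $\mathcal{F}_{t}$ componentwise), so that the standard intersection-of-$\sigma$-algebras argument applies on the ambient probability space $(\varOmega,\mathcal{F},\mathscr{P})$. Once that convention is fixed, no further estimate or approximation is needed.
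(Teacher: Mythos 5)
Your proposal is correct and follows essentially the same route as the paper, which justifies the theorem in a single sentence immediately before its statement: since $\hat{x}(\tau,\bullet)$ is both $\mathscr{P}_{\tau}/\mathscr{B}(\mathbb{A}^{4}(\mathbb{V}_{\mathrm{M}}^{\mathrm{4}},g))$- and $\mathscr{F}_{\tau}/\mathscr{B}(\mathbb{A}^{4}(\mathbb{V}_{\mathrm{M}}^{\mathrm{4}},g))$-measurable for all $\tau$, every preimage of a Borel set lies in the $\sigma$-algebra $\mathscr{P}_{\tau}\cap\mathscr{F}_{\tau}$. Your write-up merely spells out the two measurabilities via (\ref{eq: Ito-path1}) and (\ref{eq: Ito-path2}) and the intersection-of-$\sigma$-algebras step, which is exactly the intended argument.
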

By using this {\bf Definition \ref{D-progressive}}, let us propose
the following conjecture.
\begin{conjecture}
\begin{leftbar}\label{conj_kinematics}A D-progressive $\hat{x}(\circ,\bullet)$
is a trajectory of a scalar electron satisfying the Klein-Gordon equation
when $\lambda=\sqrt{\hbar/m_{0}}$.\end{leftbar} 
\end{conjecture}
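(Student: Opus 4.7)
The plan is to translate the Nelson-type argument from the non-relativistic 3+1 setting into the 4-dimensional D-progressive setting, by deriving forward and backward Fokker-Planck equations from (\ref{eq:2-1}), repackaging them into a single complex equation, and recognising that equation as the Klein-Gordon equation under the identification of $\mathcal{V}^\mu$ stated in the Notation table. First I would apply the forward Itô formula with $W_+$, and separately the backward one with $W_-$, to a test function $f \in C^2(\mathbb{A}^{4}(\mathbb{V_{\mathrm{M}}^{\mathrm{4}}},g))$, then equate both expressions for $d\mathbb{E}\llbracket f(\hat{x}(\tau,\bullet))\rrbracket/d\tau$ using the probability density $\rho(x,\tau)$. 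Thanks to Theorem \ref{(-g)-Wiener-Ito}, the diffusion parts carry opposite signs governed by the Minkowski dual derivatives $\partial_\mu \partial^\mu$, yielding a pair
\begin{equation}
\partial_\tau \rho + \partial_\mu(\mathcal{V}_+^\mu \rho) = +\tfrac{\lambda^2}{2}\partial_\mu\partial^\mu \rho, \qquad \partial_\tau \rho + \partial_\mu(\mathcal{V}_-^\mu \rho) = -\tfrac{\lambda^2}{2}\partial_\mu\partial^\mu \rho.
\end{equation}
Half-sum and half-difference of these equations give a $\tau$-continuity equation for the mean velocity $v^\mu \coloneqq \tfrac{1}{2}(\mathcal{V}_+^\mu+\mathcal{V}_-^\mu)$ and the osmotic relation $u^\mu \coloneqq \tfrac{1}{2}(\mathcal{V}_+^\mu-\mathcal{V}_-^\mu) = \tfrac{\lambda^2}{2}\partial^\mu \ln \rho$, entirely parallel to Nelson's derivation.

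Next I would write $\phi \coloneqq \sqrt{\rho}\,\exp(iS/\hbar)$ and verify that the combination $\mathcal{V}^\mu = \tfrac{(1-i)}{2}\mathcal{V}_+^\mu + \tfrac{(1+i)}{2}\mathcal{V}_-^\mu = v^\mu + i u^\mu$ coincides with $i\lambda^2 \partial^\mu \ln\phi + (e/m_0)A^\mu$ once one identifies $v^\mu + (e/m_0)A^\mu = -(1/m_0)\partial^\mu S$. With this representation in hand, the dynamical half of the argument is to compute the Nelson mean acceleration $\mathfrak{D_\tau}\mathcal{V}^\mu$ from the forward/backward Itô formulas (the cross $dW_+^\mu dW_-^\nu$ terms vanish by the independence built into Definition \ref{P_and_F}) and impose the equation of motion $m_0 \mathfrak{D_\tau}\mathcal{V}^\mu = -e\,\mathcal{V}_\nu F^{\mu\nu}$ from the Introduction. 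Substituting the $\phi$-representation and multiplying through by $\phi$, a $\partial^\mu$ of the bracketed expression remains, and integrating this identity recovers
\begin{equation}
\bigl[(i\hbar\partial_\mu - eA_\mu)(i\hbar\partial^\mu - eA^\mu) - m_0^2 c^2\bigr]\phi = 0
\end{equation}
precisely when $\lambda^2 = \hbar/m_0$, which fixes the diffusion coefficient asserted in the conjecture.

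The converse direction, which is needed to upgrade the statement from a one-way implication to an equivalence, starts from a Klein-Gordon solution $\phi$, defines $\mathcal{V}^\mu$ by the notation-table formula, inverts the complex combination to obtain $\mathcal{V}_\pm^\mu = \mathrm{Re}\,\mathcal{V}^\mu \mp \mathrm{Im}\,\mathcal{V}^\mu$, and appeals to a strong-existence theorem for SDEs driven by $W_\pm$ to produce the D-progressive $\hat{x}(\circ,\bullet)$; one then checks the (R0)-(S3) hypotheses by reading them off the local regularity of $\partial^\mu \ln \phi$ and $A^\mu$.

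The main obstacle, and I expect it is exactly why the statement appears as a \textbf{Conjecture} rather than a \textbf{Theorem}, is the status of the proper time $\tau$. Unlike Nelson's non-relativistic $t$, here $\tau$ parametrises the trajectory but is absent from the Klein-Gordon equation, and $\phi^{*}\phi$ is not a $\tau$-conserved probability on $\mathbb{A}^{4}(\mathbb{V_{\mathrm{M}}^{\mathrm{4}}},g)$; one has to show that the $\tau$-dependence is an unphysical gauge that disappears upon forming the conserved current $j^\nu(x) = \mathbb{E}\llbracket\int d\tau\,\mathrm{Re}\,\mathcal{V}^\nu(x)\,\delta^{4}(x-\hat{x}(\tau,\bullet))\rrbracket$ already visible in the coupled Maxwell equation of the Introduction, and that this current matches the Klein-Gordon current $(i\hbar/2m_0)(\phi^{*}\partial^\nu\phi - \phi\,\partial^\nu\phi^{*}) - (e/m_0)A^\nu\phi^{*}\phi$. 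A secondary technical hurdle is the indefinite signature carried by the $(-g)$-Wiener process: it forces $\mathrm{Im}\,\mathcal{V}^0$ to be an osmotic time-component with the ``wrong'' sign relative to the spatial osmotic components, and one must confirm that this sign is exactly what turns the Laplacian $\nabla^2$ of the non-relativistic Nelson derivation into the d'Alembertian $\partial_\mu\partial^\mu$ rather than into a sign-indefinite operator that would spoil the identification with Klein-Gordon.
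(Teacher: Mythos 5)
Your proposal reaches the same terminal computation as the paper (substitute $\mathcal{V}^\mu = i\lambda^2\partial^\mu\ln\phi + (e/m_0)A^\mu$ into the stochastic equation of motion, observe that everything collapses to a total gradient $\tfrac{1}{2}\partial^\mu[\cdots]=0$, and integrate with constant $m_0^2c^2$ to get the Klein--Gordon equation, fixing $\lambda^2=\hbar/m_0$), and your kinematic preliminaries -- the pair of Fokker--Planck equations, the continuity equation for $\mathrm{Re}\{\mathcal{V}\}$, and the osmotic relation $\mathrm{Im}\{\mathcal{V}^\mu\}=\tfrac{\lambda^2}{2}\partial^\mu\ln p$ -- are exactly \textbf{Theorem \ref{FP eq}}. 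You also correctly anticipate that the $\tau$-integrated density must be identified with $\phi^*\phi$ for the currents to match, which is the paper's \textbf{Proposition \ref{p-density}}. Where you genuinely diverge is the origin of the dynamics: you \emph{postulate} the stochastic Newton--Lorentz law (Nelson's original route, and Nottale's), whereas the paper \emph{derives} it from the action integral of \textbf{Definition \ref{Action integral}} via the Euler--Lagrange (Yasue) equation of \textbf{Theorem \ref{EL_eq}}, whose proof rests on Nelson's partial-integration formula (\textbf{Lemma \ref{Nelson_partial}}). The variational route is longer but is what the paper actually needs, since the same action simultaneously yields the Maxwell equation with the stochastic current -- the field-generation mechanism that is the point of the volume; your postulational route cannot produce the radiated field without further input.

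Two concrete imprecisions in your dynamical step. First, you impose $m_0\mathfrak{D}_\tau\mathcal{V}^\mu=-e\,\mathcal{V}_\nu F^{\mu\nu}$, but the correct force term is $-e\,\hat{\mathcal{V}}_\nu F^{\mu\nu}$ with the operator-valued velocity $\hat{\mathcal{V}}^\mu=\mathcal{V}^\mu+\tfrac{i\lambda^2}{2}\partial^\mu$ of \eqref{eq: OP of complex V}; without that extra derivative the identity
\begin{equation*}
\mathfrak{D}_\tau\mathcal{V}^\mu+\frac{e}{m_0}\hat{\mathcal{V}}_\nu F^{\mu\nu}=\hat{\mathcal{V}}_\nu\,\partial^\mu\mathcal{V}^\nu
\end{equation*}
fails and the expression does not close into the total gradient you rely on. Second, the variational principle determines only the \emph{real part} of the equation of motion; the paper must introduce an arbitrary imaginary forcing $\tfrac{i}{2m_0}\partial^\mu f$ and argue on physical grounds that $f\equiv 0$ before the standard Klein--Gordon equation appears. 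Since you take the full complex equation as given, this gap is hidden in your version rather than addressed. Your converse direction (SDE existence from a given Klein--Gordon solution) and your diagnosis of the proper-time and signature issues go beyond anything the paper attempts -- the paper simply declares the conjecture ``demonstrated'' after \textbf{Theorem \ref{EOM of a particle}} -- and those observations are sound, though the existence claim would itself need the regularity of $\partial^\mu\ln\phi$ away from the nodes of $\phi$, which neither you nor the paper establishes.
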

The demonstration of {\bf Conjecture \ref{conj_kinematics}} is the
center of our main issue in this paper and its feasibility is shown
in {\bf Section \ref{Sect3}}. 

Let $\hat{\xi}_{\pm}(\tau,\omega)$ be the white noise as the time
derivatives of $W_{\pm}(\tau,\omega)$ in means of the generalized-function
satisfying $\int_{\mathbb{R}}d\tau\,d\Phi/d\tau(\tau,\omega)\cdot W_{\pm}^{\mu}(\tau,\omega)=-\int_{\mathbb{R}}d\tau\,\Phi(\tau,\omega)\cdot\hat{\xi}_{\pm}^{\mu}(\tau,\omega)$
with respect to a test function $\Phi$ for all $\omega$. By introducing
$d_{\pm}\hat{x}(\tau,\omega)$ as the RHS in (\ref{eq:2-1}), (\ref{eq:2-1})
is recognized as the summation of the drift velocity $\mathcal{V}_{\pm}(\hat{x}(\tau,\omega))$
and the randomness $\lambda\times\hat{\xi}_{\pm}^{\mu}(\tau,\omega)=\lambda\times dW_{\pm}^{\mu}/d\tau(\tau,\omega)$,
\begin{equation}
\frac{d_{\pm}\hat{x}^{\mu}}{d\tau}(\tau,\omega)=\mathcal{V}_{\pm}^{\mu}(\hat{x}(\tau,\omega))+\lambda\times\hat{\xi}_{\pm}^{\mu}(\tau,\omega)\,.\label{eq:2-2}
\end{equation}
Since $\mathbb{E}\llbracket\hat{\xi}_{+}^{\mu}(\tau,\bullet)|\mathcal{\mathscr{P}}_{\tau}\rrbracket=0$
and $\mathbb{E}\llbracket\hat{\xi}_{-}^{\mu}(\tau,\bullet)|\mathcal{\mathscr{F}}_{\tau}\rrbracket=0$,
the conditional expectation of (\ref{eq:2-2}) (the mean-derivative)
imposes $\mathscr{\mathcal{V}}_{\pm}$ its drift velocity:
\begin{equation}
\begin{array}{cc}
\begin{gathered}\mathbb{E}\left\llbracket \left.\frac{d_{+}\hat{x}^{\mu}}{d\tau}(\tau,\bullet)\right|\mathcal{\mathscr{P}}_{\tau}\right\rrbracket (\omega)=\mathcal{V}_{+}^{\mu}(\hat{x}(\tau,\omega))\,,\end{gathered}
 & \begin{gathered}\mathbb{E}\left\llbracket \left.\frac{d_{-}\hat{x}^{\mu}}{d\tau}(\tau,\bullet)\right|\mathscr{F_{\tau}}\right\rrbracket (\omega)=\mathcal{V}_{-}^{\mu}(\hat{x}(\tau,\omega))\end{gathered}
\end{array}
\end{equation}
 In general, a D-progressive $\hat{x}(\circ,\bullet)$ imposes the
following It\^{o} formula \cite{Arai(2005),Ito(1944)}.
\begin{lem}[It\^{o} formula]
\begin{leftbar}\label{Ito formula}Consider a $C^{2}$-function
$f:\mathbb{A}^{4}(\mathbb{V_{\mathrm{M}}^{\mathrm{4}}},g)\rightarrow\mathbb{C}$,
the following It\^{o} formula with respect to a D-progressive $\hat{x}(\circ,\bullet)$
is found; 
\begin{equation}
d_{\pm}f(\hat{x}(\tau,\omega))=\partial_{\mu}f(\hat{x}(\tau,\omega))d_{\pm}\hat{x}^{\mu}(\tau,\omega)\mp\frac{\lambda^{2}}{2}\partial_{\mu}\partial^{\mu}f(\hat{x}(\tau,\omega))d\tau\,\,\mathrm{a.s.}\label{eq:2-Ito-formula}
\end{equation}
This is given by the following stochastic integral, too:
\begin{eqnarray}
f(\hat{x}(\tau_{b},\omega))-f(\hat{x}(\tau_{a},\omega)) & = & \int_{\tau_{a}}^{\tau_{b}}d_{\pm}f(\hat{x}(\tau,\omega))\\
 & = & \int_{\tau_{a}}^{\tau_{b}}d_{\pm}\hat{x}^{\mu}(\tau,\omega)\,\partial_{\mu}f(\hat{x}(\tau,\omega))\mp\frac{\lambda^{2}}{2}\int_{\tau_{a}}^{\tau_{b}}d\tau\,\partial_{\mu}\partial^{\mu}f(\hat{x}(\tau,\omega))\,\,\mathrm{a.s.}\label{eq: Ito-integral}
\end{eqnarray}
\end{leftbar}
\end{lem}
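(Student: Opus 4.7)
The plan is to apply the composite It\^{o} formula from \textbf{Theorem \ref{(-g)-Wiener-Ito}} to the D-progressive process $\hat{x}(\circ,\bullet)$, enriched with the drift term $\mathcal{V}_{\pm}^{\mu}d\tau$. I handle $d_{+}$ first. From (\ref{eq: Ito-path1}), the D-progressive process is an It\^{o} process of the form (\ref{eq: Ito-process +}) with $a^{\mu}=\mathcal{V}_{+}^{\mu}$ and a noise coefficient $\lambda$ multiplying the $\{\mathscr{P}_{\tau}\}$-$(-g)$-Wiener process $W_{+}$. Because $\{\mathscr{P}_{\tau}\}=\mathcal{F}_{\tau}\otimes\mathcal{P}_{\tau}^{\otimes 3}$ mixes one backward factor with three forward factors, I would apply the forward It\^{o} formula (\ref{eq: Ito-formula-2}) on the spatial indices $\mu=1,2,3$ and the backward It\^{o} formula (\ref{eq: Ito-formula-3}) on the time index $\mu=0$, exactly as was done on the pure noise process in the text preceding \textbf{Definition \ref{(-g)-Wiener}}. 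Since the cross terms in $d_{+}\hat{x}^{\mu}\,d_{+}\hat{x}^{\nu}$ involving $d\tau\cdot d\tau$ and $d\tau\cdot dW_{+}^{\mu}$ vanish by the It\^{o} rule recalled before (\ref{eq: set of SDE eqs 2}), the drift $\mathcal{V}_{+}^{\mu}d\tau$ contributes only to the first-order term.

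The central step is then the quadratic piece $\lambda^{2}\,dW_{+}^{\mu}\,dW_{+}^{\nu}$. The forward components contribute $+\tfrac{\lambda^{2}}{2}\partial_{i}^{2}f\,d\tau$ for $i=1,2,3$, while the backward sign in (\ref{eq: Ito-formula-3}) turns the $\mu=0$ contribution into $-\tfrac{\lambda^{2}}{2}\partial_{0}^{2}f\,d\tau$. Summing these is precisely the mechanism by which \textbf{Theorem \ref{(-g)-Wiener-Ito}} produces the effective Lorentzian quadratic covariation $\tfrac{\lambda^{2}}{2}(-g^{\mu\nu})\partial_{\mu}\partial_{\nu}f\,d\tau=-\tfrac{\lambda^{2}}{2}\partial_{\mu}\partial^{\mu}f\,d\tau$, which is exactly the $\mp$-sign in (\ref{eq:2-Ito-formula}) for $d_{+}$. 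Combining with $\partial_{\mu}f\cdot d_{+}\hat{x}^{\mu}$ gives the forward It\^{o} formula.

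For $d_{-}$, I would reuse the duality already exploited in the derivation of (\ref{eq: Ito-formula-3}): the monotonically decreasing reparametrization from \textbf{Definition \ref{P_and_F}} sends $\{\mathscr{F}_{\tau}\}$-progressive objects to $\{\mathscr{P}_{\tau'}\}$-progressive ones and identifies $d_{-}\hat{x}(\tau,\omega)$ with $-d_{+}\hat{x}'(\tau',\omega)$ for an auxiliary process $\hat{x}'$. Applying the forward result already proved to $\hat{x}'$, then translating back, flips the sign of the second-order term and yields $+\tfrac{\lambda^{2}}{2}\partial_{\mu}\partial^{\mu}f\,d\tau$, matching the $\mp$ with $-$ on $d_{-}$. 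The integral form (\ref{eq: Ito-integral}) then follows by integrating (\ref{eq:2-Ito-formula}) on $[\tau_{a},\tau_{b}]$.

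I expect the main obstacle to be purely bookkeeping: tracking signs so that the three forward and one backward second-order contributions sum to the Minkowski combination $-g^{\mu\nu}\partial_{\mu}\partial_{\nu}=-\partial_{\mu}\partial^{\mu}$ rather than its Euclidean counterpart, and verifying rigorously that no additional cross terms survive between the drift $\mathcal{V}_{\pm}^{\mu}d\tau$ and the noise $\lambda\,dW_{\pm}^{\mu}$ when the composite $(\{\mathcal{P}_{t}\},\{\mathcal{F}_{t}\})$ It\^{o} rule $dw_{+}\cdot dw_{-}=0$ is invoked. Once this is checked, the result follows by specializing the general composite It\^{o} formula to the specific coefficients of \textbf{Definition \ref{D-progressive}}.
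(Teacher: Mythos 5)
Your proposal is correct and follows exactly the route the paper intends: the paper states this lemma without an explicit proof (citing It\^{o} and Arai), but the machinery it develops beforehand --- the forward formula (\ref{eq: Ito-formula-2}), the backward formula (\ref{eq: Ito-formula-3}) obtained by the decreasing reparametrization, the vanishing cross terms $dw_{+}\cdot dw_{-}=0$, and the $(-g)$-Wiener composition of \textbf{Theorem \ref{(-g)-Wiener-Ito}} giving the Lorentzian second-order term $-\tfrac{\lambda^{2}}{2}\partial_{\mu}\partial^{\mu}f$ --- is precisely what you assemble. Your sign bookkeeping (three forward spatial contributions $+\tfrac{1}{2}\partial_{i}^{2}$ against one backward temporal contribution $-\tfrac{1}{2}\partial_{0}^{2}$ for $d_{+}$, flipped for $d_{-}$) reproduces the $\mp\tfrac{\lambda^{2}}{2}\partial_{\mu}\partial^{\mu}f\,d\tau$ of (\ref{eq:2-Ito-formula}) correctly.
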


\subsection{Complex velocity}

In order to {\bf Theorem \ref{D-progressive_2}}, a D-progressively
measurable process is $\{\mathcal{\mathscr{P}}_{\tau}\cap\mathcal{\mathscr{F}}_{\tau}\}$-adapted.
Therefore, the superposition of $d_{+}$ and $d_{-}$ is introduced.
L. Nottale introduces the following complex differential $\hat{d}$
and the complex velocity $\mathscr{\mathcal{V}}(\hat{x}(\circ,\bullet))$
as the essential manners of quantum dynamics \cite{Nottale(2011)}.

\begin{defn}[Complex differential and velocity]
\begin{leftbar}Consider a $C^{2}$-function $f:\mathbb{A}^{4}(\mathbb{V_{\mathrm{M}}^{\mathrm{4}}},g)\rightarrow\mathbb{C}$
 and its It\^{o} formula $d_{\pm}f$ characterized by {\bf Lemma \ref{Ito formula}}.
Let $\hat{d}$ be the complex differential on a given D-progressive
$\hat{x}(\circ,\bullet)$: 
\begin{equation}
\hat{d}\coloneqq\frac{1-i}{2}d_{+}+\frac{1+i}{2}d_{-}\label{eq:2-3}
\end{equation}
\begin{equation}
\hat{d}f(\hat{x}(\tau,\omega))=\partial_{\mu}f(\hat{x}(\tau,\omega))\hat{d}\hat{x}^{\mu}(\tau,\omega)+\frac{i\lambda^{2}}{2}\partial^{\mu}\partial_{\mu}f(\hat{x}(\tau,\omega))d\tau\,\mathrm{a.s.}
\end{equation}
Then, consider a conditional expectation of the derivative given $\gamma_{\tau}\coloneqq\mathscr{P}_{\tau}\cap\mathscr{F}_{\tau}\subset\mathcal{F}$
is denoted by
\begin{equation}
\mathbb{E}\left\llbracket \left.\frac{\hat{d}f}{d\tau}(\hat{x}^{\mu}(\tau,\bullet))\right|\gamma_{\tau}\right\rrbracket (\omega)=\mathscr{\mathcal{V}}^{\mu}(\hat{x}(\tau,\omega))\partial_{\mu}f(\hat{x}(\tau,\omega))+\frac{i\lambda^{2}}{2}\partial^{\mu}\partial_{\mu}f(\hat{x}(\tau,\omega))\,,
\end{equation}
especially when $f(\hat{x}(\tau,\omega))=\hat{x}(\tau,\omega)$, it
derives the complex velocity $\mathscr{\mathcal{V}}\in\mathbb{V_{\mathrm{M}}^{\mathrm{4}}}\oplus i\mathbb{V_{\mathrm{M}}^{\mathrm{4}}}$,
\begin{equation}
\mathscr{\mathcal{V}}^{\mu}(\hat{x}(\tau,\omega))\coloneqq\mathbb{E}\left\llbracket \left.\frac{\hat{d}\hat{x}^{\mu}}{d\tau}(\tau,\bullet)\right|\gamma_{\tau}\right\rrbracket (\omega)=\frac{1-i}{2}\mathcal{V}_{+}^{\mu}(\hat{x}(\tau,\omega))+\frac{1+i}{2}\mathcal{V}_{-}^{\mu}(\hat{x}(\tau,\omega))\,.\label{eq: complex V}
\end{equation}
By choosing a $C^{2}$-function $\phi:\mathbb{A}^{4}(\mathbb{V_{\mathrm{M}}^{\mathrm{4}}},g)\rightarrow\mathbb{C}$
like Ref.\cite{Nottale(2011)}, the following is introduced: 
\begin{equation}
\mathcal{V}^{\alpha}(x)\coloneqq i\lambda^{2}\times\partial{}^{\alpha}\ln\phi(x)+\frac{e}{m_{0}}A{}^{\alpha}(x),\,\,x\in\hat{x}(\tau,\varOmega)\mathrm{\,\,for\,each\,}\tau\label{eq: Complex velocity}
\end{equation}
\end{leftbar}
\end{defn}
The the gauge invariance of scalar QED is found easily.
\begin{thm}[Gauge invariance of $\mathcal{V}$]
\begin{leftbar}\label{Gauge_inv_V}For a given $C^{1}$-function
$\varLambda:\mathbb{A}^{4}(\mathbb{V_{\mathrm{M}}^{\mathrm{4}}},g)\rightarrow\mathbb{R}$,
the complex velocity $\mathcal{V}^{\alpha}(x)$ satisfies the local
$U(1)$-gauge symmetry in the transformation $(\phi,A)\mapsto(\phi',A')$:
\begin{equation}
\begin{array}{ccc}
\phi'(x)=e^{-ie\varLambda(x)/\hbar}\times\phi(x)\,, &  & A'{}^{\alpha}(x)=A{}^{\alpha}(x)-\partial^{\alpha}\varLambda(x)\end{array}
\end{equation}
\end{leftbar}
\end{thm}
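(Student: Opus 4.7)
The plan is to verify the invariance by direct substitution: transform $\phi \mapsto \phi'$ and $A \mapsto A'$ inside the definition (\ref{eq: Complex velocity}), and check that the two $\partial^\alpha \Lambda$ contributions, one coming from the logarithmic derivative of the phase factor and the other from the shift in $A^\alpha$, cancel exactly. The essential input is that the prefactors of $\partial^\alpha\Lambda$ arising from the two sources must be equal and opposite, and this is precisely where the identification $\lambda^2 = \hbar/m_0$ (from \textbf{Conjecture \ref{conj_kinematics}}) enters.

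Concretely, first I would compute
\begin{equation}
\partial^\alpha \ln \phi'(x) = \partial^\alpha \ln \phi(x) - \frac{ie}{\hbar}\, \partial^\alpha \Lambda(x),
\end{equation}
using that $\Lambda$ is real $C^1$ and $\phi'(x) = e^{-ie\Lambda(x)/\hbar}\phi(x)$. Multiplying by $i\lambda^2$ converts the factor $-ie/\hbar$ into $+\lambda^2 e/\hbar = e/m_0$ once $\lambda^2 = \hbar/m_0$ is imposed. Thus the ``phase'' part of $\mathcal{V}'^\alpha$ picks up an additive term $(e/m_0)\partial^\alpha \Lambda(x)$ relative to $\mathcal{V}^\alpha$'s phase part.

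Next I would substitute $A'^\alpha = A^\alpha - \partial^\alpha \Lambda$ into the second summand of (\ref{eq: Complex velocity}), which contributes $-(e/m_0)\partial^\alpha \Lambda(x)$. Adding the two pieces shows the $\partial^\alpha \Lambda$ terms cancel and $\mathcal{V}'^\alpha(x) = \mathcal{V}^\alpha(x)$ pointwise for every $x \in \hat{x}(\tau,\Omega)$.

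There is no real obstacle here: the proof is a one-line algebraic check. The only conceptual point worth flagging is that the invariance is tied to the specific value of the diffusion coefficient $\lambda$; for a generic $\lambda$ the transformation law of $\phi$ would have to be adjusted to $e^{-ie\Lambda/(m_0\lambda^2)}\phi$ for cancellation to hold. This fixes the ``quantum'' normalization of $\lambda$ via gauge covariance and is consistent with \textbf{Conjecture \ref{conj_kinematics}}.
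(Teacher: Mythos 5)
Your direct substitution is correct and is exactly the ``easily found'' verification the paper has in mind (the paper states the theorem without writing out the computation): the term $+(\lambda^{2}e/\hbar)\partial^{\alpha}\varLambda$ from $i\lambda^{2}\partial^{\alpha}\ln\phi'$ cancels against $-(e/m_{0})\partial^{\alpha}\varLambda$ from $A'^{\alpha}$ once $\lambda^{2}=\hbar/m_{0}$, which the paper indeed already builds into the definition via (\ref{eq: 2-7}). Your closing remark that gauge covariance fixes the normalization of $\lambda$ consistently with \textbf{Conjecture \ref{conj_kinematics}} is a correct and worthwhile observation, not a gap.
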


\subsection{Fokker-Planck equations}

 Let us consider a $C^{2}$-function $f:\mathbb{A}^{4}(\mathbb{V_{\mathrm{M}}^{\mathrm{4}}},g)\rightarrow\mathbb{C}$
on $(\mathbb{A}^{4}(\mathbb{V_{\mathrm{M}}^{\mathrm{4}}},g),\mathscr{B}(\mathbb{A}^{4}(\mathbb{V_{\mathrm{M}}^{\mathrm{4}}},g)),\mu)$
and its expectation $\mathbb{E}\llbracket f(\hat{x}(\tau,\bullet))\rrbracket$
at $\tau$. Where, $\mu:\mathbb{A}^{4}(\mathbb{V_{\mathrm{M}}^{\mathrm{4}}},g)\rightarrow[0,\infty)$.
The probability density $p:\mathbb{A}^{4}(\mathbb{V_{\mathrm{M}}^{\mathrm{4}}},g)\times\mathbb{R}\rightarrow[0,\infty)$
with respect to $\hat{x}(\circ,\bullet)$ is introduced by a the following
relation at $\tau\in\mathbb{R}$: 
\begin{equation}
\mathscr{P}(\varOmega)\coloneqq\int_{\hat{x}(\tau,\varOmega)\subset\mathbb{A}^{4}(\mathbb{V_{\mathrm{M}}^{\mathrm{4}}},g)}d\mu(x)\,p(x,\tau)=1\label{eq: P-measure}
\end{equation}
Where, $\hat{x}(\tau,\varOmega)\coloneqq\mathrm{supp}(p(\circ,\tau))$
and it leads $p(\mathbb{A}^{4}(\mathbb{V_{\mathrm{M}}^{\mathrm{4}}},g)\backslash\hat{x}(\tau,\varOmega),\tau)=0$.
Since 
\begin{eqnarray}
\mathbb{E}\llbracket f(\hat{x}(\tau,\bullet))\rrbracket & \coloneqq & \int_{\mathit{\Omega}}d\mathscr{P}(\omega)\,f(\hat{x}(\tau,\omega))\nonumber \\
 & = & \int_{\mathbb{A}^{4}(\mathbb{V_{\mathrm{M}}^{\mathrm{4}}},g)}d\mu(x)\,f(x)p(x,\tau)\,,
\end{eqnarray}
the probability density 
\begin{equation}
p(x,\tau)=\mathbb{E}\left\llbracket \delta^{4}(x-\hat{x}(\tau,\bullet))\right\rrbracket 
\end{equation}
is regarded as the kernel of a linear functional $\{\mathbb{E}\llbracket f(\hat{x}(\tau,\bullet))\rrbracket\}_{\forall f}$.
Consider the derivative of it with respect to $\tau$,
\begin{equation}
\frac{d}{d\tau}\mathbb{E}\llbracket f(\hat{x}(\tau,\bullet))\rrbracket=\int_{\mathbb{A}^{4}(\mathbb{V_{\mathrm{M}}^{\mathrm{4}}},g)}d\mu(x)\,f(x)\partial_{\tau}p(x,\tau)\,.\label{eq:d/dt E=00005Bf=00005D}
\end{equation}
The LHS of this equation (\ref{eq:d/dt E=00005Bf=00005D}) along the
evolution $d_{\pm}\hat{x}(\tau,\omega)$ is considered as follows;
\begin{eqnarray}
\frac{d}{d\tau}\mathbb{E}\llbracket f(\hat{x}(\tau,\bullet))\rrbracket & = & \mathbb{E}\left\llbracket \mathcal{V}_{\pm}^{\mu}(\hat{x}(\tau,\bullet))\partial_{\mu}f(\hat{x}(\tau,\bullet))\mp\frac{\lambda^{2}}{2}\partial^{\mu}\partial_{\mu}f(\hat{x}(\tau,\bullet))\right\rrbracket \nonumber \\
 & = & \int_{\mathbb{A}^{4}(\mathbb{V_{\mathrm{M}}^{\mathrm{4}}},g)}d\mu(x)\,f(x)\left\{ -\partial_{\mu}[\mathcal{V}_{\pm}^{\mu}(x)p(x,\tau)]\mp\frac{\lambda^{2}}{2}\partial^{\mu}\partial_{\mu}p(x,\tau)\right\} \,.\label{eq:d/dt E=00005Bf=00005D 2}
\end{eqnarray}
For an arbitrary $C^{2}$-function $f$, the following Fokker-Planck
equations of a D-progressive $\hat{x}(\circ,\bullet)$ are derived.
\begin{thm}[Fokker-Planck equations]
\begin{leftbar}\label{FP eq}Consider a D-progressive $\hat{x}(\circ,\bullet)$
on $(\mathit{\Omega},\mathcal{F},\mathscr{P})$, there is the $C^{2,1}$-probability
density of $\hat{x}(\circ,\bullet)$ such that $p:\mathbb{A}^{4}(\mathbb{V_{\mathrm{M}}^{\mathrm{4}}},g)\times\mathbb{R}\rightarrow[0,\infty)$
satisfying the following Fokker-Planck equation: 
\begin{equation}
\partial_{\tau}p(x,\tau)+\partial_{\mu}[\mathcal{V}_{\pm}^{\mu}(x)p(x,\tau)]\pm\frac{\lambda^{2}}{2}\partial^{\mu}\partial_{\mu}p(x,\tau)=0\label{eq: Fokker-Planck}
\end{equation}
By using the definition of $\mathscr{\mathcal{V}}\in\mathbb{V_{\mathrm{M}}^{\mathrm{4}}}\oplus i\mathbb{V_{\mathrm{M}}^{\mathrm{4}}}$
(see (\ref{eq: complex V})), the superpositions of the ``$\pm$''-Fokker-Planck
equations are found:
\begin{equation}
\partial_{\tau}p(x,\tau)+\partial_{\mu}\left[\mathrm{Re}\{\mathcal{V}^{\mu}(x)\}p(x,\tau)\right]=0\label{eq: eq of continuity}
\end{equation}
\begin{equation}
\mathrm{Im}\{\mathcal{V}^{\mu}(x)\}=\begin{cases}
\begin{gathered}\frac{\lambda^{2}}{2}\times\partial^{\mu}\ln p(x,\tau)\end{gathered}
, & \begin{gathered}x\in\hat{x}(\tau,\varOmega)\end{gathered}
\\
\begin{gathered}\frac{\lambda^{2}}{2}\times\partial^{\mu}\ln\int_{\mathbb{R}}d\tau\,p(x,\tau)\end{gathered}
, & x\in\bigcup_{\tau\in\mathbb{R}}\hat{x}(\tau,\varOmega)
\end{cases}\label{eq: osmotic pressure1}
\end{equation}
Where, $\bigcup_{\tau\in\mathbb{R}}\hat{x}(\tau,\varOmega)=\mathrm{supp}(\int_{\mathbb{R}}d\tau\,p(\circ,\tau))\subset\mathbb{A}^{4}(\mathbb{V_{\mathrm{M}}^{\mathrm{4}}},g)$.\end{leftbar}
\end{thm}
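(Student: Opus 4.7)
The plan is to derive the Fokker-Planck equations by computing $d/d\tau\,\mathbb{E}\llbracket f(\hat{x}(\tau,\bullet))\rrbracket$ for an arbitrary test function $f\in C^{2}$ in two independent ways and equating the results. The two computations are already carried out in (\ref{eq:d/dt E=00005Bf=00005D}) and (\ref{eq:d/dt E=00005Bf=00005D 2}) immediately preceding the theorem statement: the first uses only the defining relation $\mathbb{E}\llbracket f(\hat{x}(\tau,\bullet))\rrbracket=\int d\mu(x)\,f(x)p(x,\tau)$ to produce $\int d\mu(x)\,f(x)\partial_{\tau}p(x,\tau)$; the second applies the It\^{o} formula of \textbf{Lemma \ref{Ito formula}} along $d_{\pm}\hat{x}$, takes the expectation so that the It\^{o} stochastic-integral terms vanish (using that $\mathbb{E}\llbracket\hat{\xi}_{\pm}^{\mu}(\tau,\bullet)|\mathscr{P}_{\tau}\rrbracket=0$ and the analogous relation for $\mathscr{F}_{\tau}$), and integrates by parts to transfer the derivatives from $f$ onto $p$. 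Subtracting the two expressions yields
\[
\int_{\mathbb{A}^{4}(\mathbb{V_{\mathrm{M}}^{\mathrm{4}}},g)}\!d\mu(x)\,f(x)\left\{\partial_{\tau}p(x,\tau)+\partial_{\mu}[\mathcal{V}_{\pm}^{\mu}(x)p(x,\tau)]\pm\tfrac{\lambda^{2}}{2}\partial^{\mu}\partial_{\mu}p(x,\tau)\right\}=0\,,
\]
and the arbitrariness of $f$ (restricted to $C_{c}^{\infty}\subset C^{2}$) forces the braced quantity to vanish pointwise, giving the two ``$\pm$''-Fokker-Planck equations (\ref{eq: Fokker-Planck}). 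A point worth stressing is that both equations govern the \emph{same} density $p(x,\tau)$ because, by \textbf{Theorem \ref{D-progressive_2}}, the D-progressive process $\hat{x}(\circ,\bullet)$ is $\{\mathscr{P}_{\tau}\cap\mathscr{F}_{\tau}\}$-adapted, so a single joint law encodes both the forward and the backward Markov structure.

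To obtain the superposition formulas (\ref{eq: eq of continuity}) and (\ref{eq: osmotic pressure1}), I would exploit the identity $\mathcal{V}=\tfrac{1-i}{2}\mathcal{V}_{+}+\tfrac{1+i}{2}\mathcal{V}_{-}$ from (\ref{eq: complex V}), which yields $\mathrm{Re}\{\mathcal{V}\}=(\mathcal{V}_{+}+\mathcal{V}_{-})/2$ and $\mathrm{Im}\{\mathcal{V}\}=(\mathcal{V}_{-}-\mathcal{V}_{+})/2$. Averaging the ``$+$'' and ``$-$'' Fokker-Planck equations cancels the Laplacian term and produces the continuity equation (\ref{eq: eq of continuity}); subtracting the ``$-$'' from the ``$+$'' eliminates $\partial_{\tau}p$ and yields $\partial_{\mu}[(\mathcal{V}_{+}^{\mu}-\mathcal{V}_{-}^{\mu})p]+\lambda^{2}\partial^{\mu}\partial_{\mu}p=0$. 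Rewriting $\lambda^{2}\partial^{\mu}\partial_{\mu}p=\partial_{\mu}\bigl[\lambda^{2}(\partial^{\mu}\ln p)\,p\bigr]$, the ansatz $\mathcal{V}_{+}^{\mu}-\mathcal{V}_{-}^{\mu}=-\lambda^{2}\partial^{\mu}\ln p$, equivalently $\mathrm{Im}\{\mathcal{V}^{\mu}\}=\tfrac{\lambda^{2}}{2}\partial^{\mu}\ln p$, solves the equation on $\hat{x}(\tau,\varOmega)=\mathrm{supp}\,p(\circ,\tau)$; this is the first branch of (\ref{eq: osmotic pressure1}). The second branch, valid on the larger set $\bigcup_{\tau}\hat{x}(\tau,\varOmega)$, is then inherited from the definition (\ref{eq: Complex velocity}): since $\mathcal{V}^{\mu}$ is purely a function of $x$, $\partial^{\mu}\ln p(x,\tau)$ is forced to be $\tau$-independent on the support, so that $p(x,\tau)$ factorizes as a product of a $\tau$-dependent and an $x$-dependent part; integrating in $\tau$ then preserves the gradient and replaces $p(x,\tau)$ by $\int_{\mathbb{R}}d\tau\,p(x,\tau)$ without altering $\partial^{\mu}\ln$.

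The main obstacle I anticipate is analytic rather than algebraic: one must justify the interchange of $d/d\tau$ with $\int d\mu(x)$, the vanishing of the expected It\^{o} integrand in the second computation, and the pointwise conclusion from the integrated identity. The stated $C^{2,1}$-regularity of $p$ together with sufficient decay of $p$, $\mathcal{V}_{\pm}^{\mu}p$ and $\partial^{\mu}p$ at infinity in $\mathbb{A}^{4}(\mathbb{V_{\mathrm{M}}^{\mathrm{4}}},g)$ are needed so that integration by parts produces no boundary terms, and a sufficiently rich test-function class (e.g.\ $C_{c}^{\infty}$) is needed to conclude pointwise vanishing. A more subtle issue is the consistency of the second branch of (\ref{eq: osmotic pressure1}): one has to ensure $\int_{\mathbb{R}}d\tau\,p(x,\tau)$ is finite and strictly positive on $\bigcup_{\tau}\hat{x}(\tau,\varOmega)$ and that the two expressions for $\mathrm{Im}\{\mathcal{V}^{\mu}\}$ agree wherever both are defined; this consistency is precisely the constraint that the pure-$x$ dependence of $\mathcal{V}$ in (\ref{eq: Complex velocity}) imposes on the allowed densities $p(x,\tau)$, so the formula should be read as selecting a compatibility condition as much as it is an identity.
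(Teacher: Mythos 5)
Your proposal follows essentially the same route as the paper: the two ``$\pm$''-Fokker-Planck equations are obtained exactly as in the displays preceding the theorem, by equating $\int d\mu(x)\,f(x)\partial_{\tau}p(x,\tau)$ with the It\^{o}-formula computation of $\frac{d}{d\tau}\mathbb{E}\llbracket f(\hat{x}(\tau,\bullet))\rrbracket$ and invoking arbitrariness of $f$, and the continuity equation and osmotic relation then come from the sum and difference of the ``$\pm$'' equations using $\mathrm{Re}\{\mathcal{V}\}=(\mathcal{V}_{+}+\mathcal{V}_{-})/2$ and $\mathrm{Im}\{\mathcal{V}\}=(\mathcal{V}_{-}-\mathcal{V}_{+})/2$. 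Your closing remarks correctly flag the points the paper leaves implicit (boundary terms, the divergence-free ambiguity in solving for $\mathrm{Im}\{\mathcal{V}^{\mu}\}$, and the compatibility of the two branches of the osmotic formula given that $\mathcal{V}$ depends only on $x$), but these are refinements of, not departures from, the paper's argument.
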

Equation (\ref{eq: eq of continuity}) represents the equation of
continuity of the probability density $p$ in $4+1$ dimensional space.
Then, $\partial_{\mu}\left[\mathrm{Re}\{\mathcal{V}^{\mu}(x)\}\int_{\mathbb{R}}d\tau\,p(x,\tau)\right]=0$
is also derived by selecting a natural boundary condition of $p(x,\partial\mathbb{R})=0$.
 Equation (\ref{eq: osmotic pressure1}) is a mimic of the osmotic
pressure formula \cite{Nelson(1966a),Nelson(2001_book)}. Where, the
above definition of $\mathrm{Im}\{\mathcal{V}^{\mu}(x)\}$ is identified
by its domain $\hat{x}(\tau,\varOmega)$ or $\bigcup_{\tau\in\mathbb{R}}\hat{x}(\tau,\varOmega)$.

\subsection{Proper time}

One of the delicate problem in this paper is the definition of the
proper time of a stochastic quanta on $(\mathbb{A}^{4}(\mathbb{V_{\mathrm{M}}^{\mathrm{4}}},g),\mathscr{B}(\mathbb{A}^{4}(\mathbb{V_{\mathrm{M}}^{\mathrm{4}}},g)),\mu)$.
Since we want to consider the correspondence between a D-process and
a classical kinematics, the limit $\hbar\rightarrow0$ of the proper
time in the present model has to imply one in classical dynamics.
The well-known proper time in classical dynamics is
\begin{equation}
d\tau\mathrm{|_{classical}}=\frac{1}{c}\times\sqrt{dx_{\mu}(\tau)dx^{\mu}(\tau)}\,,\label{eq: dtau in classical}
\end{equation}
here, the metric $g=\mathrm{diag}(+1,-1,-1,-1)$ is selected. Let
us recall the following relation in advance; 
\begin{equation}
[\hat{d}^{*}\hat{x}_{\mu}(\tau,\omega)-\lambda\times\hat{d}^{*}W_{\mu}(\tau,\omega)]\cdot[\hat{d}\hat{x}^{\mu}(\tau,\omega)-\lambda\times\hat{d}W^{\mu}(\tau,\omega)]=\mathcal{V}_{\mu}^{*}(\hat{x}(\tau,\omega))\mathcal{V}^{\mu}(\hat{x}(\tau,\omega))d\tau^{2}.
\end{equation}
Where, $\hat{d}W(\tau,\omega)\coloneqq(1-i)/2\times dW_{+}(\tau,\omega)+(1+i)/2\times dW_{-}(\tau,\omega)$,
$\hat{d}^{*}W(\tau,\omega)\coloneqq[\hat{d}W(\tau,\omega)]^{*}$ and
$A^{*}$ represents the complex conjugate of $A$. Again, remind the
definition of the complex velocity
\begin{equation}
\mathcal{V}^{\mu}(x)=\frac{1}{m_{0}}\times\frac{i\hbar\partial{}^{\mu}\phi(x)+eA{}^{\mu}(x)\phi(x)}{\phi(x)}=\frac{1}{m_{0}}\times\frac{i\hbar\mathfrak{D}{}^{\mu}\phi(x)}{\phi(x)}\label{eq: 2-7}
\end{equation}
with respect to $x\in\hat{x}(\tau,\varOmega)$, $\mathcal{V}_{\mu}^{*}(x)\mathcal{V}^{\mu}(x)$
becomes 
\begin{eqnarray}
\mathcal{V}_{\mu}^{*}(x)\mathcal{V}^{\mu}(x) & = & \frac{1}{2m_{0}^{2}}\times\frac{\phi(x)(-i\hbar\mathfrak{D}_{\mu}^{*})\cdot(-i\hbar\mathfrak{D}^{*\mu})\phi^{*}(x)+\phi^{*}(x)(i\hbar\mathfrak{D}_{\mu})\cdot(i\hbar\mathfrak{D}^{\mu})\phi(x)}{\phi^{*}(x)\phi(x)}\nonumber \\
 &  & +\frac{\hbar^{2}}{2m_{0}^{2}}\times\frac{\partial_{\mu}\partial^{\mu}[\phi(x)\cdot\phi^{*}(x)]}{\phi^{*}(x)\phi(x)}\,.\label{eq: 2-8}
\end{eqnarray}
Let the $C^{2}$-function $\phi(x):\mathbb{A}^{4}(\mathbb{V_{\mathrm{M}}^{\mathrm{4}}},g)\rightarrow\mathbb{C}$
be the wave function of the complex Klein-Gordon equation, $(i\hbar\mathfrak{D}_{\mu})\cdot(i\hbar\mathfrak{D}^{\mu})\phi(x)-m_{0}^{2}c^{2}\phi(x)=0$.
 Due to this assumption, the first term in the RHS of (\ref{eq: 2-8})
is a constant of $c^{2}$. Then, the issue is the behavior of $\hbar^{2}/m_{0}^{2}\times\partial_{\mu}\partial^{\mu}[\phi^{*}(x)\phi(x)]/\phi^{*}(x)\phi(x)$.
Where, we follow the proposal by T. Zastawniak in Ref.\cite{Zastawniak(1990)}.
By defining the function $\phi(x)\coloneqq\exp[R(x)/\hbar+iS(x)/\hbar]$
with respect to real valued functions of $R$ and $S$, $\phi^{*}(x)\phi(x)=\exp[2R(x)/\hbar]$
is satisfied. Due to the definition of (\ref{eq: 2-7}), $\partial^{\mu}R(x)=\mathrm{Im}\{m_{0}\mathcal{V}^{\mu}(x)\}=\hbar/2\times\partial^{\mu}\ln p(x,\tau)$
on $x\in\hat{x}(\tau,\varOmega)$ (see (\ref{eq: osmotic pressure1}));
\begin{equation}
\frac{\hbar^{2}}{2m_{0}^{2}}\times\frac{\partial_{\mu}\partial^{\mu}[\phi(x)\cdot\phi^{*}(x)]}{\phi^{*}(x)\phi(x)}=\frac{\hbar^{2}}{2m_{0}^{2}}\times\frac{\partial_{\mu}\partial^{\mu}p(x,\tau)}{p(x,\tau)}\,.\label{eq: sub-eq-001}
\end{equation}
Hence, the second term of the RHS in (\ref{eq: 2-8}) is non-zero.
However, let us introduce the expectation of (\ref{eq: sub-eq-001})
after the substitution of $x=\hat{x}(\tau,\omega)$,
\begin{eqnarray}
\mathbb{E}\left\llbracket \frac{\hbar^{2}}{2m_{0}^{2}}\times\frac{\partial_{\mu}\partial^{\mu}p(\hat{x}(\tau,\bullet),\tau)}{p(\hat{x}(\tau,\bullet),\tau)}\right\rrbracket  & = & \frac{\hbar^{2}}{2m_{0}^{2}}\times\int_{\hat{x}(\tau,\varOmega)}d\mu(x)\,\left[\frac{\partial_{\mu}\partial^{\mu}p(x,\tau)}{p(x,\tau)}\right]p(x,\tau)\nonumber \\
 & = & \frac{\hbar^{2}}{2m_{0}^{2}}\times\int_{\hat{x}(\tau,\varOmega)}d\mu(x)\,\partial_{\mu}\partial^{\mu}p(x,\tau)=0,
\end{eqnarray}
by employing $\partial^{\mu}p(x,\tau)|_{x\in\partial\hat{x}(\tau,\varOmega)}=0$,
where $\partial\hat{x}(\tau,\varOmega)$ denotes the boundary of $\hat{x}(\tau,\varOmega)$.
Therefore the following relation is realized.
\begin{lem}[Lorentz invariant]
\begin{leftbar}\label{Lorentz_inv}Consider a D-progressive $\hat{x}(\circ,\bullet)$.
A stochastic kinematics of a scalar electron satisfies the following
Lorentz invariant for all $\tau\in\mathbb{R}$ \cite{Zastawniak(1990)}.
\begin{equation}
\boxed{\mathbb{E}\left\llbracket \mathcal{V}_{\mu}^{*}(\hat{x}(\tau,\bullet))\mathcal{V}^{\mu}(\hat{x}(\tau,\bullet))\right\rrbracket =c^{2}}\label{eq: VV}
\end{equation}
\end{leftbar}
\end{lem}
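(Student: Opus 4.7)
The plan is to take the pointwise algebraic decomposition of $\mathcal{V}_\mu^*\mathcal{V}^\mu$ that is laid out in the paragraphs preceding the lemma, isolate a ``classical'' part that collapses to $c^2$ via the Klein--Gordon equation, and show that the remaining ``quantum potential'' part has vanishing expectation via an integration-by-parts argument using the probability density $p$ from Theorem \ref{FP eq}.

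First I would substitute the definition $\mathcal{V}^\mu(x)=(i\hbar/m_0)\,\mathfrak{D}^\mu\phi(x)/\phi(x)$ into $\mathcal{V}_\mu^*(x)\mathcal{V}^\mu(x)$ and symmetrize the numerator. A direct manipulation yields the split displayed in (\ref{eq: 2-8}): one term of the form $[\phi^*(i\hbar\mathfrak{D}_\mu)(i\hbar\mathfrak{D}^\mu)\phi+\phi(-i\hbar\mathfrak{D}_\mu^*)(-i\hbar\mathfrak{D}^{*\mu})\phi^*]/(2m_0^2\phi^*\phi)$, plus a term $(\hbar^2/2m_0^2)\,\partial_\mu\partial^\mu(\phi^*\phi)/(\phi^*\phi)$. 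Invoking the Klein--Gordon equation $(i\hbar\mathfrak{D})^2\phi=m_0^2c^2\phi$ and its complex conjugate reduces the first term to the constant $c^2$ at every point of the support, so that its expectation is also $c^2$.

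Next I would handle the second term. Using Zastawniak's ansatz $\phi=\exp[R/\hbar+iS/\hbar]$ with $R,S$ real, one has $\phi^*\phi=\exp[2R/\hbar]$, and the osmotic part of Theorem \ref{FP eq}, namely $\partial^\mu R=(\hbar/2)\partial^\mu\ln p$ on $\hat{x}(\tau,\varOmega)$, identifies $\phi^*\phi$ with $p(\cdot,\tau)$ up to a multiplicative constant on the support. The quantum-potential term then equals $(\hbar^2/2m_0^2)\,\partial_\mu\partial^\mu p/p$. Computing its expectation via $\mathbb{E}\llbracket h(\hat{x}(\tau,\bullet))\rrbracket=\int h(x)p(x,\tau)\,d\mu(x)$ collapses the $p$ in the denominator, leaving $(\hbar^2/2m_0^2)\int_{\hat{x}(\tau,\varOmega)}\partial_\mu\partial^\mu p(x,\tau)\,d\mu(x)$. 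Applying the divergence theorem together with the boundary behavior $\partial^\mu p|_{\partial\hat{x}(\tau,\varOmega)}=0$ forces this contribution to vanish, and the two expectations sum to the claimed $c^2$.

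The main obstacle I foresee is the justification of the boundary conditions. Neither the definition of a D-process nor the statement of Theorem \ref{FP eq} automatically guarantees that $\partial^\mu p$ decays at $\partial\hat{x}(\tau,\varOmega)$, and the identification $\phi^*\phi\propto p$ comes with a normalization constant that must be consistent with (\ref{eq: P-measure}). I would address this by restricting to Klein--Gordon solutions $\phi$ in a Schwartz-type class (or with compactly supported $|\phi|^2$) so that the divergence theorem applies legitimately and the normalization of $p$ is fixed; any genuine technical difficulty in the proof sits at this step rather than in the algebraic reduction itself.
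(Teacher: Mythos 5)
Your proposal reproduces the paper's own argument step for step: the same symmetrized decomposition (\ref{eq: 2-8}) of $\mathcal{V}_{\mu}^{*}\mathcal{V}^{\mu}$, reduction of the first term to $c^{2}$ via the Klein--Gordon equation, Zastawniak's ansatz with the osmotic relation (\ref{eq: osmotic pressure1}) to rewrite the quantum-potential term as $\partial_{\mu}\partial^{\mu}p/p$, and the cancellation of $p$ under the expectation followed by the divergence theorem with $\partial^{\mu}p|_{\partial\hat{x}(\tau,\varOmega)}=0$. Your closing remark about justifying the boundary decay and the normalization of $\phi^{*}\phi$ is a fair observation (the paper simply asserts the boundary condition), but note that the multiplicative constant in $\phi^{*}\phi\propto p$ cancels in the ratio $\partial_{\mu}\partial^{\mu}(\phi^{*}\phi)/(\phi^{*}\phi)$, so it plays no role in this particular lemma.
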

Due to this {\bf Lemma \ref{Lorentz_inv}}, the proper time is defined
as the mimic of classical dynamics (\ref{eq: dtau in classical}).
\begin{defn}[Proper time]
\begin{leftbar}\label{Proper_TIME}For all $\tau\in\mathbb{R}$,
the proper time for a stochastic kinematics  is defined by the following
invariant parameter;
\begin{equation}
\ensuremath{d\tau\coloneqq\frac{1}{c}\times\sqrt{\mathbb{E}\llbracket[\hat{d}^{*}\hat{x}_{\mu}(\tau,\bullet)-\lambda\times\hat{d}^{*}W_{\mu}(\tau,\bullet)]\cdot[\hat{d}\hat{x}^{\mu}(\tau,\bullet)-\lambda\times\hat{d}W^{\mu}(\tau,\bullet)]\rrbracket}}\,.\label{eq: proper time original}
\end{equation}
\end{leftbar}
\end{defn}

\section{Dynamics of a scalar electron and fields\label{Sect3}}

In order to the realization of the kinematics (\ref{eq:2-1}), i.e.,
$\ensuremath{d\hat{x}(\tau,\omega)=\mathcal{V}_{\pm}(\hat{x}(\tau,\omega))d\tau+\lambda\times dW_{\pm}(\tau,\omega)}$,
we need to investigate the behavior of the complex velocity $\mathscr{\mathcal{V}}^{\mu}(\hat{x}(\tau,\omega))\in\mathbb{V_{\mathrm{M}}^{\mathrm{4}}}\oplus i\mathbb{V_{\mathrm{M}}^{\mathrm{4}}}$.
For the derivation of $\mathcal{V}(\hat{x}(\tau,\omega))$, the action
integral (the functional) along a stochastic trajectory is required.
Before entering to the main body, we consider the variational calculus
associated with a stochastic particle briefly. After this explanation,
let us proceed the concrete definition of the action integral and
fields corresponding to the styles in classical dynamics.

\subsection{Euler-Lagrange (Yasue) equation}

In this small section, we focus the action integral on a stochastic
process. Concerning the complex velocity $\mathcal{V}\in\mathbb{V_{\mathrm{M}}^{\mathrm{4}}}\oplus i\mathbb{V_{\mathrm{M}}^{\mathrm{4}}}$,
L. Nottale suggests the following Lagrangian due to its forward and
backward evolution, i.e., $d_{\pm}\hat{x}(\circ,\bullet)$; $L_{0}(\tau,\hat{x},\mathcal{V}_{+},\mathcal{V}_{-})=L(\tau,\hat{x},\mathcal{V})$
\cite{Nottale(2011)}. However, we propose its extension, namely,
$L_{0}(\tau,\hat{x},\mathcal{V}_{+},\mathcal{V}_{-})=L(\tau,\hat{x},\mathcal{V},\mathcal{V^{\mathrm{*}}})$.
Here, $\mathcal{V^{\mathrm{*}}}\in\mathbb{V_{\mathrm{M}}^{\mathrm{4}}}\oplus i\mathbb{V_{\mathrm{M}}^{\mathrm{4}}}$
is the complex conjugate of $\mathcal{V}$. Recalling the definition
of $\mathcal{V}$ in (\ref{eq: complex V}), it is found 
\begin{equation}
L_{0}(\tau,\hat{x},\mathcal{V}_{+},\mathcal{V}_{-})=L\left(\tau,\hat{x},\frac{1-i}{2}\mathcal{V}_{+}+\frac{1+i}{2}\mathcal{V}_{-},\frac{1+i}{2}\mathcal{V}_{+}+\frac{1-i}{2}\mathcal{V}_{-}\right)
\end{equation}
on a sample path of a D-progressive $\hat{x}(\circ,\omega)$. For
the simplification, the following signatures are introduced ($\gamma_{\tau}\coloneqq\mathscr{P}_{\tau}\cap\mathscr{F}_{\tau}$
as the ``present'' $\tau$):
\begin{equation}
\mathfrak{D}_{\tau}^{\pm}\coloneqq\mathbb{E}\left\llbracket \left.\frac{d_{\pm}}{d\tau}\right|\gamma_{\tau}\right\rrbracket 
\end{equation}
\begin{equation}
\mathfrak{D}_{\tau}\coloneqq\mathbb{E}\left\llbracket \left.\frac{\hat{d}}{d\tau}\right|\gamma_{\tau}\right\rrbracket =\frac{1-i}{2}\mathfrak{D}_{\tau}^{+}+\frac{1+i}{2}\mathfrak{D}_{\tau}^{-}
\end{equation}
By using these expressions, $\mathfrak{D}_{\tau}^{\pm}\hat{x}^{\mu}(\tau,\omega)=\mathcal{V}_{\pm}^{\mu}(\hat{x}(\tau,\omega))$
is obviously satisfied (the {\bf mean derivatives}). The variation
of the functional $\int_{\tau_{1}}^{\tau_{2}}d\tau\,\mathbb{E}\left\llbracket L_{0}(\tau,\hat{x},\mathcal{V}_{+},\mathcal{V}_{-})\right\rrbracket $
with respect to $\hat{x}$ is
\begin{eqnarray}
\delta\int_{\tau_{1}}^{\tau_{2}}d\tau\,\mathbb{E}\left\llbracket L_{0}(\tau,\hat{x},\mathcal{V}_{+},\mathcal{V}_{-})\right\rrbracket  & = & \int_{\tau_{1}}^{\tau_{2}}d\tau\,\mathbb{E}\left\llbracket \frac{\partial L_{0}}{\partial\hat{x}^{\mu}}\delta\hat{x}^{\mu}+\frac{\partial L_{0}}{\partial\mathcal{V}_{+}^{\mu}}\delta\mathcal{V}_{+}^{\mu}+\frac{\partial L_{0}}{\partial\mathcal{V}_{-}^{\mu}}\delta\mathcal{V}_{-}^{\mu}\right\rrbracket \nonumber \\
 & = & \int_{\tau_{1}}^{\tau_{2}}d\tau\,\mathbb{E}\left\llbracket \frac{\partial L}{\partial\hat{x}^{\mu}}\delta\hat{x}^{\mu}+\frac{\partial L}{\partial\mathcal{V}^{\mu}}\mathfrak{D}_{\tau}\delta\hat{x}^{\mu}+\frac{\partial L}{\partial\mathcal{V}^{*\mu}}\mathfrak{D}_{\tau}^{*}\delta\hat{x}^{\mu}\right\rrbracket \,,\label{eq: variational 1}
\end{eqnarray}
where, the following relations are introduced.
\begin{equation}
\frac{\partial L_{0}}{\partial\hat{x}^{\mu}}=\frac{\partial L}{\partial\hat{x}^{\mu}}
\end{equation}
\begin{equation}
\frac{\partial L_{0}}{\partial\mathcal{V}_{+}^{\mu}}=\frac{1+i}{2}\frac{\partial L}{\partial\mathcal{V}^{\mu}}+\frac{1-i}{2}\frac{\partial L}{\partial\mathcal{V}^{*\mu}}
\end{equation}
\begin{equation}
\frac{\partial L_{0}}{\partial\mathcal{V}_{-}^{\mu}}=\frac{1-i}{2}\frac{\partial L}{\partial\mathcal{V}^{\mu}}+\frac{1+i}{2}\frac{\partial L}{\partial\mathcal{V}^{*\mu}}
\end{equation}
Then, we need to recall the following Nelson's partial integral \cite{Nelson(1966a),Nelson(2001_book),Yasue(1981a)}. 
\begin{lem}[Nelson's partial integral]
\begin{leftbar}\label{Nelson_partial}Let $\alpha,\beta:\mathbb{A}^{4}(\mathbb{V_{\mathrm{M}}^{\mathrm{4}}},g)\rightarrow\mathbb{V_{\mathrm{M}}^{\mathrm{4}}}\oplus i\mathbb{V_{\mathrm{M}}^{\mathrm{4}}}$
be a $C^{2}$-functions on a D-progressive $\hat{x}(\circ,\bullet)$,
the following partial integral formula is fulfilled;
\begin{multline}
\int_{\tau_{1}}^{\tau_{2}}d\tau\,\mathbb{E}\left\llbracket \mathfrak{D}_{\tau}^{\pm}\alpha_{\mu}(\hat{x}(\tau,\bullet))\cdot\beta^{\mu}(\hat{x}(\tau,\bullet))+\alpha_{\mu}(\hat{x}(\tau,\bullet))\cdot\mathfrak{D}_{\tau}^{\mp}\beta^{\mu}(\hat{x}(\tau,\bullet))\right\rrbracket \\
=\mathbb{E}\left\llbracket \alpha_{\mu}(\hat{x}(\tau_{2},\bullet))\beta^{\mu}(\hat{x}(\tau_{2},\bullet))-\alpha_{\mu}(\hat{x}(\tau_{1},\bullet))\beta^{\mu}(\hat{x}(\tau_{1},\bullet))\right\rrbracket \,,\label{eq:partial int formula original1}
\end{multline}
 or its differential form,
\begin{equation}
\frac{d}{d\tau}\mathbb{E}\left\llbracket \alpha_{\mu}(\hat{x}(\tau,\bullet))\beta^{\mu}(\hat{x}(\tau,\bullet))\right\rrbracket =\mathbb{E}\left\llbracket \begin{gathered}\mathfrak{D}_{\tau}^{\pm}\alpha_{\mu}(\hat{x}(\tau,\bullet))\cdot\beta^{\mu}(\hat{x}(\tau,\bullet))+\alpha_{\mu}(\hat{x}(\tau,\bullet))\cdot\mathfrak{D}_{\tau}^{\mp}\beta^{\mu}(\hat{x}(\tau,\bullet))\end{gathered}
\right\rrbracket \,.\label{eq:partial int formula original2}
\end{equation}
By using the superposition of the above ``$\pm$''-formulas, it
can be switched to the formula by the complex derivatives.
\begin{eqnarray}
\frac{d}{d\tau}\mathbb{E}\left\llbracket \alpha_{\mu}(\hat{x}(\tau,\bullet))\beta^{\mu}(\hat{x}(\tau,\bullet))\right\rrbracket  & = & \mathbb{E}\left\llbracket \mathfrak{D}_{\tau}\alpha_{\mu}(\hat{x}(\tau,\bullet))\cdot\beta^{\mu}(\hat{x}(\tau,\bullet))+\alpha_{\mu}(\hat{x}(\tau,\bullet))\cdot\mathfrak{D}_{\tau}^{*}\beta^{\mu}(\hat{x}(\tau,\bullet))\right\rrbracket \nonumber \\
 & = & \mathbb{E}\left\llbracket \mathfrak{D}_{\tau}^{*}\alpha_{\mu}(\hat{x}(\tau,\bullet))\cdot\beta^{\mu}(\hat{x}(\tau,\bullet))+\alpha_{\mu}(\hat{x}(\tau,\bullet))\cdot\mathfrak{D}_{\tau}\beta^{\mu}(\hat{x}(\tau,\bullet))\right\rrbracket \label{eq: Partial int formula}
\end{eqnarray}
\end{leftbar}\end{lem}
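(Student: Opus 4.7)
My strategy is to establish the differential form~(\ref{eq:partial int formula original2}), from which (\ref{eq:partial int formula original1}) follows by integrating over $[\tau_1,\tau_2]$. First, applying the It\^{o} formula (Lemma~\ref{Ito formula}) to the scalar $f:=\alpha_\mu\beta^\mu\in C^2$ and then the Fokker--Planck equation (Theorem~\ref{FP eq}) yields
\begin{equation*}
\tfrac{d}{d\tau}\mathbb{E}\llbracket\alpha_\mu\beta^\mu\rrbracket = \mathbb{E}\llbracket\mathfrak{D}_\tau^{+}(\alpha_\mu\beta^\mu)\rrbracket = \mathbb{E}\llbracket\mathcal{V}_+^\nu\partial_\nu(\alpha_\mu\beta^\mu) - \tfrac{\lambda^2}{2}\partial^\nu\partial_\nu(\alpha_\mu\beta^\mu)\rrbracket,
\end{equation*}
because the $dW_+$ martingale part has zero conditional expectation given $\gamma_\tau\subset\mathscr{P}_\tau$; the companion identity via $\mathfrak{D}_\tau^{-}$ follows from the $-$ Fokker--Planck equation.

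Next I expand the target on the RHS of~(\ref{eq:partial int formula original2}) in the $(+,-)$ signature, getting $\mathbb{E}\llbracket(\mathcal{V}_+^\nu\partial_\nu\alpha_\mu)\beta^\mu + \alpha_\mu(\mathcal{V}_-^\nu\partial_\nu\beta^\mu)\rrbracket + \tfrac{\lambda^2}{2}\mathbb{E}\llbracket -\partial^\nu\partial_\nu\alpha_\mu\cdot\beta^\mu + \alpha_\mu\cdot\partial^\nu\partial_\nu\beta^\mu\rrbracket$. Subtracting the display above and using the Leibniz identity $\partial^\nu\partial_\nu(\alpha\beta)=(\partial^\nu\partial_\nu\alpha)\beta + 2\partial^\nu\alpha\,\partial_\nu\beta + \alpha\,\partial^\nu\partial_\nu\beta$ leaves only an asymmetric drift $\mathbb{E}\llbracket(\mathcal{V}_-^\nu - \mathcal{V}_+^\nu)\alpha_\mu\partial_\nu\beta^\mu\rrbracket$ plus a leftover It\^{o} piece $\lambda^2\mathbb{E}\llbracket\partial^\nu\alpha_\mu\partial_\nu\beta^\mu + \alpha_\mu\partial^\nu\partial_\nu\beta^\mu\rrbracket$. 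Inserting the osmotic relation $\mathcal{V}_-^\nu - \mathcal{V}_+^\nu = \lambda^2\partial^\nu\ln p$ read off from~(\ref{eq: osmotic pressure1}), the drift term becomes $\lambda^2\int\alpha_\mu\,\partial^\nu p\,\partial_\nu\beta^\mu\,d\mu$; an integration by parts (the boundary contribution vanishes since $p$ is supported on $\hat{x}(\tau,\varOmega)$) converts this into $-\lambda^2\mathbb{E}\llbracket\partial^\nu\alpha_\mu\partial_\nu\beta^\mu + \alpha_\mu\partial^\nu\partial_\nu\beta^\mu\rrbracket$, which exactly cancels the It\^{o} leftover. The $(-,+)$ signature case is identical after swapping the two Fokker--Planck equations in the first step.

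For the complex form~(\ref{eq: Partial int formula}) I use $\mathfrak{D}_\tau = \tfrac{1-i}{2}\mathfrak{D}_\tau^{+} + \tfrac{1+i}{2}\mathfrak{D}_\tau^{-}$ and its conjugate $\mathfrak{D}_\tau^{*} = \tfrac{1+i}{2}\mathfrak{D}_\tau^{+} + \tfrac{1-i}{2}\mathfrak{D}_\tau^{-}$, whence
\begin{equation*}
\mathfrak{D}_\tau\alpha_\mu\cdot\beta^\mu + \alpha_\mu\cdot\mathfrak{D}_\tau^{*}\beta^\mu = \tfrac{1-i}{2}\bigl[\mathfrak{D}_\tau^{+}\alpha\cdot\beta + \alpha\cdot\mathfrak{D}_\tau^{-}\beta\bigr] + \tfrac{1+i}{2}\bigl[\mathfrak{D}_\tau^{-}\alpha\cdot\beta + \alpha\cdot\mathfrak{D}_\tau^{+}\beta\bigr],
\end{equation*}
and under $\mathbb{E}$ each bracket equals $\tfrac{d}{d\tau}\mathbb{E}\llbracket\alpha_\mu\beta^\mu\rrbracket$ by the real identity already proved, so the weights $\tfrac{1-i}{2}+\tfrac{1+i}{2}=1$ reproduce the claim; the swapped pairing $\mathfrak{D}_\tau^{*}\alpha\cdot\beta + \alpha\cdot\mathfrak{D}_\tau\beta$ is handled by the same convex combination. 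The only genuinely delicate step is the sign bookkeeping in the second paragraph: one must recognise that the It\^{o} cross term $\lambda^2\partial^\nu\alpha\,\partial_\nu\beta$ produced by $\partial^\nu\partial_\nu(\alpha\beta)$ is precisely what the osmotic drift generates after one integration by parts, so that the two $O(\lambda^2)$ corrections cancel against one another; the supporting boundary-vanishing assumption is the same one invoked in Lemma~\ref{Lorentz_inv}.
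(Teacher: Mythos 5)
Your argument is correct and uses the same two essential inputs as the paper's proof --- the ``$\pm$'' Fokker--Planck equations (equivalently the It\^{o} formula taken under the expectation) and the osmotic relation $\mathcal{V}_{-}^{\nu}-\mathcal{V}_{+}^{\nu}=\lambda^{2}\partial^{\nu}\ln p$ combined with one integration by parts against $p$ with vanishing boundary contribution --- but it organizes the computation differently. The paper splits the claim into (i) the exchange symmetry $\mathbb{E}\llbracket\mathfrak{D}_{\tau}^{+}\alpha_{\mu}\cdot\beta^{\mu}+\alpha_{\mu}\cdot\mathfrak{D}_{\tau}^{-}\beta^{\mu}\rrbracket=\mathbb{E}\llbracket\mathfrak{D}_{\tau}^{-}\alpha_{\mu}\cdot\beta^{\mu}+\alpha_{\mu}\cdot\mathfrak{D}_{\tau}^{+}\beta^{\mu}\rrbracket$, obtained by writing the difference as the integral of the total divergence $-\lambda^{2}\int d\mu\,\partial^{\nu}\{p\,[\partial_{\nu}\alpha_{\mu}\cdot\beta^{\mu}-\alpha_{\mu}\cdot\partial_{\nu}\beta^{\mu}]\}$, and (ii) the identification of $\tfrac{d}{d\tau}\mathbb{E}\llbracket\alpha_{\mu}\beta^{\mu}\rrbracket$ with the symmetric average $\tfrac{1}{2}\mathbb{E}\llbracket(\mathfrak{D}_{\tau}^{+}+\mathfrak{D}_{\tau}^{-})\alpha_{\mu}\cdot\beta^{\mu}+\alpha_{\mu}\cdot(\mathfrak{D}_{\tau}^{+}+\mathfrak{D}_{\tau}^{-})\beta^{\mu}\rrbracket$ directly from $\partial_{\tau}p$; the two facts deliver both signatures simultaneously. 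You instead evaluate $\tfrac{d}{d\tau}\mathbb{E}\llbracket\alpha_{\mu}\beta^{\mu}\rrbracket=\mathbb{E}\llbracket\mathfrak{D}_{\tau}^{+}(\alpha_{\mu}\beta^{\mu})\rrbracket$, expand $\partial^{\nu}\partial_{\nu}(\alpha_{\mu}\beta^{\mu})$ by Leibniz, and check that the residual $O(\lambda^{2})$ cross term cancels the osmotic drift after one integration by parts --- an arithmetically equivalent but more computational verification that handles each signature separately and keeps the antisymmetric structure implicit; what it buys is that no separate symmetry lemma is needed, at the price of more delicate sign bookkeeping (which you carry out correctly). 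Your superposition step for the complex form, using $\tfrac{1-i}{2}+\tfrac{1+i}{2}=1$, coincides with the paper's, and the boundary hypothesis you invoke is the same one used in Lemma~\ref{Lorentz_inv}.
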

\begin{proof}
Consider the following relation at first:
\begin{multline}
\mathbb{E}\left\llbracket \mathfrak{D}_{\tau}^{+}\alpha_{\mu}(\hat{x}(\tau,\bullet))\cdot\beta^{\mu}(\hat{x}(\tau,\bullet))+\alpha_{\mu}(\hat{x}(\tau,\bullet))\cdot\mathfrak{D}_{\tau}^{-}\beta^{\mu}(\hat{x}(\tau,\bullet))\right\rrbracket \\
=\mathbb{E}\left\llbracket \mathfrak{D}_{\tau}^{-}\alpha_{\mu}(\hat{x}(\tau,\bullet))\cdot\beta^{\mu}(\hat{x}(\tau,\bullet))+\alpha_{\mu}(\hat{x}(\tau,\bullet))\cdot\mathfrak{D}_{\tau}^{+}\beta^{\mu}(\hat{x}(\tau,\bullet))\right\rrbracket \label{eq: D+D-=00003DD-D+}
\end{multline}
It is derived by using the It\^{o} formula of (\ref{eq:2-Ito-formula})
and (\ref{eq: osmotic pressure1}), 
\begin{multline}
\mathbb{E}\left\llbracket \mathfrak{D}_{\tau}^{+}\alpha_{\mu}(\hat{x}(\tau,\bullet))\cdot\beta^{\mu}(\hat{x}(\tau,\bullet))+\alpha_{\mu}(\hat{x}(\tau,\bullet))\cdot\mathfrak{D}_{\tau}^{-}\beta^{\mu}(\hat{x}(\tau,\bullet))\right\rrbracket \\
-\mathbb{E}\left\llbracket \mathfrak{D}_{\tau}^{-}\alpha_{\mu}(\hat{x}(\tau,\bullet))\cdot\beta^{\mu}(\hat{x}(\tau,\bullet))+\alpha_{\mu}(\hat{x}(\tau,\bullet))\cdot\mathfrak{D}_{\tau}^{+}\beta^{\mu}(\hat{x}(\tau,\bullet))\right\rrbracket \\
=-\lambda^{2}\times\int_{\mathbb{A}^{4}(\mathbb{V_{\mathrm{M}}^{\mathrm{4}}},g)}d\mu(x)\,\partial^{\nu}\left\{ p(x,\tau)\left[\begin{gathered}\partial_{\nu}\alpha_{\mu}(x)\cdot\beta^{\mu}(x)\\
-\alpha_{\mu}(x)\cdot\partial_{\nu}\beta^{\mu}(x)
\end{gathered}
\right]\right\} =0.
\end{multline}
Then, by recalling the Fokker-Planck equation (\ref{eq: Fokker-Planck}),
\begin{eqnarray}
\frac{d}{d\tau}\mathbb{E}\left\llbracket \alpha_{\mu}(\hat{x}(\tau,\bullet))\beta^{\mu}(\hat{x}(\tau,\bullet))\right\rrbracket  & = & \int_{\mathbb{A}^{4}(\mathbb{V_{\mathrm{M}}^{\mathrm{4}}},g)}d\mu(x)\,\alpha_{\mu}(x)\beta^{\mu}(x)\partial_{\tau}p(x,\tau)\nonumber \\
 & = & \frac{1}{2}\times\mathbb{E}\left\llbracket \begin{gathered}(\mathfrak{D}_{\tau}^{+}+\mathfrak{D}_{\tau}^{-})\alpha_{\mu}(\hat{x}(\tau,\bullet))\cdot\beta^{\mu}(\hat{x}(\tau,\bullet))\\
+\alpha_{\mu}(\hat{x}(\tau,\bullet))\cdot(\mathfrak{D}_{\tau}^{+}+\mathfrak{D}_{\tau}^{-})\beta^{\mu}(\hat{x}(\tau,\bullet))
\end{gathered}
\right\rrbracket \,,
\end{eqnarray}
equation (\ref{eq:partial int formula original2}) is demonstrated.
(\ref{eq: Partial int formula}) is also imposed by the superposition
of (\ref{eq:partial int formula original2}) for ``$\pm$''.
\end{proof}
By considering (\ref{eq: Partial int formula}), (\ref{eq: variational 1})
becomes
\begin{eqnarray}
\delta\int_{\tau_{1}}^{\tau_{2}}d\tau\,\mathbb{E}\left\llbracket L_{0}(\tau,\hat{x},\mathcal{V}_{+},\mathcal{V}_{-})\right\rrbracket  & = & \int_{\tau_{1}}^{\tau_{2}}d\tau\,\mathbb{E}\left\llbracket \left(\frac{\partial L}{\partial\hat{x}^{\mu}}-\mathfrak{D}_{\tau}^{*}\frac{\partial L}{\partial\mathcal{V}^{\mu}}-\mathfrak{D}_{\tau}\frac{\partial L}{\partial\mathcal{V}^{*\mu}}\right)\delta\hat{x}^{\mu}\right\rrbracket \nonumber \\
 &  & +\int_{\tau_{1}}^{\tau_{2}}d\tau\,\frac{d}{d\tau}\mathbb{E}\left\llbracket \frac{\partial L}{\partial\mathcal{V}^{\mu}}\delta\hat{x}^{\mu}+\frac{\partial L}{\partial\mathcal{V}^{*\mu}}\delta\hat{x}^{\mu}\right\rrbracket \,,
\end{eqnarray}
the following {\bf Theorem \ref{EL_eq}} is derived with the help
of $\delta\hat{x}^{\mu}(\tau_{i},\bullet)=0$ ($i=1,2$).
\begin{thm}[Euler-Lagrange (Yasue) equation]
\begin{leftbar}\label{EL_eq}Let the functional 
\begin{equation}
\mathfrak{S}[\hat{x},\mathcal{V},\mathcal{V^{\mathrm{*}}}]=\int_{\tau_{1}}^{\tau_{2}}d\tau\,\mathbb{E}\left\llbracket L\left(\tau,\hat{x}(\tau,\bullet),\mathcal{V}(\hat{x}(\tau,\bullet)),\mathcal{V^{\mathrm{*}}}(\hat{x}(\tau,\bullet))\right)\right\rrbracket 
\end{equation}
be the action integral on a D-progressive $\hat{x}(\circ,\bullet)$.
By its variation with respect to $\hat{x}(\circ,\bullet)$, the following
Euler-Lagrange (Yasue) equation is induced:
\begin{equation}
\boxed{\frac{\partial L}{\partial\hat{x}^{\mu}}-\mathfrak{D}_{\tau}^{*}\frac{\partial L}{\partial\mathcal{V}^{\mu}}-\mathfrak{D}_{\tau}\frac{\partial L}{\partial\mathcal{V}^{*\mu}}=0}\label{eq: EL eq}
\end{equation}
\end{leftbar}
\end{thm}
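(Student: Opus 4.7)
The plan is to adapt the classical derivation of the Euler-Lagrange equation to the stochastic setting, with the key technical tool being Nelson's partial integral formula (Lemma~\ref{Nelson_partial}) which plays the role of ordinary integration by parts. The three-term structure of (\ref{eq: EL eq}), with the mismatched conjugation between $\mathfrak{D}_\tau$ and $\mathcal{V}^\mu$, is precisely what the forward--backward asymmetry in Lemma~\ref{Nelson_partial} produces.

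First, I would rewrite the two-velocity Lagrangian $L_0(\tau,\hat{x},\mathcal{V}_+,\mathcal{V}_-)$ as the complex-velocity Lagrangian $L(\tau,\hat{x},\mathcal{V},\mathcal{V}^*)$ via the substitutions in (\ref{eq: complex V}), and take a variation $\delta\hat{x}$ of the D-progressive path while keeping the endpoints fixed, $\delta\hat{x}^\mu(\tau_i,\bullet)=0$ for $i=1,2$. Using $\mathcal{V}_\pm^\mu(\hat{x}(\tau,\omega))=\mathfrak{D}_\tau^\pm\hat{x}^\mu(\tau,\omega)$ the induced variations are $\delta\mathcal{V}_\pm^\mu=\mathfrak{D}_\tau^\pm\delta\hat{x}^\mu$, which translate, under the complex combination, to $\delta\mathcal{V}^\mu=\mathfrak{D}_\tau\delta\hat{x}^\mu$ and $\delta\mathcal{V}^{*\mu}=\mathfrak{D}_\tau^*\delta\hat{x}^\mu$. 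The chain rule then yields the three-term variational expression with integrand $\frac{\partial L}{\partial \hat{x}^\mu}\delta\hat{x}^\mu+\frac{\partial L}{\partial\mathcal{V}^\mu}\mathfrak{D}_\tau\delta\hat{x}^\mu+\frac{\partial L}{\partial\mathcal{V}^{*\mu}}\mathfrak{D}_\tau^*\delta\hat{x}^\mu$.

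Second, I would apply Nelson's partial integral formula (\ref{eq: Partial int formula}) twice: once with $\alpha_\mu=\partial L/\partial\mathcal{V}^\mu$ and $\beta^\mu=\delta\hat{x}^\mu$ to convert the term containing $\mathfrak{D}_\tau\delta\hat{x}^\mu$ into one containing $\mathfrak{D}_\tau^*(\partial L/\partial\mathcal{V}^\mu)\,\delta\hat{x}^\mu$ (up to a total $\tau$-derivative), and again with $\alpha_\mu=\partial L/\partial\mathcal{V}^{*\mu}$ to convert the $\mathfrak{D}_\tau^*\delta\hat{x}^\mu$ term into $\mathfrak{D}_\tau(\partial L/\partial\mathcal{V}^{*\mu})\,\delta\hat{x}^\mu$. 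Integrating over $[\tau_1,\tau_2]$, the total-derivative contributions become boundary terms $\mathbb{E}\llbracket\alpha_\mu\delta\hat{x}^\mu\rrbracket\big|_{\tau_1}^{\tau_2}$ which vanish by the fixed-endpoint condition. Collecting terms gives
\begin{equation}
\delta\mathfrak{S}=\int_{\tau_{1}}^{\tau_{2}}d\tau\,\mathbb{E}\left\llbracket\left(\frac{\partial L}{\partial\hat{x}^{\mu}}-\mathfrak{D}_{\tau}^{*}\frac{\partial L}{\partial\mathcal{V}^{\mu}}-\mathfrak{D}_{\tau}\frac{\partial L}{\partial\mathcal{V}^{*\mu}}\right)\delta\hat{x}^{\mu}\right\rrbracket.
\end{equation}
Finally, since $\delta\hat{x}^\mu(\tau,\omega)$ is an arbitrary admissible variation of the D-progressive path, a stochastic version of the fundamental lemma of the calculus of variations forces the bracketed factor to vanish almost surely, producing (\ref{eq: EL eq}).

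The hard part will be the careful bookkeeping of the conjugation swap $\mathfrak{D}_\tau\leftrightarrow\mathfrak{D}_\tau^*$ in Nelson's partial integral: unlike ordinary integration by parts, moving $\mathfrak{D}_\tau$ off the variation $\delta\hat{x}^\mu$ necessarily introduces $\mathfrak{D}_\tau^*$ on $\partial L/\partial\mathcal{V}^\mu$, reflecting the fact that $\mathcal{V}^\mu$ is $\mathscr{P}_\tau$-type while $\mathcal{V}^{*\mu}$ is $\mathscr{F}_\tau$-type (cf.\ Definition~\ref{P_and_F}). A secondary concern is the regularity of the composite functions $\partial L/\partial\mathcal{V}^\mu(\hat{x}(\tau,\bullet))$ required by Lemma~\ref{Nelson_partial}, namely that $L$ be sufficiently smooth in its arguments so that these partial derivatives qualify as the $C^2$ functions $\alpha_\mu$ in the statement of the partial integral, along the D-progressive sample path.
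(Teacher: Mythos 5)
Your proposal is correct and follows essentially the same route as the paper: the paper likewise computes the variation with $\delta\mathcal{V}_{\pm}^{\mu}=\mathfrak{D}_{\tau}^{\pm}\delta\hat{x}^{\mu}$, converts the $L_{0}$-derivatives to the $(\mathcal{V},\mathcal{V}^{*})$-derivatives of $L$ to obtain the integrand $\frac{\partial L}{\partial\hat{x}^{\mu}}\delta\hat{x}^{\mu}+\frac{\partial L}{\partial\mathcal{V}^{\mu}}\mathfrak{D}_{\tau}\delta\hat{x}^{\mu}+\frac{\partial L}{\partial\mathcal{V}^{*\mu}}\mathfrak{D}_{\tau}^{*}\delta\hat{x}^{\mu}$, and then applies Nelson's partial integral (\ref{eq: Partial int formula}) with the same $\mathfrak{D}_{\tau}\leftrightarrow\mathfrak{D}_{\tau}^{*}$ swap, discarding the boundary terms via $\delta\hat{x}^{\mu}(\tau_{i},\bullet)=0$. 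No substantive difference.
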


\subsection{Action integral}

Let us consider the action integral of ``classical'' dynamics on
$(\mathbb{A}^{4}(\mathbb{V_{\mathrm{M}}^{\mathrm{4}}},g),\mathscr{B}(\mathbb{A}^{4}(\mathbb{V_{\mathrm{M}}^{\mathrm{4}}},g)),\mu)$;
\begin{eqnarray}
S_{\mathrm{classical}} & = & \int_{\mathbb{R}}d\tau\,\frac{m_{0}}{2}v_{\alpha}(\tau)v^{\alpha}(\tau)-\int_{\mathbb{R}}d\tau\,eA_{\alpha}(x(\tau))v^{\alpha}(\tau)+\int_{\mathbb{A}^{4}(\mathbb{V_{\mathrm{M}}^{\mathrm{4}}},g)}d\mu(x)\,\frac{1}{4\mu_{0}c}F_{\alpha\beta}(x)F^{\alpha\beta}(x)\,.\label{eq:3-1}
\end{eqnarray}
Corresponding to (\ref{eq:3-1}), a new action integral of a stochastic
particle and a field is proposed via the introduction of the mass
measure and the charge measure: 
\begin{defn}[Mass and charge measures]
\begin{leftbar}Let $\mathscr{\mathcal{\mathfrak{M}}}$ and $\mathfrak{\mathfrak{E}}$
be the mass measure and the charge measure of a stochastic scalar
electron. For the positive constants $m_{0}$ and $e$, $\mathscr{\mathcal{\mathfrak{M}}}$
and $\mathfrak{\mathfrak{E}}$ are characterized by
\begin{equation}
\int_{\mathbb{A}^{4}(\mathbb{V_{\mathrm{M}}^{\mathrm{4}}},g)}d\mathcal{\mathfrak{M}}(x,\tau)\coloneqq m_{0}\times\int_{\mathbb{A}^{4}(\mathbb{V_{\mathrm{M}}^{\mathrm{4}}},g)}d\mu(x)\,\mathbb{E}\left\llbracket \delta^{4}(x-\hat{x}(\tau,\bullet))\right\rrbracket ,
\end{equation}
\begin{equation}
\int_{\mathbb{A}^{4}(\mathbb{V_{\mathrm{M}}^{\mathrm{4}}},g)}d\mathcal{\mathfrak{E}}(x,\tau)\coloneqq e\times\int_{\mathbb{A}^{4}(\mathbb{V_{\mathrm{M}}^{\mathrm{4}}},g)}d\mu(x)\,\mathbb{E}\left\llbracket \delta^{4}(x-\hat{x}(\tau,\bullet))\right\rrbracket .
\end{equation}
\end{leftbar}
\end{defn}
The key of this definition is the appearance of the smeared distribution
$\mathbb{E}\left\llbracket \delta^{4}(x-\hat{x}(\tau,\bullet))\right\rrbracket d\mu(x)$
from $\delta^{4}(x-x(\tau))d\mu(x)$ in classical dynamics.
\begin{defn}[Action integral]
\begin{leftbar}\label{Action integral}The following functional
$\mathfrak{S}$ is the action integral deriving the dynamics of a
``stochastic'' scalar electron and a field characterized by $\mathscr{\mathcal{V}}\in\mathbb{V_{\mathrm{M}}^{\mathrm{4}}}\oplus i\mathbb{V_{\mathrm{M}}^{\mathrm{4}}}$,
$A\in\mathbb{V_{\mathrm{M}}^{\mathrm{4}}}$ with the help by $F\in\mathbb{V_{\mathrm{M}}^{\mathrm{4}}}\otimes\mathbb{V_{\mathrm{M}}^{\mathrm{4}}}$
and a given tensor $\delta f\in\mathbb{V_{\mathrm{M}}^{\mathrm{4}}}\otimes\mathbb{V_{\mathrm{M}}^{\mathrm{4}}}$:
\begin{eqnarray}
\mathfrak{S}[\hat{x},\mathcal{V},\mathcal{V}^{*},A] & = & \int_{\mathbb{R}}d\tau\int_{\mathbb{A}^{4}(\mathbb{V_{\mathrm{M}}^{\mathrm{4}}},g)}d\mathcal{\mathfrak{M}}(x,\tau)\,\frac{1}{2}\mathcal{V}_{\alpha}^{*}(x)\mathcal{V}^{\alpha}(x)\nonumber \\
 &  & -\int_{\mathbb{R}}d\tau\int_{\mathbb{A}^{4}(\mathbb{V_{\mathrm{M}}^{\mathrm{4}}},g)}d\mathcal{\mathfrak{E}}(x,\tau)\,A_{\alpha}(x)\mathrm{Re}\left\{ \mathcal{V}^{\alpha}(x)\right\} \nonumber \\
 &  & +\int_{\mathbb{A}^{4}(\mathbb{V_{\mathrm{M}}^{\mathrm{4}}},g)}d\mu(x)\,\frac{1}{4\mu_{0}c}[F_{\alpha\beta}(x)+\delta f_{\alpha\beta}(x)]\cdot[F^{\alpha\beta}(x)+\delta f^{\alpha\beta}(x)]\label{eq:Lagrangian-1}
\end{eqnarray}
Where, $F^{\alpha\beta}(x)\coloneqq\partial^{\mu}A^{\nu}-\partial^{\nu}A^{\mu}$.
By writing the detail of the measures explicitly,
\begin{eqnarray}
\mathfrak{S}[\hat{x},\mathcal{V},\mathcal{V}^{*},A] & = & \mathbb{E}\left\llbracket \int_{\mathbb{R}}d\tau\,\frac{m_{0}}{2}\mathcal{V}_{\alpha}^{*}(\hat{x}(\tau,\bullet))\mathcal{V}^{\alpha}(\hat{x}(\tau,\bullet))\right\rrbracket \nonumber \\
 &  & +\mathbb{E}\left\llbracket -\int_{\mathbb{R}}d\tau\,e\,A_{\alpha}(\hat{x}(\tau,\bullet))\mathrm{Re}\{\mathcal{V}^{\alpha}(\hat{x}(\tau,\bullet))\}\right\rrbracket \nonumber \\
 &  & +\frac{1}{4\mu_{0}c}\int_{\mathbb{A}^{4}(\mathbb{V_{\mathrm{M}}^{\mathrm{4}}},g)}d\mu(x)\,[F_{\alpha\beta}(x)+\delta f_{\alpha\beta}(x)]\cdot[F^{\alpha\beta}(x)+\delta f^{\alpha\beta}(x)]\,.\label{eq:Lagrangian-1'}
\end{eqnarray}
\end{leftbar}
\end{defn}
Hence, the Lagrangian density $\mathfrak{L}$ is also introduced;
\begin{eqnarray}
\mathfrak{L}(x,\hat{x},\mathcal{V},\mathcal{V}^{*},A) & = & \int_{\mathbb{R}}d\tau\,\left[\frac{1}{2}\frac{d\mathcal{\mathfrak{M}}}{d\mu}(x,\tau)\,\mathcal{V}_{\alpha}^{*}(x)\mathcal{V}^{\alpha}(x)-\frac{d\mathcal{\mathfrak{E}}}{d\mu}(x,\tau)\,A_{\alpha}(x)\mathrm{Re}\left\{ \mathcal{V}^{\alpha}(x)\right\} \right]\nonumber \\
 &  & +\frac{1}{4\mu_{0}c}[F_{\alpha\beta}(x)+\delta f_{\alpha\beta}(x)]\cdot[F^{\alpha\beta}(x)+\delta f^{\alpha\beta}(x)]
\end{eqnarray}
such that $\mathfrak{S}[\hat{x},\mathcal{V},\mathcal{V}^{*},A]=\int_{\mathbb{A}^{4}(\mathbb{V_{\mathrm{M}}^{\mathrm{4}}},g)}d\mu(x)\,\mathfrak{L}(x,\hat{x},\mathcal{V},\mathcal{V}^{*},A)$

\subsection{Dynamics of a scalar electron}

The Lagrangian of a stochastic scalar electron with its interaction
is 
\begin{eqnarray}
L_{\mathrm{particle}}[\hat{x},\mathcal{V},\mathcal{V}^{*}] & \coloneqq & \int_{\mathbb{A}^{4}(\mathbb{V_{\mathrm{M}}^{\mathrm{4}}},g)}d\mathcal{\mathfrak{M}}(x,\tau)\,\frac{1}{2}\mathcal{V}_{\alpha}^{*}(x)\mathcal{V}^{\alpha}(x)\nonumber \\
 &  & -\int_{\mathbb{A}^{4}(\mathbb{V_{\mathrm{M}}^{\mathrm{4}}},g)}d\mathcal{\mathfrak{E}}(x,\tau)\,A_{\alpha}(x)\mathrm{Re}\left\{ \mathcal{V}^{\alpha}(x)\right\} \,.
\end{eqnarray}
Substituting this for (\ref{eq: EL eq}), 
\begin{equation}
\mathrm{Re}\left\{ m_{0}\mathfrak{D_{\tau}}\mathcal{V}^{\mu}(\hat{x}(\tau,\omega))+e\mathcal{\hat{V}}_{\nu}(\hat{x}(\tau,\omega))F^{\mu\nu}(\hat{x}(\tau,\omega))\right\} =0\,.\label{eq: Re=00007BEOM of S-particle=00007D}
\end{equation}
Where, the following are introduced with the Lorenz gauge $\partial_{\mu}A^{\mu}=0$
\cite{Nottale(2011)}:
\begin{equation}
\hat{\mathcal{V}}^{\mu}(x)\coloneqq\mathcal{V}^{\mu}(x)+\frac{i\lambda^{2}}{2}\partial^{\mu}\label{eq: OP of complex V}
\end{equation}
\begin{equation}
\mathfrak{D_{\tau}}=\hat{\mathcal{V}}^{\mu}(x)\partial_{\mu}\label{eq: complex D_tau}
\end{equation}
\begin{equation}
\mathfrak{D_{\tau}}A_{\mu}(\hat{x}(\tau,\omega))=\hat{\mathcal{V}}^{\nu}(\hat{x}(\tau,\omega))\partial_{\nu}A_{\mu}(\hat{x}(\tau,\omega))
\end{equation}
\begin{equation}
\mathfrak{D}_{\tau}^{*}A_{\mu}(\hat{x}(\tau,\omega))=\hat{\mathcal{V}}^{*\nu}(\hat{x}(\tau,\omega))\partial_{\nu}A_{\mu}(\hat{x}(\tau,\omega))
\end{equation}
 
\begin{thm}[Equation of stochastic motion]
\begin{leftbar}\label{EOM of a particle}The equation of ``stochastic''
motion of a scalar electron interacting with a field is
\begin{equation}
\boxed{d\mathcal{\mathfrak{M}}(x,\tau)\,\mathfrak{D_{\tau}}\mathcal{V}^{\mu}(x)=-d\mathcal{\mathfrak{E}}(x,\tau)\,\mathcal{\hat{V}}_{\nu}(x)F^{\mu\nu}(x)}\label{eq: EOM of S-particle}
\end{equation}
or
\begin{equation}
\boxed{m_{0}\mathfrak{D_{\tau}}\mathcal{V}^{\mu}(\hat{x}(\tau,\omega))=-e\mathcal{\hat{V}}_{\nu}(\hat{x}(\tau,\omega))F^{\mu\nu}(\hat{x}(\tau,\omega))}\,.\label{eq: EOM of S-particle-2}
\end{equation}
derived from the action integral (\ref{eq:Lagrangian-1}-\ref{eq:Lagrangian-1'}).
This is equivalent to the Klein-Gordon equation.\end{leftbar}\end{thm}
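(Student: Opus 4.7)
The plan is to apply Theorem~\ref{EL_eq} (the Yasue--Euler--Lagrange equation) to the particle piece of the action integral of Definition~\ref{Action integral}, recognize that this produces only the real part of the proposed equation of motion, then promote it to the full complex form by using the gradient structure of $\mathcal{V}$ together with the Fokker--Planck equations of Theorem~\ref{FP eq}, and finally verify equivalence with the Klein--Gordon equation by direct substitution.

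First I would isolate the particle Lagrangian $L_{\mathrm{particle}}=(m_{0}/2)\mathcal{V}^{*}_{\alpha}\mathcal{V}^{\alpha}-eA_{\alpha}\mathrm{Re}\{\mathcal{V}^{\alpha}\}$ (the pure field piece $\propto(F+\delta f)^{2}$ is inert under variation of $\hat{x}$) and compute $\partial L/\partial\hat{x}^{\mu}=-e(\partial_{\mu}A_{\alpha})\mathrm{Re}\{\mathcal{V}^{\alpha}\}$, $\partial L/\partial\mathcal{V}^{\mu}=(m_{0}/2)\mathcal{V}^{*}_{\mu}-(e/2)A_{\mu}$, and $\partial L/\partial\mathcal{V}^{*\mu}=(m_{0}/2)\mathcal{V}_{\mu}-(e/2)A_{\mu}$. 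Feeding these into~\eqref{eq: EL eq}, using $\mathfrak{D}_{\tau}A_{\mu}=\hat{\mathcal{V}}^{\nu}\partial_{\nu}A_{\mu}$ together with the reality of $A_{\mu}$ (which gives $\mathrm{Re}\{\mathfrak{D}_{\tau}A_{\mu}\}=\mathrm{Re}\{\mathcal{V}^{\nu}\}\partial_{\nu}A_{\mu}$), and assembling the two $(e/2)A_{\mu}$ contributions via $F_{\mu\nu}=\partial_{\mu}A_{\nu}-\partial_{\nu}A_{\mu}$, yields precisely the real equation~(\ref{eq: Re=00007BEOM of S-particle=00007D}) quoted in the excerpt.

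Next I would upgrade the real equation to its full complex form. From the definition $\mathcal{V}^{\mu}=i\lambda^{2}\partial^{\mu}\ln\phi+(e/m_{0})A^{\mu}$ and commutativity of mixed partials of $\ln\phi$, the curl identity $\partial^{\nu}\mathcal{V}^{\mu}-\partial^{\mu}\mathcal{V}^{\nu}=(e/m_{0})F^{\nu\mu}$ holds identically. Combined with the Fokker--Planck characterization $\mathrm{Im}\{\mathcal{V}^{\mu}\}=(\lambda^{2}/2)\partial^{\mu}\ln p$ from Theorem~\ref{FP eq} and the equation of continuity~\eqref{eq: eq of continuity}, this lets me check that $\mathrm{Im}\{m_{0}\mathfrak{D}_{\tau}\mathcal{V}^{\mu}+e\hat{\mathcal{V}}_{\nu}F^{\mu\nu}\}$ vanishes on the trajectory, so together with Step~1 the boxed equation~\eqref{eq: EOM of S-particle-2} follows.

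For equivalence with Klein--Gordon, I would substitute $i\hbar\mathfrak{D}^{\mu}\phi=m_{0}\mathcal{V}^{\mu}\phi$ into $(i\hbar\mathfrak{D}_{\mu})(i\hbar\mathfrak{D}^{\mu})\phi=m_{0}^{2}c^{2}\phi$; the gauge-covariant product rule $\mathfrak{D}_{\mu}(\mathcal{V}^{\mu}\phi)=(\partial_{\mu}\mathcal{V}^{\mu})\phi+\mathcal{V}^{\mu}\mathfrak{D}_{\mu}\phi$ produces the scalar Hamilton--Jacobi-type identity $i\hbar\,\partial_{\mu}\mathcal{V}^{\mu}+m_{0}\mathcal{V}_{\mu}\mathcal{V}^{\mu}=m_{0}c^{2}$. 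Applying $\partial^{\nu}$, then using the curl identity to rewrite $\mathcal{V}_{\mu}\partial^{\nu}\mathcal{V}^{\mu}=\mathcal{V}^{\mu}\partial_{\mu}\mathcal{V}^{\nu}+(e/m_{0})\mathcal{V}_{\mu}F^{\nu\mu}$ and $\partial^{\nu}\partial_{\mu}\mathcal{V}^{\mu}=\partial_{\mu}\partial^{\mu}\mathcal{V}^{\nu}+(e/m_{0})\partial_{\mu}F^{\nu\mu}$, reassembles exactly $m_{0}\hat{\mathcal{V}}^{\mu}\partial_{\mu}\mathcal{V}^{\nu}=-e\hat{\mathcal{V}}_{\mu}F^{\nu\mu}$, i.e.~\eqref{eq: EOM of S-particle-2}. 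For the converse direction I would integrate the vector EOM back to the scalar identity, with the Lorentz invariant $\mathbb{E}\llbracket\mathcal{V}^{*}_{\mu}\mathcal{V}^{\mu}\rrbracket=c^{2}$ of Lemma~\ref{Lorentz_inv} fixing the integration constant to $m_{0}c^{2}$, thereby recovering Klein--Gordon.

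The main obstacle I anticipate is Step~2: the Yasue variational principle constrains only the real part of the EOM, so the imaginary part must be produced from an a priori independent ingredient (the continuity equation together with the gradient ansatz for $\mathcal{V}$). Care is needed to ensure this argument does not tacitly presuppose that $\phi$ already solves Klein--Gordon, lest the equivalence claim in Step~3 collapse into a tautology; keeping Steps~2 and~3 logically separate is where I expect most of the bookkeeping to lie.
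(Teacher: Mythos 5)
Your proposal is correct, and Steps~1 and~3 essentially reproduce the paper's own computation: the Yasue equation of Theorem~\ref{EL_eq} applied to $L_{\mathrm{particle}}$ gives exactly the real-part equation (\ref{eq: Re=00007BEOM of S-particle=00007D}), and the identity $\mathfrak{D}_{\tau}\mathcal{V}^{\mu}+(e/m_{0})\hat{\mathcal{V}}_{\nu}F^{\mu\nu}=\hat{\mathcal{V}}_{\nu}\partial^{\mu}\mathcal{V}^{\nu}=\tfrac{1}{2}\partial^{\mu}\bigl[(i\hbar\mathfrak{D}_{\nu})(i\hbar\mathfrak{D}^{\nu})\phi/(m_{0}^{2}\phi)\bigr]$ is precisely the paper's bridge to the Klein--Gordon equation (the paper runs it in the direction EOM $\Rightarrow$ KG, integrating the gradient and ``putting an arbitrary constant $c^{2}$,'' rather than your KG $\Rightarrow$ EOM plus converse). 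Where you genuinely diverge is Step~2. The paper does not \emph{derive} the vanishing of the imaginary part: it parametrizes the unconstrained imaginary part as $\tfrac{i}{2m_{0}}\partial^{\mu}f$ for an arbitrary function $f$, shows the resulting equation integrates to a Klein--Gordon equation with mass term $m_{0}^{2}c^{2}+if$, and then discards $f$ on the physical ground that an imaginary force would be a non-electromagnetic interaction. Your alternative --- noting that $\mathrm{Im}\{\hat{\mathcal{V}}_{\nu}\partial^{\mu}\mathcal{V}^{\nu}\}$ equals $\tfrac{\lambda^{2}}{2}\partial^{\mu}$ of $P^{-1}\partial_{\nu}[P\,\mathrm{Re}\{\mathcal{V}^{\nu}\}]$ with $P(x)=\int_{\mathbb{R}}d\tau\,p(x,\tau)$, which vanishes by the $\tau$-integrated continuity equation and the second branch of \eqref{eq: osmotic pressure1} --- replaces the paper's plausibility argument with a kinematic derivation, which is a real gain; just be explicit that the relevant density is the $\tau$-integrated one (since $\mathcal{V}$ depends on $x$ alone) and that you are implicitly using the identification $\phi^{*}\phi\propto\int d\tau\,p$ that the paper only pins down in Proposition~\ref{p-density}. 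One caution on your converse in Step~3: Lemma~\ref{Lorentz_inv} is itself proved under the assumption that $\phi$ solves the Klein--Gordon equation, so invoking it to fix the integration constant is circular; the normalization $\mathbb{E}\llbracket\mathcal{V}_{\mu}^{*}\mathcal{V}^{\mu}\rrbracket=c^{2}$ should instead be taken from the proper-time convention of Definition~\ref{Proper_TIME} (or simply imposed, as the paper does).
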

\begin{proof}
Let an arbitrary smooth $C^{1,0}$-function $f:\mathbb{A}^{4}(\mathbb{V_{\mathrm{M}}^{\mathrm{4}}},g)\times\mathbb{R}\rightarrow\mathbb{R}$
be a degree of freedom of the imaginary part of (\ref{eq: Re=00007BEOM of S-particle=00007D}),
namely, 
\begin{equation}
m_{0}\mathfrak{D_{\tau}}\mathcal{V}^{\mu}(\hat{x}(\tau,\omega))=-e\mathcal{\hat{V}}_{\nu}(\hat{x}(\tau,\omega))F^{\mu\nu}(\hat{x}(\tau,\omega))+\frac{i}{2m_{0}}\partial^{\mu}f(\hat{x}(\tau,\omega),\tau)\,.\label{eq: EOM of S-particle-1}
\end{equation}
Transforming $\mathfrak{D_{\tau}}\mathcal{V}^{\mu}+e/m_{0}\times\mathcal{\hat{V}}_{\nu}F^{\mu\nu}$
by employing (\ref{eq: complex D_tau}),
\begin{equation}
\mathfrak{D_{\tau}}\mathcal{V}^{\mu}+\frac{e}{m_{0}}\mathcal{\hat{V}}_{\nu}F^{\mu\nu}=\mathcal{\hat{V}}_{\nu}\left[\partial^{\nu}\mathcal{V}^{\mu}+\frac{e}{m_{0}}F^{\mu\nu}\right]=\mathcal{\hat{V}}_{\nu}\partial^{\mu}\mathcal{V}^{\nu}\,,
\end{equation}
since the identity 
\begin{equation}
\partial^{\alpha}\mathcal{V}^{\beta}-\partial^{\beta}\mathcal{V}^{\alpha}=\frac{e}{m_{0}}F^{\alpha\beta}
\end{equation}
derived from (\ref{eq: Complex velocity}) (also see Ref.\cite{Nottale(2011)}).
By substituting (\ref{eq: Complex velocity}) and the above for (\ref{eq: EOM of S-particle-1}),
\begin{eqnarray}
\mathcal{\hat{V}}_{\nu}\partial^{\mu}\mathcal{V}^{\nu}-\frac{i}{2m_{0}^{2}}\partial^{\mu}f & = & \left[i\lambda^{2}\times\partial_{\nu}\ln\phi+\frac{e}{m_{0}}A_{\nu}+\frac{i\lambda^{2}}{2}\partial_{\nu}\right]\times\partial^{\mu}\left[i\lambda^{2}\times\partial{}^{\nu}\ln\phi+\frac{e}{m_{0}}A{}^{\nu}\right]-\frac{i}{2m_{0}^{2}}\partial^{\mu}f\nonumber \\
 & = & \begin{gathered}\frac{1}{2}\partial^{\mu}\left[\frac{(i\hbar\partial_{\nu}+eA_{\nu})(i\hbar\partial^{\nu}+eA{}^{\nu})\phi-if\phi}{m_{0}^{2}\phi}\right]\end{gathered}
=0\,.
\end{eqnarray}
Thus, the Klein-Gordon equation is found by putting an arbitrary
constant $c^{2}$,
\begin{equation}
(i\hbar\partial_{\nu}+eA_{\nu})(i\hbar\partial^{\nu}+eA{}^{\nu})\phi-(m_{0}^{2}c^{2}+if)\phi=0\,.
\end{equation}
Thus, the imaginary force of $i/2m_{0}\times\partial^{\mu}f$ implies
a non-electromagnetic interaction. This interaction should be removed
or be rounded into the free-propagation term of a scalar electron
as the mass, hence, $f\equiv0$ is feasible in physics. Then, the
normal Klein-Gordon equation is derived;
\begin{equation}
(i\hbar\partial_{\nu}+eA_{\nu})(i\hbar\partial^{\nu}+eA{}^{\nu})\phi-m_{0}^{2}c^{2}\phi=0\,.
\end{equation}
Therefore, the equation of motion (\ref{eq: EOM of S-particle} or
\ref{eq: EOM of S-particle-2}) is equivalent to the Klein-Gordon
equation.

\end{proof}
Concerning {\bf Theorem \ref{Gauge_inv_V}}, the following theorem
is essential and obviously fulfilled since $\mathcal{V}$, $\mathcal{\hat{V}}$,
$\mathfrak{D_{\tau}}$ and $F$ are $U(1)$-gauge invariant which
the Klein-Gordon equation satisfies, too.
\begin{thm}[Gauge symmetry]
\begin{leftbar}The equation of motion (\ref{eq: EOM of S-particle}
or \ref{eq: EOM of S-particle-2}) satisfies the $U(1)$-gauge symmetry.\end{leftbar}
\end{thm}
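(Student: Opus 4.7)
The plan is to reduce the gauge invariance of (\ref{eq: EOM of S-particle-2}) to the already-established gauge invariance of its building blocks. By \textbf{Theorem \ref{Gauge_inv_V}}, the complex velocity $\mathcal{V}^{\alpha}$ is unchanged under the $U(1)$ transformation $\phi \mapsto e^{-ie\varLambda/\hbar}\phi$, $A^{\alpha} \mapsto A^{\alpha} - \partial^{\alpha}\varLambda$. Since the partial derivative operator $\partial^{\mu}$ is manifestly independent of the choice of $(\phi,A)$, the shifted object $\hat{\mathcal{V}}^{\mu} = \mathcal{V}^{\mu} + (i\lambda^{2}/2)\,\partial^{\mu}$ from (\ref{eq: OP of complex V}) is gauge invariant, and so is the complex time derivative $\mathfrak{D}_{\tau} = \hat{\mathcal{V}}^{\mu}\partial_{\mu}$ in (\ref{eq: complex D_tau}), together with its complex conjugate $\mathfrak{D}_{\tau}^{*}$.

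Next I would verify the textbook fact that $F^{\mu\nu} = \partial^{\mu}A^{\nu} - \partial^{\nu}A^{\mu}$ is gauge invariant by direct substitution:
\begin{equation}
F'^{\mu\nu} = \partial^{\mu}(A^{\nu}-\partial^{\nu}\varLambda) - \partial^{\nu}(A^{\mu}-\partial^{\mu}\varLambda) = F^{\mu\nu},
\end{equation}
using the symmetry of mixed partials for the $C^{2}$-function $\varLambda$. Consequently the RHS of (\ref{eq: EOM of S-particle-2}), namely $-e\hat{\mathcal{V}}_{\nu}F^{\mu\nu}$, is a product of gauge-invariant factors multiplied by the coupling constant $e$, hence gauge invariant; likewise the LHS $m_{0}\mathfrak{D}_{\tau}\mathcal{V}^{\mu}$ is a gauge-invariant operator acting on a gauge-invariant field multiplied by the constant $m_{0}$. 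Both sides transform identically (trivially) under $(\phi,A)\mapsto(\phi',A')$, so the equation of motion is preserved.

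The only conceptual point worth checking is that $\mathcal{V}$, $\hat{\mathcal{V}}$, $\mathfrak{D}_{\tau}$, and $F$ are all evaluated at the stochastic position $\hat{x}(\tau,\omega)$. This causes no difficulty because the gauge transformations of these objects are pointwise statements on $\mathbb{A}^{4}(\mathbb{V_{\mathrm{M}}^{\mathrm{4}}},g)$ that were derived without reference to the random variable $\hat{x}$, so composing with $\hat{x}(\tau,\omega)$ preserves invariance path by path. There is thus no genuine obstacle; the statement amounts to routine bookkeeping on top of \textbf{Theorem \ref{Gauge_inv_V}}, the manifest invariance of $\partial^{\mu}$, and the classical gauge invariance of $F^{\mu\nu}$.
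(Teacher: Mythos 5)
Your argument is correct and is essentially the paper's own justification: the paper dispenses with a formal proof by noting that $\mathcal{V}$, $\hat{\mathcal{V}}$, $\mathfrak{D}_{\tau}$ and $F$ are each $U(1)$-gauge invariant (the first by Theorem \ref{Gauge_inv_V}, the last by the standard cancellation of $\partial^{\mu}\partial^{\nu}\varLambda$), so both sides of (\ref{eq: EOM of S-particle-2}) are built from invariant pieces. Your version merely spells out these steps, including the harmless composition with $\hat{x}(\tau,\omega)$, so it matches the paper's approach.
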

Equations (\ref{eq: EOM of S-particle}) or (\ref{eq: EOM of S-particle-2})
are very similar style to classical dynamics, namely, $m_{0}dv^{\mu}/d\tau=-ev_{\nu}F^{\mu\nu}$.
Ehrenfest's theorem \cite{Ehrenfest} of it implies the average behavior
of this stochastic scalar electron like this classical form. It is
discussed at {\bf Section \ref{Sect4}}.

\subsection{Dynamics of fields}

Let us proceed the dynamics of the field radiated from a stochastic
scalar electron. The Maxwell equation is derived by the variation
of (\ref{eq:Lagrangian-1'}) with respect to $A\in\mathbb{V_{\mathrm{M}}^{\mathrm{4}}}$,
namely, $\partial_{\mu}[\partial\mathfrak{L}_{\mathrm{field}}/\partial(\partial_{\mu}A_{\nu})]-\partial\mathfrak{L}_{\mathrm{field}}/\partial A_{\nu}=0$.
Where, the Lagrangian density for a field is,
\begin{eqnarray}
\mathfrak{L}_{\mathrm{field}}[\hat{x},A] & = & -\int_{\mathbb{R}}d\tau\,\frac{d\mathcal{\mathfrak{E}}}{d\mu}(x,\tau)\,A_{\alpha}(x)\mathrm{Re}\left\{ \mathcal{V}^{\alpha}(x)\right\} \nonumber \\
 &  & +\frac{1}{4\mu_{0}c}[F_{\alpha\beta}(x)+\delta f_{\alpha\beta}(x)]\cdot[F^{\alpha\beta}(x)+\delta f^{\alpha\beta}(x)]\,.
\end{eqnarray}

\begin{thm}[Maxwell equation]
\begin{leftbar}\label{Maxwell}Let a D-progressive $\hat{x}(\circ,\bullet)$
be the trajectory of a stochastic scalar electron. The variation of
(\ref{eq:Lagrangian-1'}) with respect to a field $A\in\mathbb{V_{\mathrm{M}}^{\mathrm{4}}}$
derives the following Maxwell equation: 
\begin{equation}
\boxed{\partial_{\mu}[F^{\mu\nu}(x)+\delta f^{\mu\nu}(x)]=\mu_{0}\times j_{\mathrm{stochastic}}^{\mu}(x)}\label{eq: Maxwell eq}
\end{equation}
Where, $\delta f\in\mathbb{V_{\mathrm{M}}^{\mathrm{4}}}\otimes\mathbb{V_{\mathrm{M}}^{\mathrm{4}}}$
is a given field. The current of a stochastic scalar electron
\begin{equation}
j_{\mathrm{stochastic}}^{\mu}(x)\coloneqq\mathbb{E}\left\llbracket -ec\int_{\mathbb{R}}d\tau\,\mathrm{Re}\left\{ \mathcal{V}^{\mu}(x)\right\} \delta^{4}(x-\hat{x}(\tau,\bullet))\right\rrbracket \label{eq:j-stochastic}
\end{equation}
is equivalent to the current of Klein-Gordon particle 
\begin{equation}
j_{\mathrm{\mathrm{K\mathchar`-G}}}^{\mu}(x)=-\frac{iec\lambda^{2}}{2}\times g^{\mu\nu}\left[\phi^{*}(x)\mathfrak{D}_{\nu}\phi(x)-\phi(x)\mathfrak{D}_{\nu}^{*}\phi^{*}(x)\right]\,.\label{eq:j-KG}
\end{equation}
Where, $\mathfrak{D}^{\mu}\coloneqq\partial{}^{\mu}-ie/\hbar\times A{}^{\mu}(x)$.\end{leftbar}

\end{thm}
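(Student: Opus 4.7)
The plan splits into the two assertions of the theorem: the field equation itself, and the equivalence of the stochastic current with the Klein-Gordon current.

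For the field equation, I would treat $\mathfrak{S}[\hat{x},\mathcal{V},\mathcal{V}^{*},A]$ in (\ref{eq:Lagrangian-1'}) as a functional of four independent arguments, so that varying $A$ only moves the explicit $A$-dependence in the coupling term $-\int d\tau\int d\mu(x)\,(d\mathfrak{E}/d\mu)(x,\tau)\,A_{\alpha}(x)\,\mathrm{Re}\{\mathcal{V}^{\alpha}(x)\}$ and in the field term $(4\mu_{0}c)^{-1}(F+\delta f)_{\alpha\beta}(F+\delta f)^{\alpha\beta}$. Reading off $\partial\mathfrak{L}_{\mathrm{field}}/\partial A_{\nu}=-\int d\tau\,(d\mathfrak{E}/d\mu)(x,\tau)\,\mathrm{Re}\{\mathcal{V}^{\nu}(x)\}$ and, using the antisymmetry of $F$ together with $\delta F_{\alpha\beta}/\delta(\partial_{\mu}A_{\nu})=\delta_{\alpha}^{\mu}\delta_{\beta}^{\nu}-\delta_{\alpha}^{\nu}\delta_{\beta}^{\mu}$, $\partial\mathfrak{L}_{\mathrm{field}}/\partial(\partial_{\mu}A_{\nu})=(\mu_{0}c)^{-1}(F^{\mu\nu}+\delta f^{\mu\nu})$, the Euler-Lagrange equation for the gauge field yields $\partial_{\mu}(F^{\mu\nu}+\delta f^{\mu\nu})=-\mu_{0}c\int d\tau\,(d\mathfrak{E}/d\mu)(x,\tau)\,\mathrm{Re}\{\mathcal{V}^{\nu}(x)\}$. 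Substituting the definition of $d\mathfrak{E}$ in terms of $\mathbb{E}\llbracket\delta^{4}(x-\hat{x}(\tau,\bullet))\rrbracket$ rewrites the right-hand side as $\mu_{0}j_{\mathrm{stochastic}}^{\nu}(x)$, giving (\ref{eq: Maxwell eq}) with the current in the form (\ref{eq:j-stochastic}).

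For the equivalence of currents I would move to the polar parametrization $\phi=|\phi|\exp(iS/\hbar)$ already used in the discussion of the proper time. The explicit formula (\ref{eq: 2-7}) immediately yields $\mathrm{Re}\{\mathcal{V}^{\mu}(x)\}=-(\partial^{\mu}S-eA^{\mu})/m_{0}$ and $\mathrm{Im}\{\mathcal{V}^{\mu}(x)\}=(\hbar/m_{0})\partial^{\mu}\ln|\phi|$. Comparing the second identity with the osmotic-pressure equation (\ref{eq: osmotic pressure1}) from Theorem \ref{FP eq}, i.e. $\mathrm{Im}\{\mathcal{V}^{\mu}(x)\}=(\lambda^{2}/2)\,\partial^{\mu}\ln\int_{\mathbb{R}}d\tau\,p(x,\tau)$ on the full support, and invoking $\lambda^{2}=\hbar/m_{0}$, one obtains $\partial^{\mu}\ln\int d\tau\,p(x,\tau)=\partial^{\mu}\ln|\phi(x)|^{2}$, i.e. $\int d\tau\,p(x,\tau)=\phi^{*}(x)\phi(x)$ after fixing the overall normalization by (\ref{eq: P-measure}). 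Since $\mathcal{V}^{\mu}(x)$ is a deterministic function of $x$, one then has $j_{\mathrm{stochastic}}^{\mu}(x)=-ec\,\mathrm{Re}\{\mathcal{V}^{\mu}(x)\}\int d\tau\,p(x,\tau)=(ec/m_{0})\,\phi^{*}\phi\,(\partial^{\mu}S-eA^{\mu})$.

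It remains to rearrange this into the Klein-Gordon form. A direct computation using $\phi^{*}\mathfrak{D}^{\mu}\phi-\phi\mathfrak{D}^{*\mu}\phi^{*}=\phi^{*}\partial^{\mu}\phi-\phi\partial^{\mu}\phi^{*}-(2ieA^{\mu}/\hbar)\phi^{*}\phi=(2i|\phi|^{2}/\hbar)(\partial^{\mu}S-eA^{\mu})$ shows that $j_{\mathrm{K\mathchar`-G}}^{\mu}$ as defined in (\ref{eq:j-KG}) evaluates to the same $(ec/m_{0})\phi^{*}\phi(\partial^{\mu}S-eA^{\mu})$, closing the identification $j_{\mathrm{stochastic}}^{\mu}=j_{\mathrm{K\mathchar`-G}}^{\mu}$. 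The main obstacle is the justification of the identification $\int d\tau\,p(x,\tau)=\phi^{*}(x)\phi(x)$: the osmotic-pressure formula (\ref{eq: osmotic pressure1}) only determines this quantity up to a multiplicative constant, so one must argue that the natural boundary behavior already used in the proof of Lemma \ref{Lorentz_inv} and the normalization convention (\ref{eq: P-measure}) together pin down that constant, and that the domain on which the two polar-decomposition identities are compared is compatible (the full support $\bigcup_{\tau}\hat{x}(\tau,\varOmega)$ rather than a single time-slice $\hat{x}(\tau,\varOmega)$).
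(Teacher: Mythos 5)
Your proposal is correct, and at the one non-trivial step --- identifying $j_{\mathrm{stochastic}}^{\mu}$ of (\ref{eq:j-stochastic}) with the Klein--Gordon current (\ref{eq:j-KG}) --- you take a genuinely different route from the paper. Both arguments reduce to showing that $\int_{\mathbb{R}}d\tau\,p(x,\tau)\big/\phi^{*}(x)\phi(x)$ is a constant that may be normalized to $1$. The paper writes $j_{\mathrm{stochastic}}^{\mu}=\left[\int_{\mathbb{R}}d\tau\,p/(\phi^{*}\phi)\right]\times j_{\mathrm{K\mathchar`-G}}^{\mu}$ and infers constancy of the prefactor from the fact that both currents are divergence-free (the stochastic one by the continuity equation (\ref{eq: eq of continuity}), the other by the Klein--Gordon equation); strictly speaking $\partial_{\mu}(fj^{\mu})=0$ together with $\partial_{\mu}j^{\mu}=0$ only yields $j^{\mu}\partial_{\mu}f=0$, i.e.\ constancy along the integral curves of the current. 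Your route instead equates the two available expressions for $\mathrm{Im}\{\mathcal{V}^{\mu}\}$ --- the osmotic-pressure formula (\ref{eq: osmotic pressure1}) and the polar decomposition of (\ref{eq: 2-7}) --- to obtain $\partial^{\mu}\ln\left[\int_{\mathbb{R}}d\tau\,p/(\phi^{*}\phi)\right]=0$ for every index $\mu$, which genuinely forces constancy on a connected support; this is the tighter argument, and your explicit check that both currents evaluate to $(ec/m_{0})\,\phi^{*}\phi\,(\partial^{\mu}S-eA^{\mu})$ is correct. The residual issue you flag is shared with the paper: the value $1$ of the constant is not derived but imposed (the paper states it outright and then enshrines it as Proposition \ref{p-density}), and the appeal to (\ref{eq: P-measure}) cannot fix it, since that condition normalizes $p(\cdot,\tau)$ slice by slice while $\phi$ carries no independent normalization; it is a normalization convention for $\phi$, exactly as in the paper. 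Your variational derivation of (\ref{eq: Maxwell eq}) itself correctly fills in the step the paper dismisses as obvious.
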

\begin{rem}
The tensor $\delta f^{\mu\nu}(x)$ is introduced to remove the field
singularity at the point of an electron. For the discussion of radiation
reaction in {\bf Volume II} \cite{Vol II}, the generated field has
to be separated into the homogeneous solution $F\in\mathbb{V_{\mathrm{M}}^{\mathrm{4}}}\otimes\mathbb{V_{\mathrm{M}}^{\mathrm{4}}}$
such that $\partial_{\mu}F^{\mu\nu}=0$, and the singularity as a
Coulomb field by $\partial_{\mu}\delta f^{\mu\nu}=\mu_{0}\times j_{\mathrm{stochastic}}^{\mu}$.
We regards (\ref{eq: Maxwell eq}) as the superposition of these two
equations. The detail of this discussion is in {\bf Volume II} \cite{Vol II}.\end{rem}
\begin{proof}
The derivation of the Maxwell equation (\ref{eq: Maxwell eq}) is
obvious. The current $j_{\mathrm{stochastic}}^{\mu}(x)$ is calculated
by using (\ref{eq: Complex velocity}):
\begin{eqnarray}
j_{\mathrm{stochastic}}^{\mu}(x) & = & -ec\int_{\mathbb{R}}d\tau\,\mathrm{Re}\left\{ \mathcal{V}^{\alpha}(x)\right\} p(x,\tau)\nonumber \\
 & = & \begin{gathered}\frac{\int_{\mathbb{R}}d\tau\,p(x,\tau)}{\phi^{*}(x)\phi(x)}\times j_{\mathrm{K-G}}^{\mu}(x)\end{gathered}
,\,\,x\in\bigcup_{\tau\in\mathbb{R}}\hat{x}(\tau,\varOmega)
\end{eqnarray}
Hereby, $j_{\mathrm{stochastic}}(x)\in\mathbb{V_{\mathrm{M}}^{\mathrm{4}}}$
satisfies $\partial_{\mu}j_{\mathrm{stochastic}}^{\mu}(x)=-ec\partial_{\mu}[\mathrm{Re}\left\{ \mathcal{V}^{\mu}(x)\right\} \int_{\mathbb{R}}d\tau\,p(x,\tau)]=0$
due to (\ref{eq: eq of continuity}) with its boundary condition $p(x,\tau=\partial\mathbb{R})=0$.
Of cause, $\partial_{\mu}j_{\mathrm{K\mathchar`-G}}^{\mu}(x)=0$ is
held, too. Thus,
\begin{equation}
\frac{\int_{\mathbb{R}}d\tau\,p(x,\tau)}{\phi^{*}(x)\phi(x)}=\mathrm{Constant}
\end{equation}
has to be imposed, $\partial_{\mu}[F^{\mu\nu}(x)+\delta f^{\mu\nu}(x)]=\mu_{0}j_{\mathrm{K\mathchar`-G}}^{\nu}$
is realized by $\int_{\mathbb{R}}d\tau\,p(x,\tau)/\phi^{*}(x)\phi(x)=1$.
\end{proof}

\begin{prop}
\begin{leftbar}\label{p-density}For the realization of the Klein-Gordon
equation and the Maxwell equation from the action integral (\ref{eq:Lagrangian-1}-\ref{eq:Lagrangian-1'}),
the following is required with respect to $x\in\bigcup_{\tau\in\mathbb{R}}\hat{x}(\tau,\varOmega)$:
\begin{eqnarray}
\phi^{*}(x)\phi(x) & \coloneqq & \int_{\mathbb{R}}d\tau\,\mathbb{E}\left\llbracket \delta^{4}(x-\hat{x}(\tau,\bullet))\right\rrbracket \nonumber \\
 & = & \int_{\mathbb{R}}d\tau\,p(x,\tau)
\end{eqnarray}
\end{leftbar}
\end{prop}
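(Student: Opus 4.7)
The plan is to extract the required identity as a consistency condition between two independently-derived expressions for the electromagnetic source. Since Theorem \ref{EOM of a particle} already establishes that the stochastic equation of motion is equivalent to the Klein-Gordon equation $(i\hbar\mathfrak{D}_\nu)(i\hbar\mathfrak{D}^\nu)\phi - m_0^2 c^2 \phi = 0$, any current appearing in the Maxwell equation must simultaneously be representable as $j_{\mathrm{stochastic}}^\mu$ from the definition (\ref{eq:j-stochastic}) and as $j_{\mathrm{K\mathchar`-G}}^\mu$ from (\ref{eq:j-KG}). I would establish the required identification by matching these two representations pointwise on $\bigcup_{\tau\in\mathbb{R}}\hat{x}(\tau,\varOmega)$.

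First I would invoke the Madelung-type decomposition $\phi(x) = \exp[R(x)/\hbar + iS(x)/\hbar]$ with real $R,S$, which was already used in the proof-of-concept for the proper time (Lemma \ref{Lorentz_inv}). Plugging this into the definition (\ref{eq: Complex velocity}) gives
\[
\mathrm{Re}\{\mathcal{V}^\mu(x)\} \;=\; \frac{1}{m_0}\bigl[-\partial^\mu S(x) + e A^\mu(x)\bigr],
\]
while an analogous computation for the Klein-Gordon current yields
\[
j_{\mathrm{K\mathchar`-G}}^\mu(x) \;=\; \frac{ec}{m_0}\,\phi^*(x)\phi(x)\bigl[\partial^\mu S(x) - e A^\mu(x)\bigr].
\]
Substituting the first formula into (\ref{eq:j-stochastic}) and pulling the $\tau$-independent factor outside of the integral, I obtain the proportionality
\[
j_{\mathrm{stochastic}}^\mu(x) \;=\; \frac{\int_{\mathbb{R}} d\tau\, p(x,\tau)}{\phi^*(x)\phi(x)}\;\times\; j_{\mathrm{K\mathchar`-G}}^\mu(x).
\]

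The next step is to argue that the proportionality factor $\Lambda(x) \coloneqq \int_{\mathbb{R}} d\tau\,p(x,\tau)/\phi^*(x)\phi(x)$ is in fact a constant. Both currents are separately conserved: $\partial_\mu j_{\mathrm{K\mathchar`-G}}^\mu = 0$ from the Klein-Gordon equation, and $\partial_\mu j_{\mathrm{stochastic}}^\mu = 0$ from the equation of continuity (\ref{eq: eq of continuity}) in Theorem \ref{FP eq} together with the natural boundary condition $p(x,\partial\mathbb{R}) = 0$. Then $0 = \partial_\mu(\Lambda \, j_{\mathrm{K\mathchar`-G}}^\mu) = (\partial_\mu \Lambda)\, j_{\mathrm{K\mathchar`-G}}^\mu$, which forces $\partial_\mu \Lambda = 0$ on the support where $j_{\mathrm{K\mathchar`-G}}^\mu$ does not vanish identically. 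Choosing the standard normalization of the Klein-Gordon probability amplitude $\phi$ so that $\Lambda \equiv 1$, and recalling the representation $p(x,\tau) = \mathbb{E}\llbracket\delta^{4}(x-\hat{x}(\tau,\bullet))\rrbracket$, the claimed equality
\[
\phi^*(x)\phi(x) \;=\; \int_{\mathbb{R}} d\tau\, \mathbb{E}\llbracket\delta^{4}(x-\hat{x}(\tau,\bullet))\rrbracket \;=\; \int_{\mathbb{R}} d\tau\, p(x,\tau)
\]
follows on $\bigcup_{\tau\in\mathbb{R}}\hat{x}(\tau,\varOmega)$.

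The main obstacle I expect is the global (as opposed to local) validity of $\Lambda \equiv \mathrm{const}$: one must verify that $j_{\mathrm{K\mathchar`-G}}^\mu$ does not vanish on any open subregion of $\bigcup_{\tau\in\mathbb{R}}\hat{x}(\tau,\varOmega)$ where $\Lambda$ could otherwise drift, and that the supports $\mathrm{supp}(\phi^*\phi)$ and $\mathrm{supp}(\int d\tau\,p(\circ,\tau))$ coincide so that the proportionality is meaningful throughout. Both are reasonable physical requirements for a true Brownian trajectory of a Klein-Gordon particle, consistent with Conjecture \ref{conj_kinematics}, but they should be stated as standing hypotheses on $\phi$ and $\hat{x}(\circ,\bullet)$ rather than derived, and it is this dependence on an additional regularity assumption that makes the result a necessary consistency requirement rather than a theorem in the strict sense.
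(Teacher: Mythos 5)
Your proposal is correct and takes essentially the same route as the paper, which establishes this proposition inside the proof of Theorem \ref{Maxwell}: there the ratio $\int_{\mathbb{R}}d\tau\,p(x,\tau)/\phi^{*}(x)\phi(x)$ appears as the proportionality factor between $j_{\mathrm{stochastic}}^{\mu}$ of (\ref{eq:j-stochastic}) and the Klein--Gordon current (\ref{eq:j-KG}), the conservation of both currents is used to argue that this ratio is constant, and the constant is normalized to $1$. Your explicit Madelung computation of $\mathrm{Re}\{\mathcal{V}^{\mu}\}$ and your caveat that current conservation alone only yields $(\partial_{\mu}\Lambda)j^{\mu}=0$ (orthogonality, not vanishing, of the gradient of $\Lambda$) merely make explicit steps the paper leaves implicit, consistent with the paper's own framing of the identity as a condition that ``has to be imposed'' rather than a derived fact.
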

The plot of $\int_{\mathbb{R}}d\tau\,\mathbb{E}\llbracket\delta^{4}(x-\hat{x}(\tau,\bullet))\rrbracket$
denotes the distribution of a scalar electron in  $\hat{x}(\tau,\varOmega)\subset\mathbb{A}^{4}(\mathbb{V_{\mathrm{M}}^{\mathrm{4}}},g)$.

\section{Conclusion and discussion\label{Sect4}}

In this {\bf Volume I}, we discussed the relativistic and stochastic
kinematics of a scalar electron with its dynamics and a field. For
the kinematics of a particle, the D-progressive $\hat{x}(\circ,\bullet)$
on the Minkowski spacetime $(\mathbb{A}^{4}(\mathbb{V_{\mathrm{M}}^{\mathrm{4}}},g),\mathscr{B}(\mathbb{A}^{4}(\mathbb{V_{\mathrm{M}}^{\mathrm{4}}},g)),\mu)$
was defined as the extension from Nelson's (S3)-process \cite{Nelson(2001_book)}
at {\bf Definition \ref{D-progressive}} in {\bf Section \ref{Sect2}}.
It imposed the two types of the velocities $\mathcal{V}_{\pm}^{\mu}(\hat{x}(\circ,\bullet))$.
We needed to consider the probability density $p:\,\mathbb{A}^{4}(\mathbb{V_{\mathrm{M}}^{\mathrm{4}}},g)\times\mathbb{R}\rightarrow[0,\infty)$
characterized by {\bf Theorem \ref{FP eq}} and {\bf Proposition \ref{p-density}}.
The definition of the proper time $\ensuremath{d\tau}$ (\ref{eq: proper time original})
corresponding to one in classical dynamics is discussed, too. The
complex differential $\hat{d}$ (\ref{eq:2-3}) and the complex velocity
$\hat{\mathcal{V}}$ (\ref{eq: Complex velocity}) \cite{Nottale(2011)}
which are the main casts of the present model were also introduced.
In {\bf Section \ref{Sect3}}, the dynamics of a stochastic particle
was proposed. We introduced the new action integral (\ref{eq:Lagrangian-1}-\ref{eq:Lagrangian-1'})
corresponding to the form in classical dynamics. Hence, we could obtain
the dynamics of a stochastic particle and fields by the variational
calculus of this action integral. The restriction of an external field
is only Lorenz's gauge of $\partial_{\mu}A^{\mu}=0$.
\begin{conclusion}[System of a scalar electron and a field]
\begin{leftbar}\label{conclusion_1}Consider $(\mathbb{A}^{4}(\mathbb{V_{\mathrm{M}}^{\mathrm{4}}},g),\mathscr{B}(\mathbb{A}^{4}(\mathbb{V_{\mathrm{M}}^{\mathrm{4}}},g)),\mu)$
and $(\mathit{\Omega},\mathcal{F},\mathscr{P})$. A D-progressive
$\hat{x}(\circ,\bullet)\coloneqq\{\hat{x}(\tau,\omega)\in\mathbb{A}^{4}(\mathbb{V_{\mathrm{M}}^{\mathrm{4}}},g)|\tau\in\mathbb{R},\omega\in\varOmega\}$
characterized by \begin{snugshade}
\begin{equation}
\ensuremath{d\hat{x}^{\mu}(\tau,\omega)=\mathcal{V}_{\pm}^{\mu}(\hat{x}(\tau,\omega))d\tau+\lambda\times dW_{\pm}^{\mu}(\tau,\omega)}
\end{equation}
\end{snugshade}

\noindent is defined as the kinematics of a stochastic scalar electron
{\bf [Definition \ref{D-progressive}]}. The following action integral
{\bf [Definition \ref{Action integral}]}
\begin{eqnarray}
\mathfrak{S}[\hat{x},\mathcal{V},\mathcal{V}^{*},A] & = & \int_{\mathbb{R}}d\tau\int_{\mathbb{A}^{4}(\mathbb{V_{\mathrm{M}}^{\mathrm{4}}},g)}d\mathcal{\mathfrak{M}}(x,\tau)\,\frac{1}{2}\mathcal{V}_{\alpha}^{*}(x)\mathcal{V}^{\alpha}(x)\nonumber \\
 &  & -\int_{\mathbb{R}}d\tau\int_{\mathbb{A}^{4}(\mathbb{V_{\mathrm{M}}^{\mathrm{4}}},g)}d\mathcal{\mathfrak{E}}(x,\tau)\,A_{\alpha}(x)\mathrm{Re}\left\{ \mathcal{V}^{\alpha}(x)\right\} \nonumber \\
 &  & +\int_{\mathbb{A}^{4}(\mathbb{V_{\mathrm{M}}^{\mathrm{4}}},g)}d\mu(x)\,\frac{1}{4\mu_{0}c}[F_{\alpha\beta}(x)+\delta f_{\alpha\beta}(x)]\cdot[F^{\alpha\beta}(x)+\delta f^{\alpha\beta}(x)]
\end{eqnarray}
provides the following dynamics of a stochastic scalar electron {\bf [Theorem \ref{EOM of a particle}]}
and a field {\bf [Theorem \ref{Maxwell}]} characterized by $\mathscr{\mathcal{V}}\coloneqq(1-i)/2\times\mathcal{V}_{+}+(1+i)/2\times\mathcal{V}_{-}\in\mathbb{V_{\mathrm{M}}^{\mathrm{4}}}\oplus i\mathbb{V_{\mathrm{M}}^{\mathrm{4}}}$
and $F\in\mathbb{V_{\mathrm{M}}^{\mathrm{4}}}\otimes\mathbb{V_{\mathrm{M}}^{\mathrm{4}}}$:\begin{snugshade}
\begin{equation}
m_{0}\mathfrak{D_{\tau}}\mathcal{V}^{\mu}(\hat{x}(\tau,\omega))=-e\mathcal{\hat{V}}_{\nu}(\hat{x}(\tau,\omega))F^{\mu\nu}(\hat{x}(\tau,\omega))\label{eq:sug-KG eq}
\end{equation}
\begin{equation}
\partial_{\mu}\left[F^{\mu\nu}(x)+\delta f^{\mu\nu}(x)\right]=\mu_{0}\times\mathbb{E}\left\llbracket -ec\int_{\mathbb{R}}d\tau'\,\mathrm{Re}\left\{ \mathcal{V}^{\nu}(x)\right\} \delta^{4}(x-\hat{x}(\tau',\omega))\right\rrbracket \label{eq:sug-Maxwell}
\end{equation}
\end{snugshade}Here, the dynamics of (\ref{eq:sug-KG eq}) is equivalent
to the Klein-Gordon equation. These dynamics fulfill the $U(1)$-gauge
symmetry such that 
\begin{equation}
\begin{array}{ccc}
\phi'(x)=e^{-ie\varLambda(x)/\hbar}\times\phi(x)\,, &  & A'{}^{\alpha}(x)=A{}^{\alpha}(x)-\partial^{\alpha}\varLambda(x)\end{array}\,.
\end{equation}
\end{leftbar}
\end{conclusion}
We could hereby conclude {\bf Conjecture \ref{conj_kinematics}} is
demonstrated, however, how does this equation correspond to classical
behavior? It can be described by Ehrenfest's theorem and it is one
of the key idea for the investigation of radiation reaction in {\bf Volume II}.
\begin{thm}[Ehrenfest's theorem]
\begin{leftbar}\label{Ehrenfest}The expectation of (\ref{eq:sug-KG eq})
derives Ehrenfest's theorem of the Klein-Gordon equation. \end{leftbar}\end{thm}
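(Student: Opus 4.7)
The plan is to take the expectation $\mathbb{E}\llbracket\cdot\rrbracket$ of both sides of (\ref{eq:sug-KG eq}) and to identify the resulting equation as the quantum Ehrenfest relation for the Klein--Gordon particle. The strategy rests on converting the stochastic mean derivative $\mathfrak{D}_{\tau}$ into an ordinary derivative of an expectation via \textbf{Lemma \ref{Nelson_partial}}.

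First, I would apply \textbf{Lemma \ref{Nelson_partial}} with $(\alpha_{\mu},\beta^{\mu})=(\mathcal{V}_{\mu},1)$: since $\mathfrak{D}_{\tau}^{*}1=0$, the partial-integral formula (\ref{eq: Partial int formula}) collapses to
$$\frac{d}{d\tau}\mathbb{E}\left\llbracket \mathcal{V}^{\mu}(\hat{x}(\tau,\bullet))\right\rrbracket =\mathbb{E}\left\llbracket \mathfrak{D}_{\tau}\mathcal{V}^{\mu}(\hat{x}(\tau,\bullet))\right\rrbracket .$$
This is the crux: it converts the quantum/stochastic mean derivative $\mathfrak{D}_{\tau}$ defined in (\ref{eq: complex D_tau}) into a plain $\tau$-derivative acting on the expectation of the complex velocity. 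Applying $\mathbb{E}\llbracket\cdot\rrbracket$ to (\ref{eq:sug-KG eq}) and substituting this identity yields
$$m_{0}\frac{d}{d\tau}\mathbb{E}\left\llbracket \mathcal{V}^{\mu}(\hat{x}(\tau,\bullet))\right\rrbracket =-e\,\mathbb{E}\left\llbracket \mathcal{\hat{V}}_{\nu}(\hat{x}(\tau,\bullet))F^{\mu\nu}(\hat{x}(\tau,\bullet))\right\rrbracket ,$$
which has precisely the structural form of Ehrenfest's theorem: the $\tau$-rate of an averaged 4-velocity equals the averaged Lorentz 4-force.

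Next, I would unfold $\mathcal{\hat{V}}_{\nu}=\mathcal{V}_{\nu}+(i\lambda^{2}/2)\partial_{\nu}$ via (\ref{eq: OP of complex V}), splitting the right-hand side into a classical-like piece $-e\,\mathbb{E}\llbracket\mathcal{V}_{\nu}F^{\mu\nu}\rrbracket$ and a quantum correction $-(ie\lambda^{2}/2)\mathbb{E}\llbracket\partial_{\nu}F^{\mu\nu}(\hat{x})\rrbracket$. Since $\lambda=\sqrt{\hbar/m_{0}}\to 0$ as $\hbar\to 0$, the correction vanishes in the classical limit and one recovers the standard relativistic Lorentz law $m_{0}\,dv^{\mu}/d\tau=-e\,v_{\nu}F^{\mu\nu}$ for the averaged 4-velocity, completing the identification with Ehrenfest's correspondence.

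The main obstacle is technical: applying \textbf{Lemma \ref{Nelson_partial}} with $\beta^{\mu}\equiv 1$ is a degenerate case of the stated $C^{2}$-hypothesis, so I would need to either check that the derivation of (\ref{eq:partial int formula original2}) extends to constant $\beta$ (the surface terms of the form $\int_{\mathbb{A}^{4}(\mathbb{V_{\mathrm{M}}^{\mathrm{4}}},g)}d\mu(x)\,\partial^{\nu}[p(x,\tau)\cdots]=0$ vanish under the boundary condition $\partial^{\mu}p|_{x\in\partial\hat{x}(\tau,\varOmega)}=0$ already invoked in \textbf{Lemma \ref{Lorentz_inv}}), or else reprove the identity directly by differentiating $\mathbb{E}\llbracket\mathcal{V}^{\mu}(\hat{x}(\tau,\bullet))\rrbracket=\int d\mu(x)\,\mathcal{V}^{\mu}(x)p(x,\tau)$ in $\tau$ and using the continuity equation (\ref{eq: eq of continuity}) together with (\ref{eq: osmotic pressure1}). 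A secondary issue is the physical interpretation of the complex-valued equation: one should confirm that the real and imaginary parts each carry meaningful content, paralleling the decomposition into continuity and Hamilton--Jacobi-type flow already exhibited in \textbf{Theorem \ref{FP eq}}.
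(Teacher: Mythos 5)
Your proposal follows essentially the same route as the paper: both take the expectation of the equation of motion and use Nelson's partial-integration formula (\ref{eq: Partial int formula}) --- in your case in the degenerate form with a constant second factor --- to convert $\mathbb{E}\llbracket\mathfrak{D}_{\tau}\mathcal{V}^{\mu}\rrbracket$ into $\frac{d}{d\tau}\mathbb{E}\llbracket\mathcal{V}^{\mu}\rrbracket$ and so identify the averaged Lorentz-force law. The paper additionally notes that $\mathbb{E}\llbracket dW_{\pm}^{\mu}\rrbracket=0$ gives $\mathbb{E}\llbracket\mathcal{V}_{+}^{\mu}\rrbracket=\mathbb{E}\llbracket\mathcal{V}_{-}^{\mu}\rrbracket$ (so only real parts survive) and takes the trivial final step $\mathbb{E}\llbracket\mathcal{V}^{\mu}\rrbracket=\frac{d}{d\tau}\mathbb{E}\llbracket\hat{x}^{\mu}\rrbracket$ to reach the mean-position form $m_{0}\frac{d^{2}}{d\tau^{2}}\mathbb{E}\llbracket\hat{x}^{\mu}\rrbracket=\mathbb{E}\llbracket\mathrm{Re}\{f^{\mu}\}\rrbracket$, which you stop just short of writing out.
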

\begin{proof}
\noindent Due to the identity $\mathbb{E}\llbracket dW_{\pm}^{\mu}(\tau,\bullet)\rrbracket=0$,
then, $\mathbb{E}\llbracket\mathcal{V}_{+}^{\mu}(\hat{x}(\tau,\bullet))\rrbracket=\mathbb{E}\llbracket\mathcal{V}_{-}^{\mu}(\hat{x}(\tau,\bullet))\rrbracket$
is satisfied. Considering the expectation of the equation of motion
(\ref{eq:sug-KG eq}),
\begin{eqnarray}
m_{0}\frac{d}{d\tau}\mathbb{E}\left\llbracket \mathcal{V}^{\mu}(\hat{x}(\tau,\bullet))\right\rrbracket  & = & m_{0}\frac{d}{d\tau}\mathbb{E}\left\llbracket \mathrm{Re}\{\mathcal{V}^{\mu}(\hat{x}(\tau,\bullet))\}\right\rrbracket \nonumber \\
 & \stackrel{(\ref{eq: Partial int formula})}{=} & \mathrm{Re}\left\{ \mathbb{E}\left\llbracket m_{0}\mathfrak{D_{\tau}}\mathcal{V}^{\mu}(\hat{x}(\tau,\bullet))\right\rrbracket \right\} \nonumber \\
 & = & \mathbb{E}\left\llbracket \mathrm{Re}\left\{ f^{\mu}(\hat{x}(\tau,\bullet))\right\} \right\rrbracket \,.
\end{eqnarray}
Where, $f^{\mu}(\hat{x}(\tau,\omega))\coloneqq-e\mathcal{\hat{V}}_{\nu}(\hat{x}(\tau,\omega))F^{\mu\nu}(\hat{x}(\tau,\omega))\in\mathbb{V_{\mathrm{M}}^{\mathrm{4}}}\oplus i\mathbb{V_{\mathrm{M}}^{\mathrm{4}}}$.
Since $d/d\tau\mathbb{E}\left\llbracket \hat{x}^{\mu}(\tau,\bullet)\right\rrbracket =\mathbb{E}\left\llbracket \mathcal{V}^{\mu}(\hat{x}(\tau,\bullet))\right\rrbracket $,\begin{snugshade}
\begin{equation}
m_{0}\frac{d^{2}}{d\tau^{2}}\mathbb{E}\left\llbracket \hat{x}^{\mu}(\tau,\bullet)\right\rrbracket =\mathbb{E}\left\llbracket \mathrm{Re}\left\{ f^{\mu}(\hat{x}(\tau,\bullet))\right\} \right\rrbracket \label{eq:Ehrenfest}
\end{equation}
\end{snugshade}

\noindent is derived and it is Ehrenfest's theorem which is an averaged
motion of a scalar electron.
\end{proof}
Due to {\bf Theorem \ref{Ehrenfest}}, the correspondence of the velocities
between classical and quantum dynamics is\begin{snugshade}
\begin{equation}
\begin{gathered}v^{\mu}(\tau)\leftrightarrow\frac{d}{d\tau}\mathbb{E}\left\llbracket \hat{x}(\tau,\bullet)\right\rrbracket =\mathbb{E}\left\llbracket \mathrm{Re}\{\mathcal{V}(\hat{x}(\tau,\bullet))\}\right\rrbracket \end{gathered}
\,.
\end{equation}
\end{snugshade}

\noindent Furthermore, $d/d\tau(v_{\mu}v^{\mu})=2\times v_{\mu}dv^{\mu}/d\tau=0$
has to be satisfied in classical dynamics. The present dynamics of
a stochastic particle provides the similar relation, too.
\begin{eqnarray}
\frac{d}{d\tau}\mathbb{E}\left\llbracket \mathcal{V}_{\mu}^{*}(\hat{x}(\tau,\bullet))\mathcal{V}^{\mu}(\hat{x}(\tau,\bullet))\right\rrbracket  & = & \mathbb{E}\left\llbracket \begin{gathered}\mathcal{V}_{\mu}^{*}(\hat{x}(\tau,\bullet))\cdot\mathfrak{D_{\tau}}\mathcal{V}^{\mu}(\hat{x}(\tau,\bullet))\\
+\mathfrak{D_{\tau}}^{*}\mathcal{V}_{\mu}^{*}(\hat{x}(\tau,\bullet))\cdot\mathcal{V}^{\mu}(\hat{x}(\tau,\bullet))
\end{gathered}
\right\rrbracket \nonumber \\
 & = & -\frac{\lambda^{2}e}{m_{0}}\times\mathbb{E}\left\llbracket \begin{gathered}\mathrm{Im}\{\mathcal{V}_{\mu}(\hat{x}(\tau,\bullet))\}\cdot\partial_{\nu}F^{\mu\nu}(\hat{x}(\tau,\bullet))\end{gathered}
\right\rrbracket \nonumber \\
 & = & -\frac{\lambda^{4}e}{2m_{0}}\times\int_{\mathbb{A}^{4}(\mathbb{V_{\mathrm{M}}^{\mathrm{4}}},g)}d\mu(x)\,\partial_{\mu}p(x,\tau)\cdot\partial_{\nu}F^{\mu\nu}(x)\nonumber \\
 & = & \frac{\lambda^{4}e}{2m_{0}}\times\int_{\mathbb{A}^{4}(\mathbb{V_{\mathrm{M}}^{\mathrm{4}}},g)}d\mu(x)\,p(x,\tau)\cdot\partial_{\mu}\partial_{\nu}F^{\mu\nu}(x)=0
\end{eqnarray}
Where, $p(x,\tau)|_{x\in\partial\mathbb{A}^{4}(\mathbb{V_{\mathrm{M}}^{\mathrm{4}}},g)}=0$
is selected. This result supports the Lorentz invariant $\mathbb{E}\llbracket\mathcal{V}_{\mu}^{*}(\hat{x}(\tau,\bullet))\mathcal{V}^{\mu}(\hat{x}(\tau,\bullet))\rrbracket=c^{2}$
(constant), too.
\begin{lem}
\begin{leftbar}The trajectory of a stochastic scalar electron satisfies
the following relation;
\begin{equation}
\frac{d}{d\tau}\mathbb{E}\left\llbracket \mathcal{V}_{\mu}^{*}(\hat{x}(\tau,\bullet))\mathcal{V}^{\mu}(\hat{x}(\tau,\bullet))\right\rrbracket =0\,.
\end{equation}
\end{leftbar}
\end{lem}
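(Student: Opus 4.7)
The plan is to differentiate $\mathbb{E}\llbracket\mathcal{V}^*_\mu \mathcal{V}^\mu\rrbracket$ under the expectation using Nelson's partial integral formula (\textbf{Lemma \ref{Nelson_partial}}), substitute the stochastic equation of motion (\textbf{Theorem \ref{EOM of a particle}}), and exploit the antisymmetry of $F^{\mu\nu}$ together with the Bianchi-type identity $\partial_\mu\partial_\nu F^{\mu\nu}=0$.

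First I would apply the complex form (\ref{eq: Partial int formula}) of Nelson's partial integral with $\alpha=\mathcal{V}^*$ and $\beta=\mathcal{V}$, which converts the $\tau$-derivative of the expectation into
\[
\mathbb{E}\llbracket \mathfrak{D}_\tau^*\mathcal{V}^*_\mu(\hat{x}(\tau,\bullet))\cdot\mathcal{V}^\mu(\hat{x}(\tau,\bullet)) + \mathcal{V}^*_\mu(\hat{x}(\tau,\bullet))\cdot\mathfrak{D}_\tau\mathcal{V}^\mu(\hat{x}(\tau,\bullet))\rrbracket.
\]
Next, I would substitute the equation of motion (\ref{eq: EOM of S-particle-2}) for the $\mathfrak{D}_\tau\mathcal{V}^\mu$ piece and its complex conjugate for $\mathfrak{D}_\tau^*\mathcal{V}^{*\mu}$, yielding terms of the form $\mathcal{V}^*_\mu\hat{\mathcal{V}}_\nu F^{\mu\nu}$ and $\hat{\mathcal{V}}^*_\nu \mathcal{V}_\mu F^{\mu\nu}$. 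Splitting $\hat{\mathcal{V}}=\mathcal{V}+(i\lambda^2/2)\partial$ and using the antisymmetry $F^{\mu\nu}=-F^{\nu\mu}$, the bilinear-in-$\mathcal{V}$ pieces cancel between the two contributions, and only the derivative pieces survive; these collapse to an expression proportional to $\mathrm{Im}\{\mathcal{V}_\mu\}\,\partial_\nu F^{\mu\nu}$.

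At this stage I would invoke the osmotic pressure relation (\ref{eq: osmotic pressure1}) to write $\mathrm{Im}\{\mathcal{V}_\mu\}=(\lambda^2/2)\partial_\mu\ln p$, turn the expectation back into a spacetime integral against $d\mu(x)$, and integrate by parts (vanishing boundary values of $p$) to transfer the derivative onto $\partial_\nu F^{\mu\nu}$. The result is proportional to $p\cdot\partial_\mu\partial_\nu F^{\mu\nu}$, which vanishes identically from the antisymmetry of $F^{\mu\nu}$. This reproduces exactly the chain of equalities displayed just above the lemma, and completes the proof.

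The main obstacle I would anticipate is bookkeeping in the cancellation step: one must be careful that the ``$\mathcal{V}\mathcal{V}F$'' terms truly cancel (using both the antisymmetry of $F^{\mu\nu}$ and the fact that $\mathcal{V}^*_\mu\mathcal{V}_\nu+\mathcal{V}_\mu\mathcal{V}^*_\nu$ is symmetric in $\mu\nu$), leaving only the $\partial F$ contribution. A secondary delicate point is the passage from the conditional form of $\mathfrak{D}_\tau$ to the ordinary spatial derivative (\ref{eq: complex D_tau}) acting on $\mathcal{V}$, which is valid thanks to \textbf{Theorem \ref{D-progressive_2}} and justifies treating $\mathcal{V}^\mu(x)$ as a smooth function of $x$ when differentiating along the path. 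Once these two points are handled, the identity follows immediately, and it is moreover consistent with the Lorentz invariant $\mathbb{E}\llbracket\mathcal{V}^*_\mu\mathcal{V}^\mu\rrbracket=c^2$ of \textbf{Lemma \ref{Lorentz_inv}}.
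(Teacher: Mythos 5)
Your proposal is correct and takes essentially the same route as the paper: the chain of equalities displayed immediately above the lemma is exactly the argument you outline --- Nelson's partial integral (\ref{eq: Partial int formula}), substitution of the equation of motion (\ref{eq: EOM of S-particle-2}) and its conjugate, cancellation of the symmetric $\mathcal{V}^{*}_{\mu}\mathcal{V}_{\nu}+\mathcal{V}_{\mu}\mathcal{V}^{*}_{\nu}$ piece against the antisymmetry of $F^{\mu\nu}$, reduction to $\mathrm{Im}\{\mathcal{V}_{\mu}\}\,\partial_{\nu}F^{\mu\nu}$ via the $\hat{\mathcal{V}}=\mathcal{V}+(i\lambda^{2}/2)\partial$ split, the osmotic relation (\ref{eq: osmotic pressure1}), and a final integration by parts leaving $p\,\partial_{\mu}\partial_{\nu}F^{\mu\nu}=0$. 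Nothing further is needed.
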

\noindent For $\varOmega_{\tau}^{\mathrm{ave}}\coloneqq\{\omega|\hat{x}(\tau,\omega)=\mathbb{E}\llbracket\hat{x}(\tau,\bullet)\rrbracket\}\subset\varOmega$,
we will quantize the LAD equation and demonstrate and the existence
of the following formula instead the radiation formula $dW_{\mathrm{QED}}/dt=q(\chi)\times dW_{\mathrm{classical}}/dt$
by using the present model in {\bf Volume II}:\begin{snugshade}
\begin{equation}
\frac{dW_{\mathrm{Stochastic}}}{dt}(\mathbb{\mathbb{E}}\llbracket\hat{x}(\tau,\bullet)\rrbracket)=\mathscr{P}(\varOmega_{\tau}^{\mathrm{ave}})\times\frac{dW_{\mathrm{classical}}}{dt}(\mathbb{E}\llbracket\hat{x}(\tau,\bullet)\rrbracket)
\end{equation}
\end{snugshade}

\section*{\addcontentsline{toc}{section}{Acknowledgement}Acknowledgment}

This work is supported by the Extreme Light Infrastructure Nuclear
Physics (ELI-NP) Phase II, a project co-financed by the Romanian Government
and the European Union through the European Regional Development Fund
- the Competitiveness Operational Programme (1/07.07.2016, COP, ID
1334).\\
$\,$

\addcontentsline{toc}{section}{References}

\end{document}